\documentclass[11pt]{article}
\usepackage[margin=1in]{geometry}
\usepackage[utf8]{inputenc}
\usepackage[T1]{fontenc}
\usepackage{lmodern}
\usepackage{amsmath,amssymb,amsthm,bm}
\usepackage{enumitem}
\usepackage{booktabs}
\usepackage{graphicx}
\usepackage{url}
\usepackage[round]{natbib}
\usepackage[hidelinks]{hyperref}
\newcommand{\Hline}{\hline}
\newcommand{\sref}[1]{\ref{#1}}
\newcommand{\logit}{\mathrm{logit}}
\newtheorem{theorem}{Theorem}
\newtheorem{lemma}{Lemma}
\newtheorem{proposition}{Proposition}
\newtheorem{corollary}{Corollary}
\theoremstyle{definition}
\newtheorem{definition}{Definition}
\newtheorem{assumption}{Assumption}
\theoremstyle{remark}
\newtheorem{remark}{Remark}
\providecommand{\E}{\mathrm{E}}
\newcommand{\expit}{\mathrm{expit}}
\newcommand{\ip}[2]{\langle #1,#2\rangle_{w}}
\newcommand{\nrm}[1]{\lVert #1\rVert_{w}}
\newcommand{\tx}{\tilde{x}}
\newcommand{\cG}{\mathcal{G}}
\newcommand{\cX}{\mathcal{X}}
\newcommand{\cH}{\mathcal{H}}
\newcommand{\cK}{\mathcal{K}}
\newcommand{\dto}{\rightsquigarrow}

\title{One Goodness-of-Fit Test Is Not Enough: Error-Controlled Localization of Misfit in Logistic Risk Models}
\author{Ebrahim Khaled Ebrahim\thanks{Corresponding author: ebrahimkhaled@alexu.edu.eg} \and Ahmed El-Kotory \and
Osama Abd El-Aziz Hussein\\[4pt] \small Department of Statistics, Faculty of Business, Alexandria University, Alexandria, Egypt}
\date{}

\begin{document}
\maketitle

\begin{abstract}
A goodness-of-fit test tells whether a logistic risk model fits, but the analyst also needs to know what is wrong:
an overall risk that is too high is repaired by updating the intercept, risks that are too extreme by recalibration,
a curved relation between score and risk by a flexible recalibration, and misfit within groups of patients only by
revising the model. We split any misfit into these four parts, which are orthogonal and match the calibration
hierarchy. Each part gets a group of score tests that sees that part only, and the failing parts are named by closed
testing. In external validation, in large samples and for small departures, a part is wrongly named with
probability at most the chosen level, and the highest
part named points to the lightest update that removes the misfit. Checking a model on its development data never sees
an error in the overall risk or the calibration slope, and with binary or skewed covariates it can mistake misfit
within patient groups for a curved link. External validation sees all four parts, for any prediction model. In
simulations it kept its error rate when the model was right and for small departures, and usually pointed
to the right update. On intensive-care data its verdicts for one model differed in-sample, on similar patients
and after transport.

\medskip\noindent\textbf{Keywords:} Calibration, Closed testing, External validation, Goodness of fit, Logistic regression, Model updating.
\end{abstract}

\section{Introduction}
\label{s:intro}

A logistic risk model is checked twice in its life. It is first checked on the patients it was fitted to, usually
with the test of \citet{hosmerlemeshow1980} or one of its many successors \citep{hosmer1997, liu2024, ebrahim2026benchmark}. It is then
validated on new patients, for example in another hospital or a later period, with the coefficients frozen
\citep{steyerberg2010assessing, riley2021}. In both settings a goodness-of-fit test answers one question: does the
model fit? The analyst needs a second answer: if it does not, what is wrong? The remedy depends on it. If the
model predicts too many deaths overall, updating the intercept repairs it. If its risks are too extreme, a logistic
recalibration repairs it. If the relation between the model's score and the observed risk is curved, a flexible
recalibration or another link is needed. If the misfit lies within groups of patients that share a score, no
recalibration can help and the model must be revised \citep{steyerberg2004, janssen2008}.

A single test cannot give that second answer, and different tests answer differently. Every goodness-of-fit test
has power in a small set of directions only \citep{janssen2000global}, and the directions differ between the schools
of tests: partition tests that group patients by fitted risk \citep{hosmerlemeshow1980, tsiatis1980, pigeonheyse1999,
xie2008}, smoothed residual tests \citep{lecessie1991}, global residual sums \citep{osius1992}, directed score tests
\citep{stukel1988, pregibon1980goodness}, projection and empirical-process tests \citep{stute2002, escanciano2006,
liu2024}, and learned tests \citep{zhang2023}. A rejection by one of them says that the model fails somewhere in
that test's directions; a non-rejection says nothing about the others. Reading several tests side by side does not
solve the problem either, because the tests overlap and, without an error-control rule, the analyst who looks at many
tests will often name a problem that is not there.

This paper turns the second question into a multiple-testing problem with a definite answer. Section~\ref{s:parts}
splits any misfit into four orthogonal parts that match the levels of the calibration hierarchy of
\citet{vancalster2016}: calibration in the large, the calibration slope, the shape of the link between score and risk,
and the structure of the covariates. It also shows which parts can be seen at all. In-sample checking cannot see the
first two, ever, because the maximum-likelihood fit removes them; and with binary or skewed covariates it can mistake
covariate structure for a curved link. External validation sees all four. Section~\ref{s:method} builds a group of
tests for each part, combines each group by the Cauchy combination \citep{liuxie2020}, and names the parts that fail
by closed testing \citep{marcus1976closed}. The procedure controls the familywise error rate in the strong sense, it
finds every part its tests can see as the sample grows, and its highest named part points to the lightest update that
removes the misfit. Section~\ref{s:sim} reports simulations, including comparisons with single tests and with
simpler ways of naming the parts. Section~\ref{s:app} applies the procedure to the SUPPORT intensive-care data
\citep{knaus1995support}, where it gives three different verdicts for the spline model in three settings and where
testing the transported models again after the update their verdicts prescribe reveals a misfit within patient groups
that the first check missed, and to a multi-centre study of burn injuries, where the revision that repairs a model
in-sample does not make it transportable across centers.

Closed testing has been used to locate effects in genomics and functional data \citep{globaltestclosed2022,
domainfwer2022}, and calibration has been assessed by tests and confidence bands \citep{nattino2020, honestcal2022}.
To our knowledge no procedure locates the misfit of a logistic risk model with error control, or tells the analyst
which update the data call for. The theory draws on classical results, which we cite where they are used; the proofs
are in Appendix~A.

\section{Four Parts of a Misfit}
\label{s:parts}

\subsection{The decomposition}

Let $\pi(x)=\Pr(Y=1\mid X=x)$ and let $p_0(x)$ be the risks of the model being checked, with score
$\eta_0=\logit\,p_0$. Misfit is measured on the probability scale. A score test built on a function $b(x)$ reads the
misfit only through
\begin{equation}
\label{eq:read}
\E\{b(X)(Y-p_0(X))\}=\E\{b(X)(\pi(X)-p_0(X))\}=\langle b,g\rangle,\qquad g=\frac{\pi-p_0}{w_0},
\end{equation}
where $w_0=p_0(1-p_0)$ and $\langle f,g\rangle=\E\{w_0fg\}$ is the inner product weighted by the binomial variance.
We call $g$ the misfit: the difference between the true and the predicted risk, divided by its binomial
variance. Orthogonal below means uncorrelated after weighting by $p_0(1-p_0)$. With respect to this inner product, every misfit splits into four orthogonal parts:
\begin{itemize}
\item[] \textsc{intercept}: a constant, the overall risk being too high or too low;
\item[] \textsc{slope}: a multiple of the centered score, risks too extreme or too modest;
\item[] \textsc{link}: a function of the score orthogonal to the first two, a curved relation between score and risk;
\item[] \textsc{cov}: any function orthogonal to all functions of the score, misfit within groups of patients who
  share a score.
\end{itemize}
These are the levels of the calibration hierarchy of \citet{vancalster2016}: calibration in the large, weak, moderate
and strong calibration. They are also the rungs of the model-updating ladder \citep{steyerberg2004, janssen2008}:
the first part is repaired by updating the intercept, the first two by a logistic recalibration $a+b\eta_0$, the first
three by a flexible recalibration $f(\eta_0)$, and the fourth only by revising the model, since no function of the score
can repair misfit among patients who share a score (Appendix~A, Proposition~\ref{prop:update}). Locally, updating at the
highest rung that has a non-zero part removes the systematic misfit, up to estimation noise (for the link part, as
the flexibility of the recalibration grows), and updating at a lower rung leaves at least that part.

\subsection{What in-sample checking can see}

The maximum-likelihood fit of a logistic model with an intercept satisfies $\sum_i(Y_i-\hat p_i)=0$ and
$\sum_i(Y_i-\hat p_i)\hat\eta_i=0$, so on the development data the calibration intercept is $0$ and the slope $1$
exactly, whatever the truth. This identity is well known. Its consequence for testing is stronger: by the classical
argument of \citet{pierce1982}, refitting removes from every score statistic the part of the misfit that lies along
the model's own terms. In-sample checking therefore cannot see the \textsc{intercept} and \textsc{slope} parts at all, and
a test of them on the development data has power equal to its size (Theorem~\ref{thm:insample}). Only two questions are left
in-sample: is a function of the score missing (a link test, \citealp{pregibon1980goodness, stukel1988}), and is
there misfit that no function of the score explains? For a non-canonical link such as the probit the identities
fail, but the refit then removes that model's own intercept and slope directions instead.

Two further consequences matter in practice. First, a test that groups patients by fitted risk, as the
Hosmer--Lemeshow test does, sees only functions of the score; against misfit within groups of patients that is
orthogonal to every function of the score its local power equals its size, exactly in external validation (Corollary~\ref{cor:hl}). This is the reason, long seen in
simulations and real data \citep{hosmer1997, liu2024, ebrahim2026benchmark}, why these tests miss omitted interactions. Second, the in-sample link test
coincides with the external \textsc{link} part only when the average of each covariate given the score is a linear
function of the score, as it is for Gaussian main effects (Proposition~\ref{prop:alias}). With binary covariates, interactions or
skewed covariates the refit turns part of a covariate-structure misfit into a function of the score, and the
in-sample link test sees it. The in-sample \textsc{cov} test is never affected. We give a de-aliased link group that
reads the link part only, at the price of the link directions that the covariates themselves carry
(Proposition~\ref{prop:dealias}).

\section{The Localization Procedure}
\label{s:method}

\subsection{Groups of tests}

Each part gets a group of score tests whose bases lie in that part:
\textsc{intercept}, the constant; \textsc{slope}, the score; \textsc{link}, three members, $(\eta_0^2,\eta_0^3)$, the two
terms of \citet{stukel1988}, and a natural spline in $\eta_0$ with four degrees of freedom; \textsc{cov}, three members,
the squares and cubes of the continuous covariates, their pairwise products, and natural splines in each covariate.
Each basis is projected, in the weighted inner product, off the earlier parts: the link bases off $\{1,\eta_0\}$, the
covariate bases off a five-degree-of-freedom spline in $\eta_0$. A member is the usual score statistic
$Q=U^{\top}V^{-1}U$ with $U=\sum_i\tilde b(X_i)\{Y_i-p_0(X_i)\}$ and $V=\sum_iw_0\tilde b\tilde b^{\top}$, and its $\chi^2$
$p$-value $P$. Each member is the locally optimal test in its own directions (Proposition~\ref{prop:optimal}); the group spends some
of that optimality to cover several directions.

In-sample, the \textsc{intercept} and \textsc{slope} groups are dropped, the bases are evaluated at the fitted score and
also projected off the model matrix, and $Y_i-\hat p_i$ replaces $Y_i-p_0(X_i)$.

\subsection{Combination, closure and calibration}

For any set $S$ of groups, the statistic $T_S$ is the mean of $\tan\{(1/2-P)\pi\}$ over the members of the groups in
$S$, the Cauchy combination of \citet{liuxie2020}. A group is named when $T_S$ is significant for every set $S$ that
contains it; this is the closure principle of \citet{marcus1976closed}, and with four groups it needs fifteen tests.
Holm's or Bonferroni's correction applied to the groups' own tests is the closure with Bonferroni intersections; it
carries the same guarantee, is offered as an option, and is compared in Section~\ref{s:sim}.
Each $T_S$ is calibrated by its rank among reference draws of the whole procedure:
\begin{itemize}
\item[] external validation: $Y^{*}_i\sim\mathrm{Bernoulli}\{p_0(X_i)\}$, the Monte Carlo test of
  \citet{barnard1963discussion}, exact under no misfit at any sample size;
\item[] in-sample: $Y^{*}_i\sim\mathrm{Bernoulli}(\hat p_i)$ with the model refitted to each draw, the parametric bootstrap
  \citep{stute1998boot};
\item[] external validation, for theory only at present: a multiplier reference that flips the signs of the observed
  residuals, with each member studentized by its own residuals after refitting the directions it is projected off;
  it is valid for departures of any size asymptotically, but it was liberal in our samples (Section~\ref{s:sim}).
\end{itemize}
\emph{A robust option.} Squares, cubes and products of covariates give a few records with extreme covariate values
a large share of a statistic. With the robust option, every continuous covariate, every continuous model column and
the score are replaced by their normal scores, $\Phi^{-1}\{\mathrm{rank}/(n+1)\}$, before the bases are built, so each
record's contribution is bounded by a power of $\log n$. The pieces are then defined on the normal-score scale. In
external validation the Monte Carlo reference stays exact; in-sample the bootstrap still refits the model on the original
design, and the level was checked by simulation. One caution: the \textsc{slope} and \textsc{link} bases are then
separated against the normal score of $\eta$ rather than $\eta$ itself, so when the score is far from Gaussian a pure slope
error can partly appear as \textsc{link}, and the rung the robust option names can be one step too high.

Because the calibration is by simulation, the validity of the procedure does not rest on the Cauchy combination,
which is chosen for power. The procedure is implemented as \texttt{localize.external()} and \texttt{localize.gof()} in the
R package ebrahim.gof (development version, to be released on CRAN) and as a standalone script in the deposit; with
$M=999$ draws an external analysis of 1,000 patients takes under a second.

\subsection{Guarantees}

The results are asymptotic for a fixed number of covariates, under local departures of size $n^{-1/2}$, and are
stated for external validation; the proofs are in Appendix~A. In-sample, items 1 and 2 hold for \textsc{cov}, and for
\textsc{link} they hold for the de-aliased group or when the average of the covariates given the score is linear
(Section~2.2).
\begin{enumerate}
\item \emph{The parts are read separately} (Theorem~\ref{thm:law}). Each member's noncentrality depends only on its own part of
the misfit, and the groups are asymptotically independent.
\item \emph{Strong familywise error control} (Theorem~\ref{thm:fwer}). Whatever the other parts are, the probability of naming
any part whose misfit is zero is at most the level $\alpha$.
\item \emph{Consistency} (Theorem~\ref{thm:consist}). A part that is present along its group's bases is named with
probability tending to one.
\item \emph{The decision rule} (Corollary~\ref{cor:rung} and Proposition~\ref{prop:update}). The highest part named equals the highest part
present with probability at least $1-\alpha$ as the misfit grows, and updating at that rung removes the systematic misfit,
up to estimation noise.
\item \emph{Any prediction model} (Corollary~\ref{cor:anymodel}). In external validation the predictions may come from any model fitted
elsewhere, including machine-learning models, and the covariates may be any recorded in the validation sample.
\item \emph{Departures of any size} (Theorem~\ref{thm:multiplier}). With the multiplier reference, the error control holds for departures
of fixed size, stated on the probability scale; this is asymptotic, and at $n\le4{,}000$ the multiplier reference was
liberal (Section~\ref{s:sim}).
\end{enumerate}
The local qualification in 2--4 matters. A large departure that is pure on the logit scale is not pure on the
probability scale: a strong slope error also bends the probability curve, so naming \textsc{link} as well is then a
correct detection on the probability scale. The variance also changes away from the null, which inflated the error by
up to about 13 points in our fixed-size departures (Proposition~\ref{prop:beyond}). When predicted risks below 0.5 are
too high, as after the transport studied in Section~\ref{s:app}, the effect runs the other way: the model variance
overstates the true one, and the Monte Carlo tests of the other parts become conservative and lose power.

\subsection{Reading a verdict}

\begin{table}
\caption{Reading a verdict: act on the highest part named. In-sample only \textsc{link} and \textsc{cov} can be named (Section~\ref{s:parts}).}
\label{t:decision}
\begin{center}\footnotesize
\begin{tabular}{p{1.5cm}p{3.6cm}p{4.2cm}p{4.6cm}}
\Hline
Highest part named & What it means & Typical cause & Update \\ \hline
none & no misfit detected &  & keep the model; the verdict is limited by the sample size (Corollary~\sref{cor:power}) \\
\textsc{intercept} & overall risk too high or too low & different prevalence, case mix or period & update the intercept \\
\textsc{slope} & risks too extreme or too modest & overfitting at development; a different spread of risk & logistic recalibration $a+b\eta_0$; shrink future models \\
\textsc{link} & curved relation between score and risk & asymmetric tails; a transformation missing from the score & flexible recalibration $f(\eta_0)$ or another link \\
\textsc{cov} & misfit among patients who share a score & omitted interaction, curved covariate, an effect that differs in the new population & revise the model; recalibration cannot repair it \\
\hline
\end{tabular}
\end{center}
\end{table}

Table~\ref{t:decision} turns the verdict into an action. The rule is to act on the highest part named. When
\textsc{cov} is named, the model must be revised, even if other parts are named too, and the remaining parts are
assessed again after the revision. In-sample, a \textsc{link} verdict without \textsc{cov} says that adding a function of
the score to the model, or changing the link, is consistent with the data; it is not proof that no revision is
needed. When the miscalibration is gross, as it often is after transport to another population, the Monte Carlo verdict
may miss \textsc{cov}, and its named rung may be too high, because a large logit-scale error also bends the probability
curve. We then recommend a split-first update-and-retest: split the validation data at random into halves A and B,
reach the verdict on A, update the model on A at the rung it names (preferring the lighter update when
\textsc{intercept} and \textsc{slope} are also named), and test the updated model on B. On B the updated model is a frozen
predictor, so the Monte Carlo reference is exact under its null; the update's own estimation error lies in the
directions it fitted (Proposition~\ref{prop:update}), so it cannot create a \textsc{cov} verdict, and the parts above the
update rung are tested with approximately the local guarantees. Parts at or below the rung carry that estimation error,
a genuine misfit of the updated model, and may be named on B. Before a
validation study, Corollary~\ref{cor:power} gives the smallest miscalibration each group can detect, from the
covariates and the frozen predictions alone; for example, a mean miscalibration $\delta$ is detected with power $0.80$
at level $0.05$ when $n\ge7.85\,\E(w_0)/\delta^2$.

\section{Simulation Studies}
\label{s:sim}

\subsection{Design}

Five independent standard normal covariates and the score $\eta_0=-0.5+0.5x_1+0.8x_2+0.6x_3+0.4x_4+0.3x_5$; the true
model is $\logit\,\pi=\eta_0+d(x)$ with the departures $d$ of Figure~\ref{f:sim}: an intercept shift; a slope error;
a curved link, $\eta_0^2-1$; a U-shaped covariate effect, $x_1^2-1$; a threshold, $\max(x_2-1,0)$; and an omitted
interaction, $x_1x_3$. In external validation the frozen model is $\eta_0$; in-sample the analyst fits the linear model
in $x_1,\dots,x_5$. Each cell has 300 data sets (1,000 under the null) of size 500 or 1,000; $M=999$ Monte Carlo draws
or $B=199$ bootstrap draws; $\alpha=0.05$. We report how often each group is named and the \emph{right-action rate}, the
proportion of data sets whose highest named group is the rung the departure needs.

\subsection{Results}

\begin{figure}
\centerline{\includegraphics[width=\textwidth]{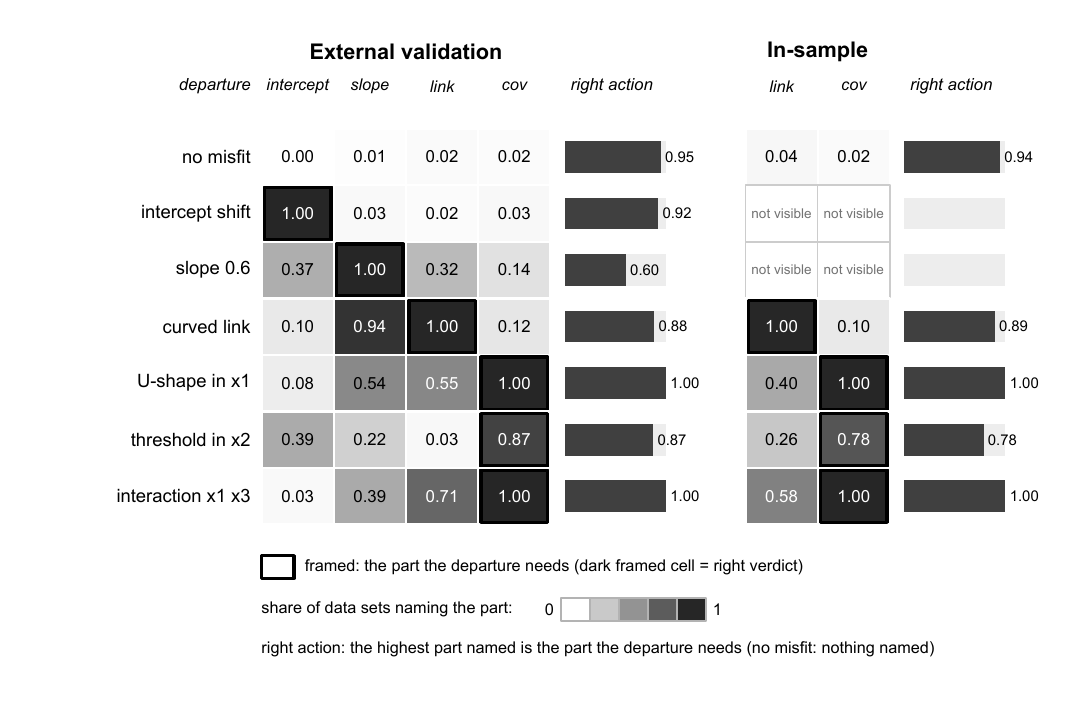}}
\caption{Simulation, $n=1{,}000$: share of data sets in which each part was named (shade and number), with the part each
departure needs framed, and the right-action rate (bars: the highest part named is the part the departure needs; under
no misfit, nothing named). External validation: Monte Carlo reference, $M=999$; in-sample: parametric bootstrap,
$B=199$. In-sample only \textsc{link} and \textsc{cov} can be seen, so the intercept and slope departures were run
externally only. 300 data sets per row, 1,000 under no misfit.}
\label{f:sim}
\end{figure}

Under no misfit the familywise error was $0.037$ and $0.051$ externally and $0.049$ and $0.055$ in-sample at $n=500$ and
$1{,}000$. Figure~\ref{f:sim} shows the results at $n=1{,}000$. Covariate-structure departures were located correctly:
\textsc{cov} was named in 87--100\% of data sets externally and 78--100\% in-sample, and the right action was chosen in all
of them that were detected. The
curved link was named in all data sets, and the right action was chosen in 88--89\%; the remainder also named
\textsc{cov}. The intercept shift was named in all data sets and the right action chosen in 92\%. The slope error is the
instructive case: \textsc{slope} was named in every data set, but \textsc{link} was named too in a third of them and
\textsc{cov} in 14\%. The first is largely a real effect: a slope error of this size on the logit scale bends the
probability curve, which gives the link part a non-zero probability-scale component. The second is not: \textsc{cov} has
no such component here, and its excess over the nominal level is the variance effect of
Proposition~\ref{prop:beyond}.

\emph{Comparisons.} On the same data sets, the procedure's global test (the intersection of all groups) detected every
departure within 0.01 of the best single test, and far more often than the partition tests and the calibration belt
when the misfit lay within patient groups (Table~\ref{tab:compare}; a part is named in marginally fewer data sets
than the global test rejects, for example 0.957 against 0.963 for the external threshold). Externally it detected the U-shape, threshold and interaction in 96--100\% of
data sets, against 46--58\% for the Hosmer--Lemeshow test and 53--77\% for the calibration belt; in-sample it detected
them in 83--100\%, against 16--28\% for the Hosmer--Lemeshow test. In-sample, the test of le Cessie and van Houwelingen matched it
against the U-shape and the interaction but not against the threshold (0.57 against 0.83) or the curved link (0.84
against 1.00), and it cannot say where the misfit lies. Reading the groups' own tests without error control named a
part wrongly in 16--18\% of data sets externally and 10--11\% in-sample when the model was right (Web
Table~\ref{tab:multiplicity}), although its right-action rate under misfit was no lower than the closure's. Holm's or Bonferroni's correction across the groups' own tests controlled the error as well
as the closed procedure and chose the right action slightly more often (0.85--0.86 against 0.83 externally at
$n=1{,}000$); in these designs the Cauchy closure showed no advantage over them, and it may gain only when a weak
part needs several tests combined. At $n=200$ and 300 the error
stayed at 0.038--0.053 and the power fell as expected (Table~\ref{tab:small}).

\begin{figure}
\centerline{\includegraphics[width=\textwidth]{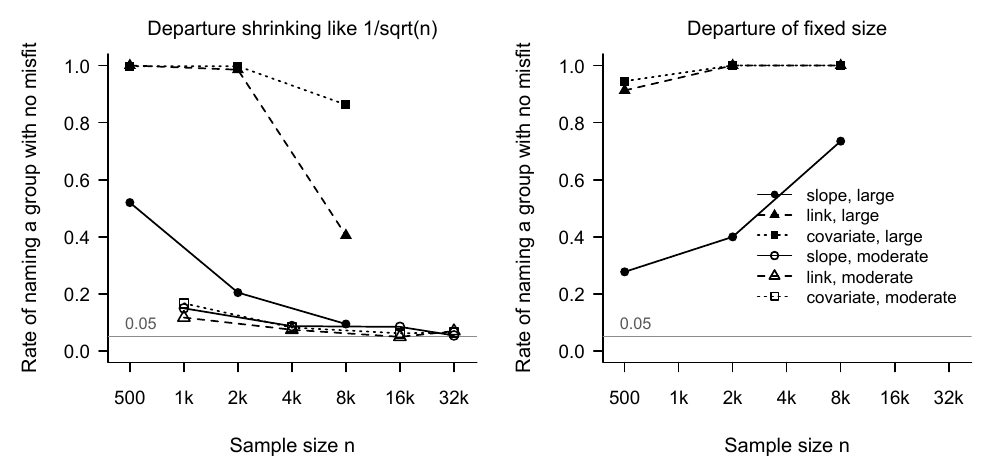}}
\caption{Rate of naming a group whose part of the misfit is zero, external validation, Monte Carlo reference. Left:
departures that shrink like $1/\sqrt n$, the setting of the error-control theorem; large departures (filled symbols) and
moderate ones followed to $n=32{,}000$ (open symbols). Right: the large departures held at a fixed size. The grey line is
the nominal level $0.05$. Each point is 400 data sets (200 at $n=32{,}000$).}
\label{f:local}
\end{figure}

Two further studies check the theory directly (Figure~\ref{f:local}; Appendix~A, Section~\ref{sec:numerics}). When
departures shrink like $n^{-1/2}$,
the rate of naming a group with no misfit fell toward the nominal level, from $0.12$--$0.17$ at $n=1{,}000$ to $0.065$
(standard error $0.006$) at $n=16{,}000$ and $32{,}000$, while the right group was still found in 70--100\% of data sets.
When the departures stayed at a fixed size, the same rate rose with $n$, as Proposition~\ref{prop:beyond} predicts.
Computing each group's true probability-scale part from the true risks showed that most of these extra namings were
real detections, and that the rest, up to about 13 points above the member's own power, was the variance effect.

The multiplier reference was checked on departures of fixed size that are pure on the probability scale, and on a
gross miscalibration of the kind met after transport, $\logit\,\pi=-2+0.8\eta_0$ (about 11\% observed against 40\%
predicted), which has no covariate-structure part (Table~\ref{tab:robust}). Under no misfit its error was $0.090$ at
$n=1{,}000$ and $0.070$ at $n=4{,}000$ (standard errors $0.014$ and $0.013$), against $0.045$ and $0.055$ for the Monte Carlo
reference, and under the gross miscalibration it named \textsc{cov} wrongly in $0.120$ and $0.085$ of data sets, against
$0.003$ and $0.000$ for the Monte Carlo reference. The multiplier reference is therefore liberal at these sample sizes,
exactly where it was meant to help, while the Monte Carlo reference is conservative there. We use the Monte Carlo
reference throughout, and the split-first update-and-retest of Section~\ref{s:method} when the miscalibration is gross.

\emph{Corrupted records.} When a covariate was multiplied by 4 or 8 in 3 of 500 or 10 of 1,000 records of a correctly
specified model, the default procedure named a part in 46--65\% of data sets, against 2--7\% on clean data; with the
robust option the rate fell to 22--42\% (Table~\ref{tab:robustopt}). On the main departures the robust option cost
almost nothing: its error was 0.046--0.047, its detection matched the default in every cell but the in-sample
threshold (0.79 against 0.83; right action 0.72 against 0.78), and its right-action rate was within 0.02 elsewhere.
What remains of the false verdicts likely comes from the fit: the corrupted records also pull the fitted model, which a
robust option for the tests cannot undo.

The de-aliased link group was studied in-sample with a skewed and a binary covariate, where the average covariate
given the score is curved (Appendix~B). Its error was at the nominal level, and it reduced the rate at which a
covariate-structure misfit was named \textsc{link} from $0.055$ to $0.030$ (standard errors $0.016$ and $0.012$, 200 data
sets). The price predicted by Proposition~\ref{prop:dealias} was large in this design: the power of \textsc{link} against a
curved link fell from $0.66$ to $0.010$, because the curvature in the score is almost entirely carried by the covariates, so that in-sample the
two explanations cannot be told apart. We therefore keep the default and use the de-aliased group as a sensitivity
check: a \textsc{link} verdict that survives it is a link verdict, and one that disappears may be covariate structure.

\section{Applications}
\label{s:app}

\subsection{Mortality in intensive care}
\label{s:support}

\begin{figure}
\centerline{\includegraphics[width=\textwidth]{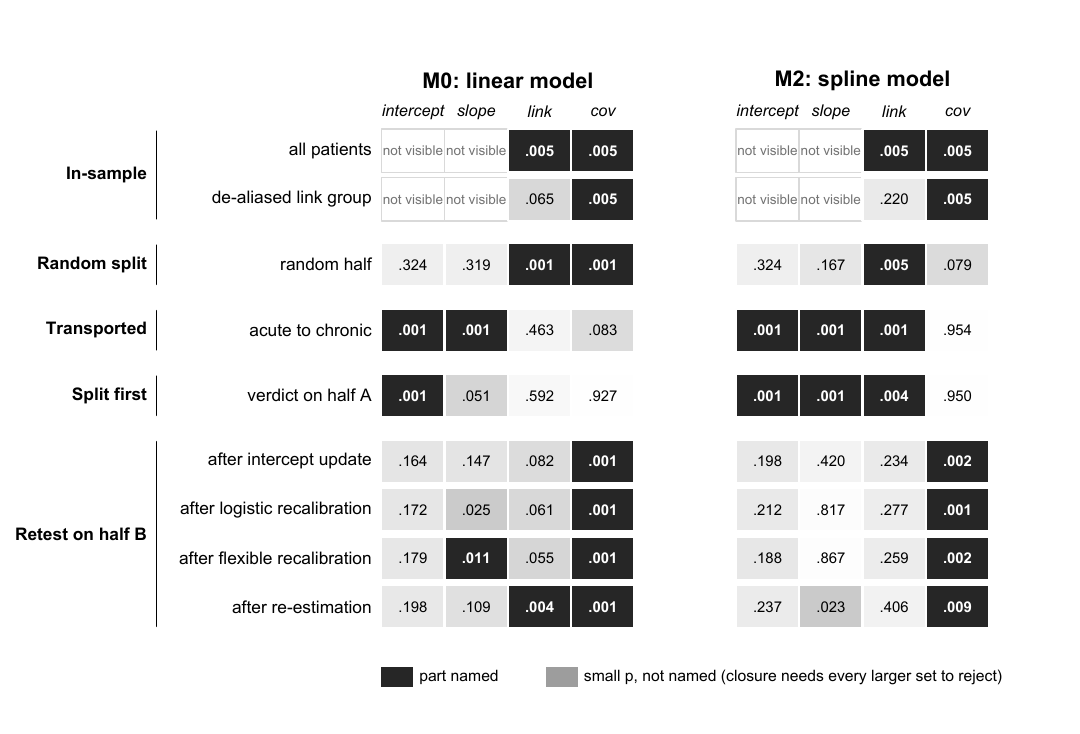}}
\caption{SUPPORT verdicts for the linear model M0 and the spline model M2: the $p$-value of each part's own test, with
the parts named ($\alpha=0.05$) in black. In-sample ($n=8{,}873$, bootstrap reference), on a random half after
development on the other half ($n=4{,}437$), transported from the acute to the chronic diagnoses ($n=4{,}130$), and the
split-first update-and-retest: verdict on a random half A of the chronic patients, updates fitted on A, retests on the
other half B ($n=2{,}065$). Monte Carlo reference for every external analysis. A part can have $p\le0.05$ and not be
named, because naming also needs every larger set of parts to reject.}
\label{f:pvalues}
\end{figure}

SUPPORT followed seriously ill hospitalized adults in five American medical centers \citep{knaus1995support}; we
use the 8,873 patients with complete data and in-hospital death (26\%) as outcome \citep{support2data}. Model M0 is
linear in age, mean arterial pressure, heart rate, respiratory rate, temperature, creatinine, sodium, white cell
count, number of comorbidities and sex; model M2 replaces the seven physiological variables by natural splines with
three degrees of freedom. We check each model in three settings: in-sample on all patients; externally on a random
half after development on the other half; and transported, developed on the acute diagnoses (acute respiratory
failure or multiple organ failure, coma) and validated on the chronic ones (heart failure, chronic lung disease,
cirrhosis, cancer). Figure~\ref{f:pvalues} shows the verdicts.

\emph{In-sample.} Both models fail, and \textsc{link} and \textsc{cov} are both named. SUPPORT has a binary covariate
(sex) and skewed laboratory values, so part of a covariate-structure misfit can appear as a curved link in-sample
(Section~\ref{s:parts}). With the de-aliased link group \textsc{link} is no longer named for either model ($p=0.065$ for
M0, $0.220$ for M2). Because de-aliasing can remove most of the power against a real link misfit (Section~\ref{s:sim}),
this is consistent with either explanation; external validation on similar patients, below, decides it.

\emph{Random split.} On similar patients the verdicts are the in-sample ones made precise. M0 needs revision
(\textsc{link} and \textsc{cov} named, $p\le0.001$). For M2 only \textsc{link} is named ($p=0.005$; \textsc{cov} $p=0.079$): the
splines removed most of the misfit within patient groups and left a curved relation between score and risk, which a
flexible recalibration repairs; this also points to a real link part in the in-sample verdict for M2. The overall risk is right (26.0\% observed, 26.6\% predicted) and neither \textsc{intercept}
nor \textsc{slope} is named.

\emph{Transported.} On the chronic diagnoses the models predict 40.0\% (M0) and 37.7\% (M2) mortality against 10.4\%
observed. The Monte Carlo reference names \textsc{intercept} and \textsc{slope} for M0, and \textsc{link} as well for M2.
This miscalibration is far outside the local regime, and the model variance, computed from risks near 0.40,
overstates the true variance of outcomes with risks near 0.10, so the covariate group is conservative and loses
power (Section~\ref{s:sim}). We therefore applied the split-first update-and-retest (Figure~\ref{f:pvalues}, lower rows).
On a random half A of the chronic patients the verdict was \textsc{intercept} for M0 (\textsc{slope} $p=0.051$) and
\textsc{intercept}, \textsc{slope} and \textsc{link} for M2. After the update each verdict prescribes, fitted on A, the
test on half B named \textsc{cov} for both models ($p=0.001$ and $0.002$), and \textsc{cov} was named on B after every
recalibration. A recalibration estimates only a function of the score, so its own estimation error lies in the
functions of the score, which \textsc{cov} cannot read: the \textsc{cov} verdicts on B are misfit of the transported
models, which the first check had missed. Re-estimating the same model on A also left \textsc{cov} named ($p=0.001$ for
M0, $0.009$ for M2), but there the verdict cannot be read as a defect of the model's form, because a model re-estimated
from about 225 deaths carries estimation error in the directions of its covariates, which \textsc{cov} can read. The
other parts named on B after an update (\textsc{slope} after the flexible recalibration of M0, \textsc{link} after its
re-estimation) are of the same kind: genuine misfit of the updated model that comes from its own estimation error. Figure~\ref{f:compass} shows the verdicts for M0 as the R function draws them: before the update only
\textsc{intercept} is named, and \textsc{slope} is just missed ($p=0.051$); after it, every set of parts that contains
\textsc{cov} rejects, so \textsc{cov} is named. A calibration plot cannot show this: after every update the calibration
curve on half B lies close to the diagonal (Figure~\ref{fig:calib}).

\providecommand{\compassfile}{fig_compass_plot}
\begin{figure}
\centerline{\includegraphics[width=\textwidth]{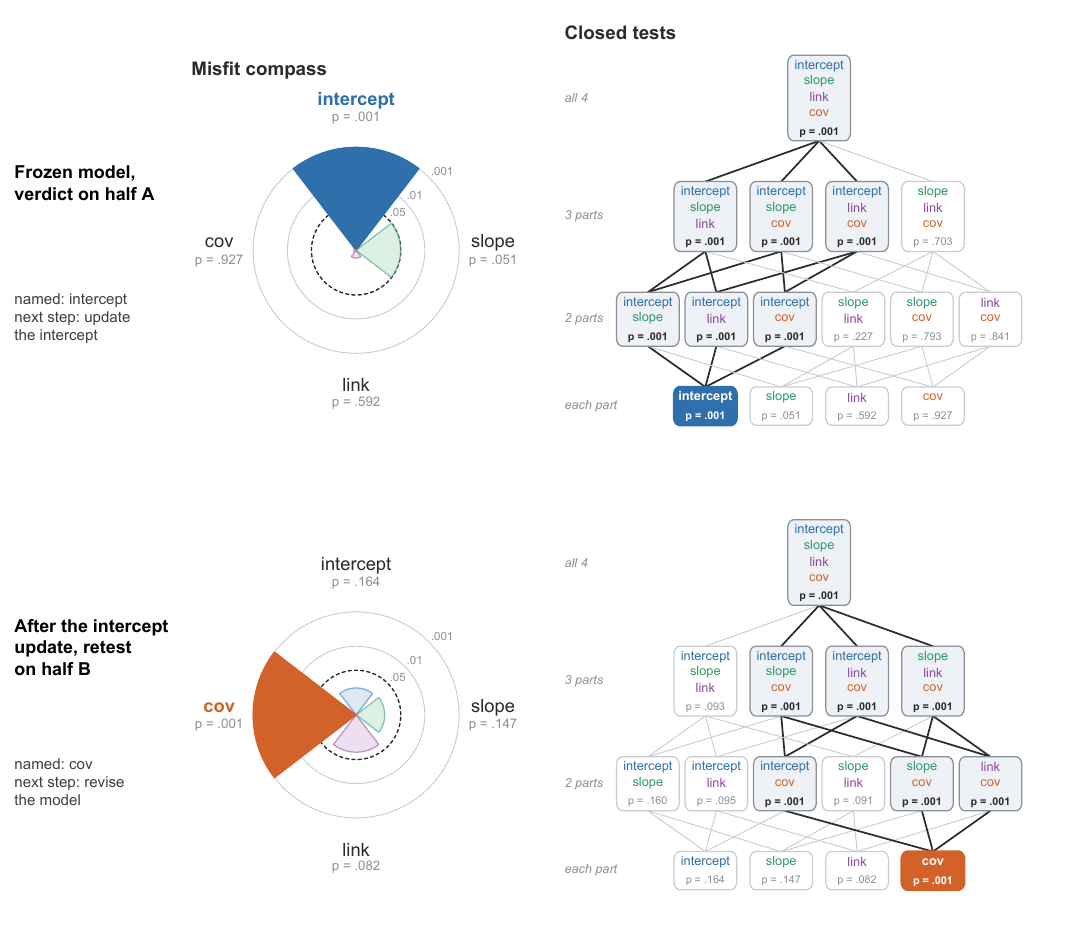}}
\caption{The verdict display of the R function, for the transported SUPPORT model M0: the frozen model on half A of
the chronic patients (top) and the model after the intercept update, retested on half B (bottom). Left, the misfit
compass: one wedge per part, whose length is $-\log_{10}p$ of the part's own test (capped at $p=0.001$); the dashed
ring is $p=0.05$, and the parts named are filled. Right, the closed tests: every set of parts with the $p$-value of its
test, shaded when it rejects; a part is named when every set that contains it rejects, and the links from a named part
up to those sets are drawn dark. Monte Carlo reference, $M=999$; $n=2{,}065$ in each half.}
\label{f:compass}
\end{figure}

\emph{Any prediction model.} A neural network with five hidden units, developed on the same data, was named
\textsc{slope} and \textsc{cov} on the random split (both $p\le0.001$), and \textsc{intercept} and \textsc{slope} when
transported (both $p\le0.001$; \textsc{link} $p=0.114$, \textsc{cov} $p=0.909$). Before any outcome is seen, Corollary~\ref{cor:power} gives the smallest miscalibration
each analysis can detect with power 0.80: a mean error of about two percentage points and a logit-scale slope error
of 0.14--0.31 (Table~\ref{tab:plan}); the transported miscalibration is about fourteen times the first.

\subsection{Mortality after burn injury}
\label{s:burn}

The second study is a genuine multi-centre validation at a clinical sample size: 1,000 patients treated for burns
in 40 facilities, 150 of whom died \citep[the burn data of][]{hosmer2013}. Model B0 has main effects for age, total
burn surface area, sex, race, inhalation injury and flame injury (Table~\ref{tab:burn}).

\emph{In-sample.} B0 is named \textsc{cov} ($p=0.005$); \textsc{link} is not ($p=0.075$), and stays unnamed with the
de-aliased group, so the verdict is a misfit within patient groups. Following it, a spline in age and the
interaction of age with inhalation injury improved the fit but left \textsc{cov} named; adding a spline in burn
surface area removed it (nothing named, $p=0.51$ and $0.34$; AIC from 350.5 to 333.2). Because this revision was
chosen by looking at the same data, its clean verdict is optimistic.

\emph{Across centers.} We developed each model on a random half of the facilities and validated it, frozen, on the
patients of the other half, about 500 patients and 74 deaths per validation set on average, over 20 random facility splits.
For B0, \textsc{cov} was named in 70\% of splits and nothing in 30\%: at this size one split can miss a real misfit,
as Corollary~\ref{cor:power} warns, and a verdict of ``nothing named'' is not proof of fit. For the revised model B3,
\textsc{cov} was still named in 75\% of splits. Part of this is the development half's own estimation error, which
\textsc{cov} can read. To measure it we simulated outcomes from each model fitted to all 1,000 patients, so that its
form is right by construction, and put 25 simulated data sets through the same 20 splits (Web
Table~\ref{tab:burn}). \textsc{cov} was then named in 17\% of splits for B0 (standard deviation 12\% across data
sets, largest 45\%) and 45\% for B3 (13\%, largest 70\%); none of the 25 data sets reached the observed rate for
either model. For B0 the observed 70\% is far above what estimation error gives, so the cross-centre misfit is real:
the covariate effects differ between centers or the model's form is wrong where the other half of the facilities
puts its patients. For B3 much of the rate is estimation error, because splines and an interaction estimated from
about 75 deaths per development half carry much of it; the observed rate still exceeds every simulated one, so some
cross-centre misfit remains, but the evidence for it is modest. In-sample fit
and transportability are different questions, and the procedure keeps them apart.

\section{Discussion}
\label{s:discuss}

A goodness-of-fit test tells the analyst that a risk model fails; the procedure tells where, with a guarantee on the
error, and the place determines the update. Three findings are worth carrying beyond this paper. The development
data cannot reveal an error in the overall risk or in the calibration slope, however large; an analyst who wants to
know whether a model's risks are too extreme must hold out data. Partition tests that group by fitted risk cannot see
misfit that is orthogonal to every function of the risk; a model that passes the Hosmer--Lemeshow test may still miss
an interaction.
And the same model can need different updates in different populations, as the transported SUPPORT model showed.

The battery is not the only possible one. Any test whose statistic depends on the data through the residuals can be
a member once its basis is projected into the part it should read; the partition tests read functions of the score and belong to \textsc{link} externally, and covariate-grouped tests such as those of \citet{tsiatis1980} and
\citet{xie2008} belong to \textsc{cov} once projected. Tests that read all directions at once, such as the projection
test of \citet{escanciano2006} and \citet{liu2024} or the learned test of \citet{zhang2023}, belong to no single part;
they can serve as a global detection step, but their cost is high: the projection test took nearly ten hours on the
SUPPORT data (Appendix~B), against about a minute for the whole in-sample procedure and a second for an
external validation here. A neural test that maps a model's failure over pairs of covariates \citep{deepgof2026} can
follow a \textsc{cov} verdict and show which covariates carry it. Like every score test, the default procedure reads a few records with extreme covariate
values as structure. The robust option, with bases on normal scores, cuts the resulting false verdicts by a third
to a half at
almost no cost in power (Section~\ref{s:sim}), and we recommend it whenever the covariates have not been checked for
gross errors; removing the rest needs a robust fit of the model as well, a topic for further work.

The theory is for a fixed number of covariates and the maximum-likelihood fit. Lemma~\ref{lem:erase} and the Monte Carlo
reference hold in any dimension, but when the number of covariates is not small relative to $n$ the maximum-likelihood
estimate is biased and its variance underestimated \citep{sur2019}, so the in-sample bootstrap is no longer justified,
and penalized fits do not satisfy the identities that remove the intercept and slope parts; for ridge fits even the
Hosmer--Lemeshow test loses its level \citep{ebrahim2026shrinkage}. Directed tests for
high-dimensional models \citep{jankova2020goodness} suggest one route.

\section*{Acknowledgements}

The authors used a large language model to improve the language of the manuscript and to assist with the
implementation of the simulation code and with the derivations; the authors checked every result, reviewed and edited
all content, and take full responsibility for it. The authors received no specific funding for this work and declare
no conflicts of interest.

\section*{Data Availability Statement}

The SUPPORT data are public \citep{support2data}, and so are the burn data \citep{hosmer2013}, distributed with the R
package aplore3. The R function implementing the procedure, the simulation code, the
per-data-set results behind every table, and the scripts that regenerate them are archived at Zenodo
(\url{https://doi.org/10.5281/zenodo.23192535}) and supplied with this paper.

\bibliographystyle{plainnat}
\bibliography{references}

\section*{Supplementary Material}

Appendix~A (theory and proofs, with numerical checks of the theory) and Appendix~B (simulation and application details)
follow the references.

\clearpage
\appendix
\part*{Appendix A: Theory of the localization procedure}
\setcounter{section}{0}\setcounter{equation}{0}\setcounter{table}{0}
\renewcommand{\thesection}{A.\arabic{section}}\renewcommand{\theequation}{A.\arabic{equation}}\renewcommand{\thetable}{A.\arabic{table}}

\noindent This appendix gives the theory behind Section~2 of the paper. Several ingredients are standard and are
stated so that the argument is self-contained: the score identity of the logistic fit (Lemma~\ref{lem:erase}); the
effect of estimated parameters on a score statistic (Pierce, 1982), which is Theorem~\ref{thm:insample}(a); the
pseudo-true first-order condition (White, 1982), which is Theorem~\ref{thm:insample}(c); the closure principle (Marcus
et al., 1976); and Monte Carlo and parametric-bootstrap calibration (Dufour, 2006; Stute et al., 1998). The
observation behind Corollary~\ref{cor:hl}, that statistics indexed by the fitted score cannot see departures orthogonal
to it, also motivates the projection tests of Stute and Zhu (2002) and Escanciano (2006). Further standard
consequences are stated for use: external score tests apply to predictions from any model (Corollary~\ref{cor:anymodel}),
the usual power calculation for observed-versus-expected tests (Corollary~\ref{cor:power}), the local optimality of
score tests (Wald, 1943; Proposition~\ref{prop:optimal}), and multiplier (wild-bootstrap) calibration (Wu, 1986;
Mammen, 1993), used in Theorem~\ref{thm:multiplier}. What is new is the combination: the four-piece decomposition
of a misfit and its link to the calibration hierarchy (Definition~\ref{def:pieces}); the exact description of what
in-sample checking can and cannot separate, including when the link group is aliased with covariate structure
(Theorem~\ref{thm:insample}, Proposition~\ref{prop:alias}) and how to remove the aliasing (Proposition~\ref{prop:dealias});
the law of the group statistics (Theorem~\ref{thm:law}); the guarantees of the closed localization procedure
(Theorems~\ref{thm:fwer} and~\ref{thm:consist}, Corollary~\ref{cor:rung}) and their extension to departures of any size
(Theorem~\ref{thm:multiplier}); the link between a verdict and the model update that repairs it
(Proposition~\ref{prop:update}); and a covariate group that becomes exact in large samples (Proposition~\ref{prop:sieve}).
Proposition~\ref{prop:beyond} describes what happens outside the local regime. All results are for a fixed number of
covariates and the unpenalized logistic maximum-likelihood fit; Remark~\ref{rem:scope} gives the scope.

\section{Setting}

\paragraph{Data and model.} The data are independent copies $(X_i,Y_i)$, $i=1,\dots,n$, of $(X,Y)$ with
$X\in\mathbb{R}^d$ and $Y\in\{0,1\}$; write $\tx=(1,x^{\top})^{\top}$ and $\pi(x)=\Pr(Y=1\mid X=x)$. The working model is
logistic, $\mathrm{logit}\,p_\beta(x)=\tx^{\top}\beta$. Two settings are studied.
\begin{itemize}[leftmargin=*,itemsep=1pt]
\item \emph{External validation.} A model fitted elsewhere gives a fixed linear predictor $\eta_0(x)=\tx^{\top}\beta_0$
  and risks $p_0=\expit(\eta_0)$; nothing is re-estimated on the validation data.
\item \emph{In-sample checking.} $\hat\beta$ is the maximum-likelihood estimate from the same data, $\hat p_i =
  \expit(\tx_i^{\top}\hat\beta)$ and $\hat\eta_i=\tx_i^{\top}\hat\beta$; $\beta_0$ denotes its limit.
\end{itemize}

\paragraph{Inner product.} With $w_0=p_0(1-p_0)$, let $L^2(w)$ be the functions $f$ of $x$ with
$\nrm{f}^2=\E\{w_0(X)f(X)^2\}<\infty$ and $\ip{f}{g}=\E\{w_0(X)f(X)g(X)\}$. For a closed subspace $\cH$, $\Pi_{\cH}$ is
the orthogonal projection onto $\cH$ in $L^2(w)$ and $\cH^{\perp}$ its orthogonal complement.

\paragraph{The probability residual.} Misfit is measured on the probability scale. For a basis function $b$ define
\begin{equation}\label{eq:g}
\ip{b}{g}:=\E\{b(X)(\pi(X)-p_0(X))\}=\E\{b(X)(Y-p_0(X))\},\qquad g=\frac{\pi-p_0}{w_0},
\end{equation}
the second form being meaningful whenever $\E\{(\pi-p_0)^2/w_0\}<\infty$. A score statistic built on $b$ reads the
misfit only through $\ip{b}{g}$. Under the \emph{local alternatives}
\begin{equation}\label{eq:local}
\mathrm{logit}\,\pi_n(x)=\eta_0(x)+n^{-1/2}h(x),
\end{equation}
Taylor's theorem with $|\expit''|\le 2c_2$, $c_2=1/(12\sqrt3)$, and with $|\expit''(\eta+t)|\le w(\eta+t)\le e^{|t|}w(\eta)$ gives
\begin{equation}\label{eq:rem}
|\pi_n-p_0-n^{-1/2}w_0h|\le n^{-1}h^2\min\{c_2,\ \tfrac12e^{|h|/\sqrt n}w_0\} .
\end{equation}
The first bound is used in all sums below (it needs only polynomial moments); the second shows pointwise that
$g_n=n^{-1/2}h+O(n^{-1}h^2e^{|h|/\sqrt n})$: $h$ is the probability residual to first order, and the two scales
agree locally. They differ for large departures (Proposition~\ref{prop:beyond}).

\section{The four parts of a misfit}

\begin{definition}[Pieces]\label{def:pieces}
Let $\cG\subset L^2(w)$ be the closed subspace of functions of $\eta_0(x)$ alone, $\cX=\mathrm{span}\{1,x_1,\dots,x_d\}$,
and $\bar\eta=\E(w_0\eta_0)/\E(w_0)$.
\begin{center}
\begin{tabular}{llll}
\toprule
piece & space & meaning & level in Van Calster et al.\ (2016)\\
\midrule
$\cH_{\rm I}$ & $\mathrm{span}\{1\}$ & overall risk too high or low & calibration in the large\\
$\cH_{\rm S}$ & $\mathrm{span}\{\eta_0-\bar\eta\}$ & risks too extreme or too modest & weak (slope)\\
$\cH_{\rm L}$ & $\cG\ominus\mathrm{span}\{1,\eta_0\}$ & curved score-to-risk map (link shape) & moderate\\
$\cH_{\rm C}$ & $\cG^{\perp}$ & misfit within covariate patterns & strong\\
\bottomrule
\end{tabular}
\end{center}
The four pieces are mutually orthogonal and $L^2(w)=\cH_{\rm I}\oplus\cH_{\rm S}\oplus\cH_{\rm L}\oplus\cH_{\rm C}$, the
orthogonal sum along the nested spaces $\mathrm{span}\{1\}\subset\mathrm{span}\{1,\eta_0\}\subset\cG\subset L^2(w)$.
Write $h=h_{\rm I}+h_{\rm S}+h_{\rm L}+h_{\rm C}$ for the components of $h$, and similarly for $g$.
\end{definition}

\noindent In words, $h_{\rm I}+h_{\rm S}$ is the part a logistic recalibration $\mathrm{logit}\,\pi=a+b\eta_0$ repairs;
$h_{\rm L}$ is the part a flexible recalibration (any function of the score) repairs but a linear one does not; and
$h_{\rm C}$ is the part no function of the score repairs, so the model itself must change. These are the steps of the
model-updating strategies of Steyerberg et al.\ (2004) and Janssen et al.\ (2008), with flexible recalibration between
logistic recalibration and model revision.

\paragraph{Groups and members.} Group $k\in\{{\rm I},{\rm S},{\rm L},{\rm C}\}$ has $J_k$ members. Member $(k,j)$ is a
vector of $r_{kj}$ basis functions $b_{kj}$ projected onto piece $k$, $\tilde b_{kj}=\Pi_{\cH_k}b_{kj}$ componentwise.
The bases used in the paper are $1$ (I); $\eta_0$ (S); $(\eta_0^2,\eta_0^3)$, the two terms of Stukel (1988) and a natural
spline in $\eta_0$ (L); squares and cubes, pairwise products and natural splines of the covariates (C).

\begin{assumption}\label{ass:basis}
$h\in L^2(w)$; each $V_{kj}=\E\{w_0\tilde b_{kj}\tilde b_{kj}^{\top}\}$ is positive definite (redundant columns
removed; in-sample the same for the projected bases $\Pi_{\cK_k}b_{kj}$); and for some $\delta>0$,
$\E\{w_0\lVert\tilde b_{kj}\rVert^{2+\delta}\}<\infty$, $\E\{\lVert\tilde b_{kj}\rVert\,h^2\}<\infty$ and
$\E\{(\lVert\tilde b_{kj}\rVert^{2}+\lVert\tilde b_{kj}\rVert^{2+\delta})|h|\}<\infty$. These hold for the polynomial, product and spline bases and the departures
($\eta_0^2$, $x_1x_3$, \dots) used in the paper when the covariates have enough moments.
\end{assumption}

\paragraph{Statistics.} For external validation, $U_{kj}=n^{-1/2}\sum_i\tilde b_{kj}(X_i)\{Y_i-p_0(X_i)\}$,
$\hat V_{kj}=n^{-1}\sum_i w_0(X_i)\tilde b_{kj}(X_i)\tilde b_{kj}(X_i)^{\top}$, $Q_{kj}=U_{kj}^{\top}\hat V_{kj}^{-1}U_{kj}$
and $P_{kj}=1-F_{r_{kj}}(Q_{kj})$, with $F_r$ the $\chi^2_r$ distribution function. For a nonempty set $S$ of groups,
\begin{equation}\label{eq:TS}
T_S=\frac{1}{\sum_{k\in S}J_k}\sum_{k\in S}\sum_{j=1}^{J_k}\tan\{(\tfrac12-P_{kj})\pi\},
\end{equation}
the Cauchy combination of Liu and Xie (2020). Its reference law is obtained by Monte Carlo: $Y^{(m)}_i\sim
\mathrm{Bernoulli}\{p_0(X_i)\}$ independently, $m=1,\dots,M$, giving $T_S^{(m)}$, and
$p_S=\{1+\#(m:T_S^{(m)}\ge T_S)\}/(M+1)$. For in-sample checking, groups I and S are dropped, the bases are evaluated at
$\hat\eta$ and also projected off $\cX$, $Y_i-p_0(X_i)$ is replaced by $Y_i-\hat p_i$, and the reference draws are
$Y^{(m)}_i\sim\mathrm{Bernoulli}(\hat p_i)$ with the model refitted to each draw (the parametric bootstrap, $B$ draws).

\paragraph{Hypotheses and closed testing.} In external validation, $H_k$ states that group $k$ has nothing to see:
$\ip{\tilde b_{kj}}{h}=0$ for all $j$ (local version) or $\ip{\tilde b_{kj}}{g}=0$ for all $j$ (fixed version). Because
$\tilde b_{kj}\in\cH_k$, $\ip{\tilde b_{kj}}{h}=\ip{\tilde b_{kj}}{h_k}$: \emph{$H_k$ depends on the misfit only through its
own piece}, and $h_k=0$ implies $H_k$. The in-sample hypotheses are defined in Section~\ref{sec:insample}.
\emph{Group $k$ is named} when $p_S\le\alpha$ for every $S\ni k$ (Marcus et al., 1976).

\section{What in-sample checking can see}\label{sec:insample}

\begin{lemma}[Known]\label{lem:erase}
If the logistic model with an intercept is fitted by maximum likelihood and $\hat\beta$ exists, then
$\sum_i(Y_i-\hat p_i)\tx_i=0$. In particular $\sum_i(Y_i-\hat p_i)=0$ and $\sum_i(Y_i-\hat p_i)\hat\eta_i=0$, so, if
$\hat\eta$ is not constant, the logistic recalibration of $\hat p$ on the same data has intercept $0$ and slope $1$ exactly,
whatever $\pi$ is.
\end{lemma}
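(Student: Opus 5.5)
The plan is to read the identity off the first-order condition of the maximum-likelihood problem and then apply the same condition to the recalibration fit. The log-likelihood is
\[
\ell(\beta)=\sum_i\{Y_i\tx_i^{\top}\beta-\log(1+e^{\tx_i^{\top}\beta})\}.
\]
It is concave and differentiable, with gradient $\sum_i(Y_i-p_\beta(X_i))\tx_i$, using $\frac{d}{d\eta}\log(1+e^\eta)=\expit(\eta)$. Since $\hat\beta$ exists by hypothesis, it is an interior maximizer of a smooth function on $\mathbb{R}^{d+1}$. The gradient therefore vanishes at $\hat\beta$, which is the first claim $\sum_i(Y_i-\hat p_i)\tx_i=0$.

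The two particular identities follow by taking linear combinations of this vector equation. The first coordinate of $\tx_i$ is $1$, which gives $\sum_i(Y_i-\hat p_i)=0$. Taking the inner product with $\hat\beta$ gives $\sum_i(Y_i-\hat p_i)\tx_i^{\top}\hat\beta=\sum_i(Y_i-\hat p_i)\hat\eta_i=0$.

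For the recalibration statement, I would consider the logistic regression of $Y_i$ on $(1,\hat\eta_i)$, with log-likelihood
\[
\ell_{\rm rc}(a,b)=\sum_i\{Y_i(a+b\hat\eta_i)-\log(1+e^{a+b\hat\eta_i})\}.
\]
At $(a,b)=(0,1)$ the fitted risks are exactly $\hat p_i$. The score of $\ell_{\rm rc}$ there is
\[
\Bigl(\textstyle\sum_i(Y_i-\hat p_i),\ \sum_i(Y_i-\hat p_i)\hat\eta_i\Bigr),
\]
which is zero by the previous step. Because $\ell_{\rm rc}$ is concave, a stationary point is a global maximizer.

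The only step that needs care, and the main obstacle, is uniqueness, since the statement says the recalibration has intercept $0$ and slope $1$ \emph{exactly}. The Hessian of $\ell_{\rm rc}$ is
\[
-\sum_i \hat p_i(1-\hat p_i)(1,\hat\eta_i)^{\top}(1,\hat\eta_i).
\]
Each weight $\hat p_i(1-\hat p_i)$ is strictly positive. This matrix is therefore negative definite if and only if the vectors $(1,\hat\eta_i)$ span $\mathbb{R}^2$, that is, if and only if $\hat\eta$ is not constant. Under that hypothesis $\ell_{\rm rc}$ is strictly concave, so $(0,1)$ is its unique maximizer.

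None of the argument uses the true $\pi$; it uses only the data and the fit. This justifies ``whatever $\pi$ is''.
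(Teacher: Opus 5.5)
Your proof is correct and follows the paper's argument. The score vanishes at $\hat\beta$, and its first coordinate and its combination with $\hat\beta$ give the two identities; the recalibration score then vanishes at $(0,1)$, and strict concavity (when $\hat\eta$ is not constant) makes $(0,1)$ the unique maximizer. The Hessian you display is the one at $(0,1)$. Strict concavity everywhere follows from the same computation with weights $\expit(a+b\hat\eta_i)\{1-\expit(a+b\hat\eta_i)\}>0$, although concavity together with negative definiteness at the stationary point already gives uniqueness.
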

\begin{proof}
The score of the logistic likelihood, $\sum_i\tx_i(Y_i-p_\beta(X_i))$, vanishes at $\hat\beta$; take the first
coordinate and the combination $\hat\beta^{\top}\sum_i\tx_i(Y_i-\hat p_i)$. The recalibration model
$\mathrm{logit}\,\pi=a+b\hat\eta$ has score $\sum_i(1,\hat\eta_i)^{\top}\{Y_i-\expit(a+b\hat\eta_i)\}$, which is zero at
$(a,b)=(0,1)$; its log likelihood is strictly concave when $\hat\eta$ is not constant, so $(0,1)$ is the maximizer.
\end{proof}

\begin{assumption}\label{ass:fit}
$I_0=\E\{w_0\tx\tx^{\top}\}$ is positive definite, $\E\lVert X\rVert^{4}<\infty$ and $\E\{w_0\lVert\tx\rVert\,|h|\}<\infty$. The in-sample bases
$b(x;\theta)$ depend on $\theta=(\beta,\text{projection coefficients},\text{knots})$ continuously differentiably near
$\theta_0$, with $\E\sup_{\lVert\theta-\theta_0\rVert<\epsilon}\lVert\partial_\theta b(X;\theta)\rVert^2(1+\lVert X\rVert^2)<\infty$,
and $\hat\theta-\theta_0=O_p(n^{-1/2})$. (Polynomial bases need the corresponding moments of $X$; natural splines with
interior knots at estimated quantiles satisfy the condition; boundary knots are treated in Remark~\ref{rem:impl}.)
\end{assumption}

\begin{theorem}[In-sample checking sees the model's orthocomplement]\label{thm:insample}
Let $b$ be any basis function and $\tilde b=b-\Pi_{\cX}b$ its projection off $\cX$.
\begin{enumerate}[label=(\alph*),leftmargin=*,itemsep=2pt]
\item Under the local alternatives \eqref{eq:local} and Assumptions~\ref{ass:basis}--\ref{ass:fit},
\begin{equation}\label{eq:expansion}
n^{-1/2}\sum_i\hat b(X_i)(Y_i-\hat p_i)=n^{-1/2}\sum_i\tilde b(X_i)\{Y_i-\pi_n(X_i)\}+\ip{\tilde b}{h-\Pi_{\cX}h}+o_p(1),
\end{equation}
where $\hat b$ is $b$ or $\tilde b$ evaluated at the estimates. The statistic converges to
$N\big(\ip{\tilde b}{h-\Pi_{\cX}h},\,\E\{w_0\tilde b\tilde b^{\top}\}\big)$. Refitting removes $\Pi_{\cX}h$ from every such
statistic; its variance is that of the projected basis, so a quadratic form must use $\E\{w_0\tilde b\tilde b^{\top}\}$, not
$\E\{w_0bb^{\top}\}$ (the software projects first, so the two coincide).
\item Since $\mathrm{span}\{1,\eta_0\}\subset\cX$, the risk-scale part $h_{\rm I}+h_{\rm S}$ is invisible in-sample: every
test built from statistics of this form has local power equal to its size against any $h\in\cX$, in particular against
$h\in\mathrm{span}\{1,\eta_0\}$. What
remains, $h-\Pi_{\cX}h$, lies in $\cX^{\perp}=\cK_{\rm L}\oplus\cK_{\rm C}$ with
\[
\cK_{\rm L}=(\cG+\cX)\ominus\cX,\qquad \cK_{\rm C}=(\cG+\cX)^{\perp}=\cH_{\rm C}\cap\cX^{\perp}\subseteq\cH_{\rm C}
\]
($\cG+\cX$ is closed, being a closed subspace plus a finite-dimensional one).
\item The erasure is not only local. If $\pi$ is fixed and $\beta^*$ maximizes $\E\{Y\tx^{\top}\beta-\log(1+e^{\tx^{\top}\beta})\}$,
then $p^*=\expit(\tx^{\top}\beta^*)$ satisfies $\E\{\tx(\pi-p^*)\}=0$: the probability residual of the pseudo-true fit has
no component in $\cX$ (White, 1982).
\item More generally, refitting any smooth parametric model erases its tangent space: the in-sample statistics read
$h-\Pi_{\mathcal T}h$, with $\mathcal T$ the span of the derivatives of $\mathrm{logit}\,p_\beta$ in $\beta$. For the logistic model
$\mathcal T=\cX$. For a probit fit $\mathcal T=\mathrm{span}\{\tx\,\phi(\eta)/w\}$, which contains the probit-scale intercept and
slope directions but not the logit-scale ones, so the two identities of Lemma~\ref{lem:erase} fail (Table~\ref{tab:t1}),
yet what the refit erases is again the model's own risk-scale directions.
\end{enumerate}
\end{theorem}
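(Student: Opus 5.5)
The plan for (a) is the Pierce (1982) argument: a two-term Taylor expansion of the residual and of the basis around the limit $\beta_0$, followed by the Lemma~\ref{lem:erase} identity to locate $\hat\beta-\beta_0$.

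\emph{Residual.} Under the local alternatives \eqref{eq:local} the pseudo-true limit of $\hat\beta$ is the frozen $\beta_0$, and $\hat\beta$ is $\sqrt n$-consistent, since $I_0$ is positive definite and standard M-estimation applies. Expanding $\hat p_i=p_0(X_i)+w_0(X_i)\tx_i^{\top}(\hat\beta-\beta_0)+O_p(n^{-1})$, I would write
\[
n^{-1/2}\textstyle\sum_i b(X_i)(Y_i-\hat p_i)=n^{-1/2}\sum_i b(X_i)\{Y_i-\pi_n(X_i)\}+n^{-1/2}\sum_i b(X_i)\{\pi_n-p_0\}(X_i)-\E\{w_0b\tx^{\top}\}\sqrt n(\hat\beta-\beta_0)+o_p(1).
\]
By \eqref{eq:rem} and the moment conditions of Assumption~\ref{ass:basis}, the middle term is $\ip{b}{h}+o_p(1)$.

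\emph{Locating $\hat\beta$.} Lemma~\ref{lem:erase} gives $\sum_i\tx_i(Y_i-\hat p_i)=0$. The same expansion with $b=\tx$ then yields
\[
\sqrt n(\hat\beta-\beta_0)=I_0^{-1}\Big[n^{-1/2}\textstyle\sum_i\tx_i\{Y_i-\pi_n(X_i)\}+\ip{\tx}{h}\Big]+o_p(1).
\]
Substituting this back, the coefficient $\E\{w_0b\tx^{\top}\}I_0^{-1}$ is exactly the coefficient vector of $\Pi_{\cX}b$ in $L^2(w)$. So the noise term becomes $n^{-1/2}\sum_i\tilde b(X_i)\{Y_i-\pi_n\}$, and the drift becomes $\ip{b}{h}-\ip{\Pi_{\cX}b}{h}=\ip{\tilde b}{h}=\ip{\tilde b}{h-\Pi_{\cX}h}$.

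\emph{Estimated bases.} Next I would replace $b$ by $\hat b=b(\cdot;\hat\theta)$. The difference is $n^{-1/2}\sum_i\{b(X_i;\hat\theta)-b(X_i;\theta_0)\}(Y_i-\hat p_i)$. A mean-value expansion writes it as $(\hat\theta-\theta_0)^{\top}$ times $n^{-1/2}\sum_i\partial_\theta b\,(Y_i-\hat p_i)$. The envelope condition of Assumption~\ref{ass:fit} bounds the latter, uniformly near $\theta_0$, by $O_p(1)$: the centred part is a mean-zero empirical process, and its drift is $O(1)$ locally. Hence the difference is $O_p(n^{-1/2})=o_p(1)$. This is the step I expect to be the main obstacle, since it needs a uniform law or stochastic equicontinuity for the class $\{\partial_\theta b(\cdot;\theta)\}$. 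I would handle it through the envelope and the continuity of $\partial_\theta b$, splitting into the centred empirical process and a deterministic part.

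\emph{Limit law.} The Lindeberg CLT applies to the triangular array $\tilde b(X_i)\{Y_i-\pi_n(X_i)\}$, using the $2+\delta$ moment of Assumption~\ref{ass:basis}. Its conditional variance is $\pi_n(1-\pi_n)\to w_0$, which gives the stated covariance $\E\{w_0\tilde b\tilde b^{\top}\}$.

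Part (b) then follows directly. For $h\in\cX$ the drift $\ip{\tilde b}{h-\Pi_{\cX}h}$ vanishes. Any test built from such statistics therefore has the null limit law under these alternatives, so its local power equals its size; and $\mathrm{span}\{1,\eta_0\}\subset\cX$ because $\eta_0=\tx^{\top}\beta_0$. The decomposition $\cX^{\perp}=\cK_{\rm L}\oplus\cK_{\rm C}$ is the orthogonal splitting of $\cX^{\perp}$ along the subspace $(\cG+\cX)\ominus\cX$. The identity $(\cG+\cX)^{\perp}=\cG^{\perp}\cap\cX^{\perp}=\cH_{\rm C}\cap\cX^{\perp}$ is de Morgan for orthocomplements. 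Closedness of $\cG+\cX$ is the standard fact that a closed subspace plus a finite-dimensional one is closed: quotient by $\cG$, where the image of $\cX$ is finite-dimensional and hence closed.

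For (c), the objective is strictly concave because $I_0$ is positive definite, and it is differentiable under the integral since $\E\|X\|<\infty$. Its first-order condition at $\beta^*$ reads $\E\{\tx(Y-p^*)\}=0$, and conditioning on $X$ gives $\E\{\tx(\pi-p^*)\}=0$. For (d), I would repeat the (a) computation for a general smooth model $p_\beta$. The score equations of its likelihood are $\sum_i s_\beta(X_i)(Y_i-\hat p_i)=0$, with $s_\beta=\partial_\beta p_\beta/\{p_\beta(1-p_\beta)\}=\partial_\beta\,\mathrm{logit}\,p_\beta$. The same substitution then shows that the drift becomes $\ip{b-\Pi_{\mathcal T}b}{h}$. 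For the probit, $\partial_\beta\,\mathrm{logit}\,\Phi(\tx^{\top}\beta)=\tx\phi(\eta)/w$, and this span contains $\phi(\eta)/w$ and $\eta\phi(\eta)/w$ but not $1$ or $\eta$ in general, since $\phi/w$ is not constant.
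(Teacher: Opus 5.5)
Your proposal is correct and follows the paper's Pierce-type argument step for step: the Taylor expansion of $\hat p_i$, the identity $\E\{w_0b\tx^{\top}\}I_0^{-1}\tx=\Pi_{\cX}b$ that turns the raw basis into $\tilde b$, an envelope and equicontinuity bound for the estimated bases, a Lindeberg CLT, and then (b)--(d) essentially as the paper does them. The one difference is how you get the drift of $\sqrt n(\hat\beta-\beta_0)$. You expand the score equation under the alternatives, which presupposes $\sqrt n$-consistency along the triangular array. The paper obtains both that and the limit $N(I_0^{-1}\E\{w_0\tx h\},I_0^{-1})$ from LAN, contiguity and Le Cam's third lemma, and that is the clean way to justify your ``standard M-estimation'' step.
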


\begin{proof}
(a) \emph{Step 1: the estimator.} Under \eqref{eq:local} the log likelihood ratio of $\pi_n$ against $p_0$ is
asymptotically normal with mean $-\frac12\nrm{h}^2$ and variance $\nrm{h}^2$ (local asymptotic normality of the
Bernoulli model, using $h\in L^2(w)$), so the alternatives are contiguous to $p_0$. Under $p_0$ the maximum-likelihood
estimator is asymptotically linear, $\sqrt n(\hat\beta-\beta_0)=I_0^{-1}n^{-1/2}\sum_i\tx_i(Y_i-p_0(X_i))+o_p(1)$; by
contiguity (Le Cam's first lemma) the remainder stays $o_p(1)$ under \eqref{eq:local}, and by the third lemma
$\sqrt n(\hat\beta-\beta_0)\dto N(I_0^{-1}\E\{w_0\tx h\},I_0^{-1})$, in particular $O_p(1)$.

\emph{Step 2: the projected basis at the true parameter.} Write
$Y_i-\hat p_i=\{Y_i-\pi_n(X_i)\}+\{\pi_n(X_i)-p_0(X_i)\}-\{\hat p_i-p_0(X_i)\}$. By Taylor's theorem
$\hat p_i-p_0(X_i)=w_0(X_i)\tx_i^{\top}(\hat\beta-\beta_0)+R_i$ with $|R_i|\le c_2\lVert\tx_i\rVert^2\lVert\hat\beta-\beta_0\rVert^2$, so,
using $\E\{\lVert\tilde b\rVert\lVert\tx\rVert^2\}<\infty$ (Cauchy--Schwarz with the moments assumed),
\[
n^{-1/2}\sum_i\tilde b(X_i)\{\hat p_i-p_0(X_i)\}=\Big\{n^{-1}\sum_iw_0\tilde b\tx^{\top}\Big\}\sqrt n(\hat\beta-\beta_0)+o_p(1).
\]
The bracket converges almost surely to $\E\{w_0\tilde b\tx^{\top}\}=0$, because $\tilde b\perp\cX$; with Step~1 the term is
$o_p(1)$. By \eqref{eq:rem} and Assumption~\ref{ass:basis},
$n^{-1/2}\sum_i\tilde b(X_i)\{\pi_n(X_i)-p_0(X_i)\}\to\ip{\tilde b}{h}$ almost surely, and $\ip{\tilde b}{h}=
\ip{\tilde b}{h-\Pi_{\cX}h}$ because $\tilde b\perp\cX$.

\emph{Step 3: an unprojected basis.} For a raw $b$ the same expansion gives
$n^{-1/2}\sum_ib(Y_i-\hat p_i)=n^{-1/2}\sum_ib(Y_i-p_0)-\{n^{-1}\sum_iw_0b\tx^{\top}\}\sqrt n(\hat\beta-\beta_0)+o_p(1)$;
inserting the linear representation of Step~1 and using $\E\{w_0b\tx^{\top}\}I_0^{-1}\tx=\Pi_{\cX}b$ gives
$n^{-1/2}\sum_i\tilde b(Y_i-p_0)+o_p(1)$, the statistic of $\tilde b$.

\emph{Step 4: estimated bases.} By Assumption~\ref{ass:fit},
$\lVert\hat b(x)-b(x;\theta_0)\rVert\le\sup_{\lVert\theta-\theta_0\rVert<\epsilon}\lVert\partial_\theta b(x;\theta)\rVert\,
\lVert\hat\theta-\theta_0\rVert$ on an event of probability tending to one. The contribution
$n^{-1/2}\sum_i\{\hat b(X_i)-b(X_i;\theta_0)\}\{Y_i-\pi_n(X_i)\}$ is the product of $\hat\theta-\theta_0=O_p(n^{-1/2})$ with an
$O_p(1)$ mean-zero sum, plus a remainder controlled by the same envelope (a standard stochastic-equicontinuity argument
for a finite-dimensional, differentiable class); and
$n^{-1/2}\sum_i\lVert\hat b-b\rVert\,|\pi_n-\hat p_i|=n^{1/2}O_p(n^{-1/2})O_p(n^{-1/2})=o_p(1)$. Natural splines place their
boundary knots at the sample range of $\hat\eta$, which has no fixed limit; centring instead at $\theta_{0,n}$, the same
basis with boundary knots at the sample range of $\eta_0$, the knot error is $O_p(n^{-1/2}\max_i\lVert\tx_i\rVert)$, which is
$O_p(\{\log n/n\}^{1/2})$ for Gaussian covariates, and the same two bounds apply.

\emph{Step 5: the limit.} Conditionally on $X_1,\dots,X_n$ the summands $\tilde b(X_i)\{Y_i-\pi_n(X_i)\}$ are independent with
mean zero and covariance $\tilde b\tilde b^{\top}\pi_n(1-\pi_n)$. Since $|\pi_n(1-\pi_n)-w_0|\le 2c_2\,n^{-1/2}|h|$ ($|w'|=|\expit''|$),
the average covariance converges almost surely to $\E\{w_0\tilde b\tilde b^{\top}\}$ by $\E\{\lVert\tilde b\rVert^2|h|\}<\infty$; and
since $\E\{|Y-\pi_n|^{2+\delta}\mid X\}\le\pi_n(1-\pi_n)\le w_0+2c_2n^{-1/2}|h|$, the Lyapunov condition follows from
$\E\{w_0\lVert\tilde b\rVert^{2+\delta}\}<\infty$ and $\E\{\lVert\tilde b\rVert^{2+\delta}|h|\}<\infty$. The
Lindeberg--Feller theorem gives the normal limit, which therefore also holds unconditionally.

(b) follows from (a): for $h\in\cX$ the mean is zero and the limit is the null limit. The
identity $(\cG+\cX)^{\perp}=\cG^{\perp}\cap\cX^{\perp}$ gives the last display.

(c) is the first-order condition of the expected log likelihood.

(d) The same argument applies with $\cX$ replaced by the tangent space $\mathcal T$, because Step~1 holds for any regular
parametric model and Step~2 uses only $\tilde b\perp\mathcal T$. For the probit model,
$\partial_\beta\,\mathrm{logit}\,\Phi(\tx^{\top}\beta)=\tx\,\phi(\eta)/\{\Phi(\eta)(1-\Phi(\eta))\}$.
\end{proof}

\paragraph{In-sample hypotheses.} In-sample the LINK members use $\Pi_{\cK_{\rm L}}b_{kj}$ and the COV members
$\Pi_{\cK_{\rm C}}b_{kj}$ (in the software: bases in $\hat\eta$ projected off $\{1,\hat\eta\}$ and the model matrix; covariate
bases also off a spline in $\hat\eta$). The in-sample hypothesis of group $k$ is
$H^{\rm in}_k:\ \ip{\Pi_{\cK_k}b_{kj}}{h}=0$ for all $j$, a statement about $\Pi_{\cK_k}h$ only.

\begin{proposition}[What the in-sample groups mean]\label{prop:alias}
\begin{enumerate}[label=(\roman*),leftmargin=*,itemsep=2pt]
\item In-sample COV reads only covariate structure: $\cK_{\rm C}\subseteq\cH_{\rm C}$, so $\Pi_{\cK_{\rm C}}h=\Pi_{\cK_{\rm C}}h_{\rm C}$ and
  a pure link departure ($h=h_{\rm L}$) never makes $H^{\rm in}_{\rm C}$ false.
\item In-sample LINK tests exactly whether adding a function of the score to the model improves it. The model
  $\mathrm{logit}\,\pi=\tx^{\top}\beta+\gamma^{\top}f(\tx^{\top}\beta)$ has tangent space $\cX+\mathrm{span}\{f(\eta_0)\}$ at $\gamma=0$, so its
  score test for $\gamma$ is the in-sample statistic of the basis $f$: the goodness-of-link test of Pregibon (1980), of
  which Stukel's (1988) test is the case $f=$ his two terms.
  Locally, $h$ is repaired to first order by such an extension with a rich enough $f$ if and only if $\Pi_{\cK_{\rm C}}h=0$.
\item Assume that no non-zero element of $\cX'=\cX\ominus\mathrm{span}\{1,\eta_0\}$ is a function of $\eta_0$ alone
  ($\cX'\cap\cG=\{0\}$). In-sample LINK coincides with the external link piece, that is $\cK_{\rm L}=\cH_{\rm L}$ and
  $\Pi_{\cK_{\rm L}}h_{\rm C}=0$ for every $h_{\rm C}\in\cH_{\rm C}$, if and only if $\E(\tx\mid\eta_0)$ is an affine function of
  $\eta_0$ for every column $\tx$ of the model matrix. This holds when the model-matrix columns are jointly elliptical,
  for example Gaussian main effects only. With binary covariates, interactions or nonlinear terms it generally fails,
  and in-sample LINK then also reads the part of a covariate-structure misfit that the refit turns into a function of
  the score; its verdict should be read through (ii).
\end{enumerate}
\end{proposition}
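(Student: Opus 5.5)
Part (i) follows directly from Theorem~\ref{thm:insample}(b), which gives $\cK_{\rm C}=(\cG+\cX)^{\perp}=\cG^{\perp}\cap\cX^{\perp}\subseteq\cG^{\perp}=\cH_{\rm C}$. The components $h_{\rm I},h_{\rm S},h_{\rm L}$ lie in $\cG$ and are therefore orthogonal to $\cK_{\rm C}$. Hence $\Pi_{\cK_{\rm C}}h=\Pi_{\cK_{\rm C}}h_{\rm C}$. In particular, for $h=h_{\rm L}$ every in-sample COV basis has $\ip{\Pi_{\cK_{\rm C}}b}{h}=0$, so $H^{\rm in}_{\rm C}$ holds.

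For (ii), I will differentiate $\mathrm{logit}\,\pi=\tx^{\top}\beta+\gamma^{\top}f(\tx^{\top}\beta)$ at $\gamma=0$. The $\beta$-derivative is $\tx$, since the $\gamma$ term vanishes, and the $\gamma$-derivative is $f(\eta_0)$. The tangent space is therefore $\cX+\mathrm{span}\{f(\eta_0)\}$. The score for $\gamma$ is $\sum_if(\hat\eta_i)(Y_i-\hat p_i)$ with $\beta$ at the null MLE, which is exactly the in-sample statistic of the basis $f$. By Theorem~\ref{thm:insample}(a) it reads $\ip{f-\Pi_{\cX}f}{h}$; this identifies it as Pregibon's test, with Stukel's test as the special case. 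For the first-order repair claim, note that $h$ is repaired locally to first order iff $h\in\cX+\overline{\cG}$ (with $f$ rich, its span is dense in $\cG$). Since $\cG+\cX$ is closed, this holds iff $h-\Pi_{\cG+\cX}h=\Pi_{\cK_{\rm C}}h=0$.

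Part (iii) is the main work, and I will reduce it to a statement about $\cX'$. Because $\cX=\mathrm{span}\{1,\eta_0\}\oplus\cX'$, we have $\cG+\cX=\cG+\cX'$. Then $\cK_{\rm L}=(\cG+\cX')\ominus\cX$, while $\cH_{\rm L}=\cG\ominus\mathrm{span}\{1,\eta_0\}$.

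\emph{Sufficiency.} Suppose $\E(\tx\mid\eta_0)$ is affine in $\eta_0$. For $u\in\cX'$ and $q\in\cG$, the identity $\ip{u}{q}=\E\{w_0q\,\E(u\mid\eta_0)\}$ holds because $w_0$ is a function of $\eta_0$. Here $\E(u\mid\eta_0)$ is affine, so it lies in $\mathrm{span}\{1,\eta_0\}$. It is also orthogonal in $L^2(w)$ to $1$ and $\eta_0$: take $q=1$ and $q=\eta_0$ in the same identity and use $u\perp\{1,\eta_0\}$. An affine function orthogonal to $1$ and $\eta_0$ is zero, so $\E(u\mid\eta_0)=0$ and $\cX'\perp\cG$. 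Then $\cG+\cX=\cG\oplus\cX'$ and $\cK_{\rm L}=\cG\ominus\mathrm{span}\{1,\eta_0\}=\cH_{\rm L}$. Consequently $\Pi_{\cK_{\rm L}}h_{\rm C}=0$.

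\emph{Necessity.} Suppose some column of $\cX'$ has $\E(u\mid\eta_0)\neq 0$, so that $\Pi_{\cG}u\neq0$ for some $u\in\cX'$. The plan is to exhibit an element of $\cK_{\rm L}$ not in $\cG$; the hypothesis $\cX'\cap\cG=\{0\}$ is what makes such an element available. Take $q=\Pi_{\cG}u\in\cH_{\rm L}$, which lies in $\cH_{\rm L}$ because $u\perp\{1,\eta_0\}$. Then $q-\Pi_{\cX}q=q-\Pi_{\cX'}q\in\cK_{\rm L}$. This element lies outside $\cG$ unless $\Pi_{\cX'}q\in\cG$, and that membership would force $\Pi_{\cX'}q=0$ by $\cX'\cap\cG=\{0\}$. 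But $\Pi_{\cX'}q=0$ is impossible, since $\ip{u}{q}=\nrm{q}^2>0$. So $\cK_{\rm L}\neq\cH_{\rm L}$. Its non-$\cG$ part $v=-\Pi_{\cX'}q+\Pi_{\cG}\Pi_{\cX'}q$ has a nonzero component $h_{\rm C}=\Pi_{\cG^\perp}(-\Pi_{\cX'}q)$ with $\ip{q-\Pi_{\cX'}q}{h_{\rm C}}\neq0$, and this yields an $h_{\rm C}$ with $\Pi_{\cK_{\rm L}}h_{\rm C}\neq0$. I will also verify that $\cK_{\rm L}=\cH_{\rm L}$ and $\Pi_{\cK_{\rm L}}\cH_{\rm C}=0$ are equivalent, using $\cK_{\rm L}\perp\cK_{\rm C}$. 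The elliptical case then follows from linearity of conditional expectations for elliptical vectors, since $\eta_0$ is a linear combination of the columns. For binary covariates, interactions or nonlinear terms, a small example (for instance a binary $x$ and a Gaussian $z$) shows $\E(x\mid\eta_0)$ is nonlinear.

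The main obstacle is the necessity direction, specifically producing a concrete $h_{\rm C}$ with nonzero projection. This requires careful handling of the bookkeeping between $\cG$ and the finite-dimensional $\cX'$ under the nondegeneracy assumption.
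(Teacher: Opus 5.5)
Your proof is correct and takes essentially the paper's route. Part (i) comes from Theorem~\ref{thm:insample}(b) and part (ii) from the tangent space of the extension together with the closedness of $\cG+\cX$. For part (iii) you pivot on $\cX'\perp\cG$ via $\E_w(\cdot\mid\eta_0)=\E(\cdot\mid\eta_0)$. Your explicit witnesses $q=\Pi_{\cG}u$ and $h_{\rm C}=\Pi_{\cG^{\perp}}(-\Pi_{\cX'}q)$ are the contrapositive form of the paper's chain ``$\Pi_{\cX'}f\in\cG$ for all $f\in\cG$ iff $\Pi_{\cX'}f=0$'' under $\cX'\cap\cG=\{0\}$.

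The promised side check that $\cK_{\rm L}=\cH_{\rm L}$ and $\Pi_{\cK_{\rm L}}\cH_{\rm C}=0$ are equivalent is unnecessary, since you prove or refute both conjuncts directly. It also cannot be done with $\cK_{\rm L}\perp\cK_{\rm C}$ alone: the implication from $\Pi_{\cK_{\rm L}}\cH_{\rm C}=0$ to $\cK_{\rm L}=\cH_{\rm L}$ needs $\cX'\cap\cG=\{0\}$ as well.
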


\begin{proof}
(i) is Theorem~\ref{thm:insample}(b). (ii) The tangent space of the extension is the derivative of the log odds in
$(\beta,\gamma)$ at $\gamma=0$, namely $\cX+\mathrm{span}\{f(\eta_0)\}$; the efficient score for $\gamma$ is the score projected
off the nuisance tangent space $\cX$, which is the in-sample statistic of $f$. A first-order repair exists if and only if
$h-\Pi_{\cX}h$ lies in the closure of the span of $\{f-\Pi_{\cX}f:f\in\cG\}$, which is $\cK_{\rm L}$, that is, if and only if its
$\cK_{\rm C}$ component vanishes.
(iii) $\cK_{\rm L}=\Pi_{\cX^{\perp}}\cG=(\cG+\cX)\cap\cX^{\perp}$, which is closed as an intersection of closed subspaces.
$\cK_{\rm L}\perp\cH_{\rm C}=\cG^{\perp}$ if and only if $\cK_{\rm L}\subseteq\cG$, that is, if and only if $\Pi_{\cX}f\in\cG$ for every $f\in\cG$.
Write $\cX=\mathrm{span}\{1,\eta_0\}\oplus\cX'$; for $f\in\cG$, $\Pi_{\cX}f=\Pi_{\{1,\eta_0\}}f+\Pi_{\cX'}f$ with the first term in $\cG$,
so the condition is $\Pi_{\cX'}f\in\cG$, and by the assumption $\cX'\cap\cG=\{0\}$ this means $\Pi_{\cX'}f=0$ for all $f\in\cG$,
i.e.\ $\cX'\perp\cG$, i.e.\ $\E_w(x'\mid\eta_0)=0$ for every $x'\in\cX'$, with $\E_w$ the $w_0$-weighted conditional expectation.
Because $w_0>0$ is a function of $\eta_0$, $\E_w(x'\mid\eta_0)=\E(w_0x'\mid\eta_0)/\E(w_0\mid\eta_0)=\E(x'\mid\eta_0)$. If
$\E(\tx\mid\eta_0)$ is affine for every column, then for $x'\in\cX'$, $\E(x'\mid\eta_0)=c+d\eta_0$ with
$\ip{c+d\eta_0}{1}=\ip{x'}{1}=0$ and $\ip{c+d\eta_0}{\eta_0}=\ip{x'}{\eta_0}=0$, so $c+d\eta_0=0$ when $\eta_0$ is non-degenerate;
conversely $\E(x'\mid\eta_0)=0$ for all $x'\in\cX'$ makes $\E(\tx\mid\eta_0)$ affine. Under the condition $\cK_{\rm L}\subseteq\cG$ and
$\cK_{\rm L}\perp\mathrm{span}\{1,\eta_0\}$, so $\cK_{\rm L}\subseteq\cH_{\rm L}$; and $\cH_{\rm L}\perp\cX'$ gives $\cH_{\rm L}\subseteq\cX^{\perp}$,
so $\cH_{\rm L}=\Pi_{\cX^{\perp}}\cH_{\rm L}\subseteq\cK_{\rm L}$. For jointly elliptical columns, $\E(\tx\mid a^{\top}\tx)$ is affine in
$a^{\top}\tx$.
\end{proof}

\noindent Proposition~\ref{prop:alias} gives the in-sample verdicts their reading. COV named: no function of the
score repairs the model, so it must be revised. LINK named and COV not named: the data are consistent with a
first-order repair by a change of link or by a smooth function of the score added to the model (Proposition
\ref{prop:alias}(ii)); this is not evidence that no revision is needed, because a non-rejection is not proof and the COV
bases are finite. When the model matrix contains binary covariates, interactions or nonlinear terms, part of a
covariate-structure misfit can appear in LINK, and a covariate-structure misfit may remain after such a repair.

\begin{proposition}[A de-aliased in-sample link group]\label{prop:dealias}
Let $\mathcal E=\mathrm{span}\{\E(x'\mid\eta_0):x'\in\cX'\}\subset\cG$, the part of the score-space that the covariates
carry, and for a link basis $b\in\cG$ let $b^{\rm da}=b-\Pi_{\mathrm{span}\{1,\eta_0\}+\mathcal E}\,b$. Then
\begin{enumerate}[label=(\roman*),leftmargin=*,itemsep=2pt]
\item $b^{\rm da}\in\cG\cap\cX^{\perp}\subseteq\cK_{\rm L}$, so in-sample its statistic needs no further projection, and
  $\ip{b^{\rm da}}{h}=\ip{b^{\rm da}}{h_{\rm L}}$ for every $h$: \emph{the de-aliased in-sample link hypothesis is a statement about
  the external link piece $h_{\rm L}$ alone};
\item since each $x'\in\cX'$ is orthogonal to $1$ and $\eta_0$, so is $\E(x'\mid\eta_0)$, and $\mathcal E\subseteq\cH_{\rm L}$. The price is the
  link directions in $\mathcal E$ (at most $d-1$ of them): a link misfit lying there is partly erased by the refit, and what
  is left of it is seen by neither the de-aliased LINK group nor COV, so it is undetectable in-sample. The price is
  zero exactly when $\mathcal E=\{0\}$, that is, when $\E(X\mid\eta_0)$ is affine; then $b^{\rm da}=\Pi_{\cH_{\rm L}}b$.
\end{enumerate}
\end{proposition}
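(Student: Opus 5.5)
The proof is a chain of projection identities in $L^2(w)$, using the structure set up in the proof of Proposition~\ref{prop:alias}(iii). Throughout I take $\cX'=\cX\ominus\mathrm{span}\{1,\eta_0\}$ and $\mathcal E\subset\cG$ as in the statement. I write $\E_w(\cdot\mid\eta_0)$ for the $w_0$-weighted conditional expectation. Because $w_0$ is a function of $\eta_0$, this equals the ordinary $\E(\cdot\mid\eta_0)$, so $\E(x'\mid\eta_0)=\Pi_{\cG}x'$. I will also assume that $\mathcal E$ is finite-dimensional, of dimension at most $\dim\cX'\le d-1$, so that $\mathrm{span}\{1,\eta_0\}+\mathcal E$ is closed and the projection defining $b^{\rm da}$ exists. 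This assumption is free, since $\mathcal E$ is the image of the finite-dimensional $\cX'$ under $\Pi_{\cG}$.

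\emph{Part (i).} Since $b\in\cG$ and $\mathrm{span}\{1,\eta_0\}+\mathcal E\subset\cG$, we have $b^{\rm da}\in\cG$. To show $b^{\rm da}\perp\cX$, decompose $\cX=\mathrm{span}\{1,\eta_0\}\oplus\cX'$. Orthogonality to $1$ and $\eta_0$ holds by construction. For $x'\in\cX'$ and any $f\in\cG$, $\ip{f}{x'}=\ip{f}{\Pi_{\cG}x'}$, and $\Pi_{\cG}x'\in\mathcal E$. Taking $f=b^{\rm da}$, which is orthogonal to $\mathcal E$, gives $\ip{b^{\rm da}}{x'}=0$. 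Hence $b^{\rm da}\in\cG\cap\cX^{\perp}$. This set lies in $(\cG+\cX)\cap\cX^{\perp}=\cK_{\rm L}$ by the identity in the proof of Proposition~\ref{prop:alias}(iii). Because $b^{\rm da}$ is already orthogonal to $\cX$, Theorem~\ref{thm:insample}(a) says its in-sample statistic needs no further projection. Finally, $b^{\rm da}\in\cG\ominus\mathrm{span}\{1,\eta_0\}=\cH_{\rm L}$. It is therefore orthogonal to $\cH_{\rm I}$, $\cH_{\rm S}$ and $\cH_{\rm C}$, which gives $\ip{b^{\rm da}}{h}=\ip{b^{\rm da}}{h_{\rm L}}$.

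\emph{Part (ii).} Each $x'\in\cX'$ satisfies $\ip{\Pi_{\cG}x'}{c}=\ip{x'}{c}=0$ for every $c\in\mathrm{span}\{1,\eta_0\}\subset\cG$, so $\mathcal E\subseteq\cH_{\rm L}$. Its dimension is at most $\dim\cX'$, which is at most $d-1$. The price argument then has three steps.
\begin{enumerate}[label=(\alph*),leftmargin=*,itemsep=1pt]
\item Take $h=h_{\rm L}\in\mathcal E$. The de-aliased statistics read $\ip{b^{\rm da}}{h_{\rm L}}=0$, because $b^{\rm da}\perp\mathcal E$.
\item In-sample COV reads $\Pi_{\cK_{\rm C}}h$, and $\cK_{\rm C}\subseteq\cH_{\rm C}\perp\cH_{\rm L}$, so this vanishes too.
\item The refit removes $\Pi_{\cX}h$. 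This is nonzero in general, since $\ip{h}{x'}=\ip{h}{\Pi_{\cG}x'}$ need not vanish; that is the sense in which the misfit is ``partly erased.'' By Theorem~\ref{thm:insample}(b) the erased part has local power equal to size.
\end{enumerate}
So a link misfit in $\mathcal E$ is invisible in-sample to every group.

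For the last claim, $\mathcal E=\{0\}$ if and only if $\E(x'\mid\eta_0)=0$ for all $x'\in\cX'$. The converse argument in Proposition~\ref{prop:alias}(iii) shows this is equivalent to $\E(\tx\mid\eta_0)$ being affine. In that case $b^{\rm da}=b-\Pi_{\{1,\eta_0\}}b$, which is $\Pi_{\cH_{\rm L}}b$ for $b\in\cG$.

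The only delicate step is the equivalence $\mathcal E=\{0\}\iff \E(X\mid\eta_0)$ affine. An affine conditional mean $\E(x'\mid\eta_0)=c+d\eta_0$ with $x'\perp\{1,\eta_0\}$ must vanish, which requires $\eta_0$ to be non-degenerate, exactly as in Proposition~\ref{prop:alias}(iii). I would import that argument directly rather than redo it. Everything else is bookkeeping with the nested orthogonal sum of Definition~\ref{def:pieces}.
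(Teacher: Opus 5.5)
Your proof is correct and follows the paper's argument essentially step for step. You show $b^{\rm da}\in\cG$, get orthogonality to $\cX'$ from $\ip{b^{\rm da}}{x'}=\ip{b^{\rm da}}{\Pi_{\cG}x'}$ with $\Pi_{\cG}x'\in\mathcal E$, read off the piece identity from $b^{\rm da}\in\cH_{\rm L}$, and import Proposition~\ref{prop:alias}(iii) for the affine case; your COV step ($\mathcal E\subseteq\cG$ and $\cK_{\rm C}\subseteq\cG^{\perp}$) only rephrases the paper's remark that the in-sample remainder lies in $\cG+\cX$. One wording fix: your closing ``invisible in-sample to every group'' should say ``to the de-aliased LINK group and COV'', because the default LINK group can still read $h_{\rm L}-\Pi_{\cX}h_{\rm L}\in\cK_{\rm L}$.
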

\begin{proof}
(i) $b^{\rm da}$ is $b\in\cG$ minus an element of $\cG$, so it lies in $\cG$, and it is orthogonal to $1$, $\eta_0$ and
$\mathcal E$. For $x\in\cX$ write $x=c+d\eta_0+x'$ with $x'\in\cX'$; since $b^{\rm da}\in\cG$,
$\ip{b^{\rm da}}{x}=\ip{b^{\rm da}}{\Pi_{\cG}x}$, and $\Pi_{\cG}x=c+d\eta_0+\E_w(x'\mid\eta_0)=c+d\eta_0+\E(x'\mid\eta_0)$ lies in
$\mathrm{span}\{1,\eta_0\}+\mathcal E$, so the inner product is zero. Hence $b^{\rm da}\in\cX^{\perp}$. For any $h$,
$\ip{b^{\rm da}}{h}=\ip{b^{\rm da}}{\Pi_{\cG}h}=\ip{b^{\rm da}}{h_{\rm I}+h_{\rm S}+h_{\rm L}}=\ip{b^{\rm da}}{h_{\rm L}}$.
(ii) $\ip{\E(x'\mid\eta_0)}{1}=\ip{x'}{1}=0$ and likewise for $\eta_0$, so $\mathcal E\subseteq\cH_{\rm L}$. The de-aliased bases span
$\cH_{\rm L}\ominus\mathcal E$; $\dim\mathcal E\le\dim\cX'=d-1$. A link misfit $h_{\rm L}\in\mathcal E$ has in-sample remainder
$h_{\rm L}-\Pi_{\cX}h_{\rm L}$, which is orthogonal to the de-aliased bases (they are in $\cX^{\perp}$ and orthogonal to
$\mathcal E$) and lies in $\cG+\cX$, hence has no $\cK_{\rm C}$ component. $\mathcal E=\{0\}$ if and only if every
$\E(x'\mid\eta_0)$ vanishes, i.e.\ $\E(X\mid\eta_0)$ is affine (Proposition~\ref{prop:alias}iii).
\end{proof}

\noindent The software (option \texttt{dealias}) uses an estimate of $\mathcal E$ with two safeguards. First, a model column is
de-aliased only if its weighted mean given the fitted score is significantly curved (an $F$ test of a
three-degree-of-freedom natural spline in $\hat\eta$ against a line, level $0.05$). Both the choice and the fitted
directions are frozen: the spline fits of the selected columns are computed once, on the observed score, and the same
vectors are projected off in the data and in every bootstrap draw. They are then fixed functions of the covariates
and Lemma~\ref{lem:boot} applies. Without the freeze, when $\E(X\mid\eta_0)$ is affine the estimated curvature is sampling
noise whose direction changes between the data and each bootstrap draw, and the lemma would not apply. Second, for each selected column $\mathcal E$ is replaced by $\mathcal E_3$, the weighted fit of the column on the
spline; the residual aliasing is $\ip{\Pi_{\cX+\mathcal E_3}b-\Pi_{\cX+\mathcal E}b}{h_{\rm C}}$, small when the conditional means are
smooth. The guarantees of Theorem~\ref{thm:fwer} then apply conditionally on the selected columns. The price of
part (ii) can be large: in the design of Table~\ref{tab:dealias}, where the covariates carry most of the curvature
of the score, de-aliasing halved the leak of a covariate misfit into LINK but removed almost all power against a
curved link. The option is therefore best used as a sensitivity check of an in-sample LINK verdict.

\begin{corollary}[Grouping by fitted risk]\label{cor:hl}
Let a statistic be a continuous function of the residual sums $n^{-1/2}\sum_i\mathbf 1\{\hat\eta_i\in A_g\}(Y_i-\hat p_i)$ over a
fixed number of intervals $A_g$ whose end points converge to quantiles of $\eta_0$ at which $\eta_0$ has a continuous
distribution, and of consistently estimated scaling constants (the Hosmer--Lemeshow construction), and let its
null limit law be continuous. Its local power equals its size against any $h\in\cH_{\rm C}$ in external validation (with
$p_0$ in place of $\hat p$), and in-sample against any $h$ with $\Pi_{\cG+\cX}h\in\cX$, in particular against
$h\in\cK_{\rm C}$.
\end{corollary}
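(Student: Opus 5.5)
The plan is to show that each residual sum $S_g$ is, under the local alternatives, asymptotically normal with exactly its null limit. The mean shift of each $S_g$ is an inner product of the misfit with the indicator $\mathbf 1\{\eta_0\in A_g^0\}$, where $A_g^0$ is the limiting interval. This indicator is a function of the score, so it lies in $\cG$. Once every shift vanishes, the vector $(S_g)_g$ has its null limit under the alternatives. By Slutsky (the scaling constants are consistent) and the continuous mapping theorem, the statistic then converges to its null law. Since that law is continuous, the rejection probability converges to the nominal size.

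\emph{External validation.} Write $S_g=n^{-1/2}\sum_i\mathbf 1\{\eta_0(X_i)\in \hat A_g\}(Y_i-p_0(X_i))$, where the end points of $\hat A_g$ may be data dependent.
\begin{enumerate}[label=(\arabic*),leftmargin=*,itemsep=1pt]
\item Replace $\hat A_g$ by its limit $A_g^0$. The difference is a sum over the points of $\eta_0$ between the estimated and limiting end points.
\item Control that difference. The expectation part is $O(n^{-1/2})$ times $\sqrt n$ times the probability of a shrinking interval, which is $o(1)$ because $\eta_0$ has a continuous distribution there. The centred part is $o_p(1)$ by a standard equicontinuity argument for indicators of intervals, a VC class.
\item Apply the first-order analysis of Theorem~\ref{thm:insample}, Steps 2 and 5, with $\tilde b$ replaced by the fixed indicator. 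This gives $S_g\dto N(\ip{\mathbf 1_{A_g^0}(\eta_0)}{h},\ \E\{w_0\mathbf 1_{A_g^0}\})$, using \eqref{eq:rem} for the mean and Lindeberg--Feller for the fluctuation.
\item Conclude. The indicator lies in $\cG$ and $h\in\cH_{\rm C}=\cG^\perp$, so every mean is zero. The joint limit is the null limit, with zero cross-covariances between disjoint cells as under $h=0$.
\end{enumerate}

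\emph{In-sample.} Use the expansion \eqref{eq:expansion} of Theorem~\ref{thm:insample}(a) with the unprojected basis $b=\mathbf 1\{\eta\in A_g\}$. Step 3 of that proof shows that the statistic of a raw basis equals that of $\tilde b=b-\Pi_{\cX}b$, with mean $\ip{\tilde b}{h-\Pi_{\cX}h}$. Now $\tilde b\in\cG+\cX$ and $h-\Pi_{\cX}h\perp\cX$, so the mean equals $\ip{\tilde b}{\Pi_{\cG+\cX}h-\Pi_{\cX}h}$. This vanishes whenever $\Pi_{\cG+\cX}h\in\cX$, because then $\Pi_{\cG+\cX}h-\Pi_{\cX}h=0$. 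For $h\in\cK_{\rm C}=(\cG+\cX)^\perp$ this projection is zero, which gives the special case. The covariance is that of the projected indicator, the same as under the null, so again the limit of the whole vector coincides with the null limit.

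\emph{Main obstacle.} I expect the main difficulty to be that the indicator bases are not differentiable in $\theta$. Assumption~\ref{ass:fit} therefore does not apply directly, both to the estimated end points and, in-sample, to evaluating at $\hat\eta_i$ instead of $\eta_0$. I would handle this by replacing Step~4 of Theorem~\ref{thm:insample} with an empirical-process argument. The class $\{\mathbf 1\{\tx^\top\beta\in(a,c]\}\}$ is VC, so the centred process is asymptotically equicontinuous in $(\beta,a,c)$. The bias term is then $\sqrt n$ times the change in $\E\{\mathbf 1(\cdot)(\pi_n-p_{\beta})\}$. Its leading piece is $\E\{w_0\mathbf 1\,\tx^\top\}\sqrt n(\hat\beta-\beta_0)$, which is exactly the term producing the projection onto $\cX$, plus a remainder that is $o_p(1)$ by continuity of the distribution of $\eta_0$ at the limiting quantiles. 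With this substitute for Step~4, the rest of the argument is the same as in Theorem~\ref{thm:insample}.
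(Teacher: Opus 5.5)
Your proposal is correct and follows essentially the same route as the paper. In both, each cell's mean shift is the inner product of $h$ with the indicator, which lies in $\cG$ (external), or with its projection off $\cX$ via Step~3 (in-sample), and this vanishes under the stated conditions; with covariances free of $h$, consistent scaling constants and a Donsker argument for the estimated end points, Slutsky gives the null limit. Your explicit empirical-process substitute for Step~4, needed because indicator bases are not differentiable in $\theta$, spells out what the paper compresses into the remark that indicators of intervals form a Donsker class.
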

\begin{proof}
Each indicator $\mathbf 1_{A_g}=\mathbf 1\{\eta_0\in A_g\}$ belongs to $\cG$. Externally the mean shift of the $g$th sum is
$\ip{\mathbf 1_{A_g}}{h}=0$ for $h\perp\cG$. In-sample, by Theorem~\ref{thm:insample}(a, Step 3) the mean shift is
$\ip{\mathbf 1_{A_g}-\Pi_{\cX}\mathbf 1_{A_g}}{h}$; this vanishes when $\Pi_{\cG+\cX}h\in\cX$, because
$\mathbf 1_{A_g}-\Pi_{\cX}\mathbf 1_{A_g}\in\cG+\cX$ and is orthogonal to $\cX$. The covariances do not depend on $h$; estimated end
points converge, the indicators of intervals form a Donsker class and the scaling constants converge, so by Slutsky's
lemma the limit law of the statistic is its null limit law.
\end{proof}

\noindent This is why the partition tests of the Hosmer--Lemeshow school, and every other test indexed by the
fitted risk alone, detect a curved link but not a covariate-structure misfit uncorrelated with the score (exactly in
external validation; in-sample when $\Pi_{\cG+\cX}h\in\cX$, which an aliased covariate misfit violates), an
observation made repeatedly in simulations (Hosmer et al., 1997; Liu et al., 2024).

\section{The law of the group statistics}

\begin{theorem}[Noncentrality decomposition]\label{thm:law}
Under the local alternatives \eqref{eq:local} and Assumption~\ref{ass:basis}, in external validation and conditionally on
$X_1,X_2,\dots$ for almost every sequence,
\[
(U_{kj})_{k,j}\dto N(\mu,\Sigma),\qquad \mu_{kj}=\ip{\tilde b_{kj}}{h_k},\qquad
\Sigma_{(kj),(k'j')}=\E\{w_0\tilde b_{kj}\tilde b_{k'j'}^{\top}\},
\]
with $\Sigma_{(kj),(k'j')}=0$ whenever $k\neq k'$. Consequently $Q_{kj}\dto\chi^2_{r_{kj}}(\lambda_{kj})$ with
$\lambda_{kj}=\nrm{\Pi_{\tilde b_{kj}}h_k}^2$, where $\Pi_{\tilde b_{kj}}$ projects onto the span of the components of
$\tilde b_{kj}$; statistics of different groups are asymptotically independent; and, taking one member per group,
$\sum_k\lambda_{kj_k}\le\nrm{h}^2$. The Monte Carlo draws have the same limit with $\mu=0$. In-sample, the same holds
unconditionally for the L and C groups with $h$ replaced by $h-\Pi_{\cX}h$ and the pieces by $\cK_{\rm L},\cK_{\rm C}$
(Theorem~\ref{thm:insample}a).
\end{theorem}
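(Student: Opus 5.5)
The plan is to stack all members into one vector $\tilde b=(\tilde b_{kj})_{k,j}$ and prove a joint conditional central limit theorem by the Cram\'er--Wold device, following Step~5 of the proof of Theorem~\ref{thm:insample}(a). Conditionally on $X_1,X_2,\dots$, decompose
\[
U_{kj}=n^{-1/2}\sum_i\tilde b_{kj}(X_i)\{Y_i-\pi_n(X_i)\}+n^{-1/2}\sum_i\tilde b_{kj}(X_i)\{\pi_n(X_i)-p_0(X_i)\}.
\]
The second, deterministic term converges almost surely to $\ip{\tilde b_{kj}}{h}$. To see this, use the bound \eqref{eq:rem}: the leading piece $n^{-1}\sum_i w_0\tilde b_{kj}h$ converges by the strong law, since $\E\{w_0\lVert\tilde b\rVert|h|\}<\infty$ by Cauchy--Schwarz in $L^2(w)$. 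The remainder is at most $n^{-1/2}\cdot n^{-1}\sum_i c_2\lVert\tilde b_{kj}\rVert h^2$, which tends to $0$ almost surely by $\E\{\lVert\tilde b\rVert h^2\}<\infty$.

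The first term is a sum of conditionally independent, mean-zero vectors. For a fixed linear combination $a^{\top}\tilde b$, I would verify the two Lindeberg--Feller conditions as in Step~5.
\begin{itemize}
\item The conditional covariance $n^{-1}\sum_i\tilde b\tilde b^{\top}\pi_n(1-\pi_n)$ converges almost surely to $\Sigma=\E\{w_0\tilde b\tilde b^{\top}\}$. This uses $|\pi_n(1-\pi_n)-w_0|\le 2c_2n^{-1/2}|h|$ and $\E\{\lVert\tilde b\rVert^2|h|\}<\infty$.
\item The Lyapunov condition follows from $\E\{w_0\lVert\tilde b\rVert^{2+\delta}\}<\infty$ and $\E\{\lVert\tilde b\rVert^{2+\delta}|h|\}<\infty$, again via the strong law along almost every sequence.
\end{itemize}
The strong-law limits hold on a single probability-one set of sequences, because there are countably many of them. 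This gives $(U_{kj})\dto N(\mu',\Sigma)$ with $\mu'_{kj}=\ip{\tilde b_{kj}}{h}$.

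It remains to identify the limit. Because $\tilde b_{kj}\in\cH_k$ and the pieces are mutually orthogonal (Definition~\ref{def:pieces}), $\ip{\tilde b_{kj}}{h}=\ip{\tilde b_{kj}}{h_k}$, so $\mu'=\mu$. For $k\ne k'$ the covariance $\Sigma_{(kj),(k'j')}$ is an inner product of elements of orthogonal subspaces, hence zero. Jointly Gaussian blocks with zero cross-covariance are independent, so the group blocks of the limit are independent.

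The statements about $Q_{kj}$ follow by the continuous mapping theorem. By the strong law, $\hat V_{kj}\to V_{kj}$ almost surely, and $V_{kj}$ is positive definite by Assumption~\ref{ass:basis}. Therefore
\[
Q_{kj}\dto Z^{\top}V_{kj}^{-1}Z,\qquad Z\sim N(\mu_{kj},V_{kj}),
\]
which is $\chi^2_{r_{kj}}$ with noncentrality $\mu_{kj}^{\top}V_{kj}^{-1}\mu_{kj}$. This noncentrality is the squared $L^2(w)$ norm of the weighted least-squares projection of $h_k$ onto the span of the components of $\tilde b_{kj}$, that is, $\nrm{\Pi_{\tilde b_{kj}}h_k}^2$. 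Asymptotic independence of statistics from different groups is inherited from the Gaussian limit through continuous mapping. For the norm bound, note that a projection contracts, so $\lambda_{kj_k}\le\nrm{h_k}^2$. Summing over $k$ and using Pythagoras gives $\sum_k\nrm{h_k}^2=\nrm{h}^2$.

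The remaining claims are applications of the same argument.
\begin{itemize}
\item \emph{Monte Carlo draws.} These are the case $h=0$, with $Y_i^{(m)}\sim\mathrm{Bernoulli}\{p_0(X_i)\}$ conditionally on the same $X$ sequence, so they have the same limit with $\mu=0$.
\item \emph{In-sample.} Theorem~\ref{thm:insample}(a) already provides the expansion \eqref{eq:expansion} with mean $\ip{\tilde b}{h-\Pi_{\cX}h}$ and covariance $\E\{w_0\tilde b\tilde b^{\top}\}$, stated unconditionally. Applying it jointly to the stacked vector of $\Pi_{\cK_k}b_{kj}$ via Cram\'er--Wold gives the claim. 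Cross-covariances between the L and C blocks vanish because $\cK_{\rm L}\perp\cK_{\rm C}$.
\end{itemize}

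The main obstacle I expect is the conditional (almost-sure) formulation. One has to make sure that every strong-law limit used in the Lindeberg and covariance checks, including the remainder from \eqref{eq:rem}, holds on one common null-set complement, and that the $n^{-1/2}$-scaled deterministic remainder really vanishes under only the stated moments. I would handle this by listing the finitely many integrability conditions from Assumption~\ref{ass:basis} and applying the strong law to each.
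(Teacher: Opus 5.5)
Your proposal is correct and follows the paper's proof essentially step for step. You split $U_{kj}$ into a conditionally centred sum plus a deterministic mean, handle the mean with \eqref{eq:rem} and the strong law, reuse the covariance and Lyapunov checks of Step~5, get the mean and the block-diagonal covariance from the orthogonality of the pieces, and pass to $Q_{kj}$ through $\hat V_{kj}\to V_{kj}$. The only cosmetic difference is the bound on $\sum_k\lambda_{kj_k}$: you use projection contraction plus Pythagoras over the four pieces, whereas the paper applies Bessel's inequality to the mutually orthogonal projections, and the two are equivalent here.
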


\begin{proof}
Conditionally on the covariates, $U$ is a normalized sum of independent mean-zero vectors plus the mean
$n^{-1/2}\sum_i\tilde b(X_i)\{\pi_n(X_i)-p_0(X_i)\}$, which converges almost surely to $\ip{\tilde b_{kj}}{h}$ by
\eqref{eq:rem}, Assumption~\ref{ass:basis} and the strong law; this equals $\ip{\tilde b_{kj}}{h_k}$ because
$\tilde b_{kj}\in\cH_k$. Covariance and Lyapunov condition are as in Step~5 of the proof of Theorem~\ref{thm:insample}.
Cross-group covariances vanish because $\cH_k\perp\cH_{k'}$, and jointly normal uncorrelated blocks are independent.
$\hat V_{kj}\to V_{kj}$ almost surely, so $Q_{kj}\dto Z^{\top}V_{kj}^{-1}Z$ with $Z\sim N(\mu_{kj},V_{kj})$, a noncentral
$\chi^2_{r_{kj}}$ with $\lambda=\mu^{\top}V^{-1}\mu$, the squared norm of the $L^2(w)$ regression of $h_k$ on the components
of $\tilde b_{kj}$. Bessel's inequality gives the bound, because the $\Pi_{\tilde b_{kj_k}}h_k$ lie in mutually orthogonal
pieces. The Monte Carlo draws are the case $\pi_n=p_0$.
\end{proof}

\section{Strong familywise error control}

\begin{lemma}[Rank tests with an asymptotically exchangeable reference]\label{lem:rank}
Let $T_n$ be a statistic and, conditionally on the data, let $T_n^{(1)},\dots,T_n^{(M)}$ be independent draws from a law
$L_n^*$. Suppose $T_n\dto L$ and $d_{\rm BL}(L_n^*,L)\to0$ in probability, with $L$ continuous. Then
$\Pr[\{1+\#(m:T_n^{(m)}\ge T_n)\}/(M+1)\le\alpha]\to\lfloor\alpha(M+1)\rfloor/(M+1)\le\alpha$.
\end{lemma}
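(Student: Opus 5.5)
The plan is to reduce the event to a statement about $M+1$ asymptotically i.i.d.\ continuous variables and then use an exact rank count. The Monte Carlo $p$-value satisfies $p\le\alpha$ if and only if $\#(m:T_n^{(m)}\ge T_n)\le k-1$, where $k=\lfloor\alpha(M+1)\rfloor$. That is, $T_n$ must exceed all but at most $k-1$ of the $M$ draws, so it ranks among the top $k$ of the $M+1$ values. If $k=0$ the event is empty and the statement is trivial, so assume $k\ge1$. The event is a fixed function $\phi$ of the vector $(T_n,T_n^{(1)},\dots,T_n^{(M)})$. The argument has three steps: establish the joint limit of this vector, pass $\phi$ to the limit, and evaluate the limit probability by exchangeability.

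\emph{Step 1: the joint limit.} I will show that $(T_n,T_n^{(1)},\dots,T_n^{(M)})\dto(T,T^{(1)},\dots,T^{(M)})$, with the $M+1$ coordinates i.i.d.\ with law $L$. It suffices to check that for bounded Lipschitz $f_0,\dots,f_M$,
\[
\E\Big\{f_0(T_n)\prod_{m}f_m(T_n^{(m)})\Big\}\to\E f_0(T)\prod_m\E f_m(T),
\]
since products of bounded Lipschitz functions determine weak convergence on $\mathbb R^{M+1}$. Conditionally on the data the draws are independent with law $L_n^*$. Hence the left side equals $\E\{f_0(T_n)\prod_m\int f_m\,dL_n^*\}$. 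Each factor satisfies
\[
\Big|\int f_m\,dL_n^*-\int f_m\,dL\Big|\le\lVert f_m\rVert_{\rm BL}\,d_{\rm BL}(L_n^*,L)\to0
\]
in probability. Because all factors are bounded, the product equals $\prod_m\int f_m\,dL+o_p(1)$ uniformly. Dominated convergence, together with $\E f_0(T_n)\to\E f_0(T)$, then gives the claim.

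\emph{Step 2: continuous mapping.} The set $\{\phi=1\}$ has boundary contained in $\{\text{some two coordinates are equal}\}$. Under the i.i.d.\ continuous limit, ties have probability zero: $\Pr(T=T^{(m)})=\int\Pr(T^{(m)}=t)\,dL(t)=0$ because $L$ has no atoms. This null-boundary check is the only delicate point, and the continuity of $L$ is exactly what it requires. The portmanteau theorem then gives $\Pr(\phi=1)\to\Pr\{T\text{ is among the top }k\text{ of }M+1\}$.

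\emph{Step 3: the exchangeability count.} In the limit the coordinates are exchangeable and almost surely distinct. The rank of $T$ is therefore uniform on $\{1,\dots,M+1\}$, and the limit probability equals $k/(M+1)=\lfloor\alpha(M+1)\rfloor/(M+1)$. This is at most $\alpha$ by the definition of the floor.

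I expect Step 1 to be the main obstacle. It has to be carried out with the conditional form of the draws and only convergence in probability of $d_{\rm BL}$, so I would argue through the product expansion of Step 1 rather than by invoking a general conditional limit theorem.
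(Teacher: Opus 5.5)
Your proposal is correct and is essentially the paper's own proof. The paper also gets the joint limit $L^{\otimes(M+1)}$ from $\E\{f_0(T_n)\prod_m f_m(T_n^{(m)})\}=\E\{f_0(T_n)\prod_m\int f_m\,dL_n^*\}$ with bounded Lipschitz $f_m$, then uses that the rank of the first coordinate is continuous off the null set of ties and uniform on $\{1,\dots,M+1\}$; you simply spell out what the paper compresses (the reduction to the top $k$, the case $k=0$, the dominated-convergence and portmanteau steps).
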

\begin{proof}
For bounded Lipschitz $f_0,\dots,f_M$,
$\E\{f_0(T_n)\prod_mf_m(T_n^{(m)})\}=\E\{f_0(T_n)\prod_m\int f_m\,dL_n^*\}$, and $\int f_m\,dL_n^*\to\int f_m\,dL$ in
probability and boundedly; hence $(T_n,T_n^{(1)},\dots,T_n^{(M)})\dto L^{\otimes(M+1)}$. The rank of the first coordinate is
continuous off the set of ties, which has probability zero under the continuous product law, and under
$L^{\otimes(M+1)}$ it is uniform on $\{1,\dots,M+1\}$.
\end{proof}

\begin{lemma}[Continuity of the limit law]\label{lem:cont}
Let $Z\sim N(\mu,\Sigma)$ collect the limits of the members in a set $S$, with each block $V_{kj}$ positive definite, and
let $T(Z)$ be the Cauchy combination \eqref{eq:TS} of the member $p$-values $1-F_{r}(Z_{kj}^{\top}V_{kj}^{-1}Z_{kj})$. Then the
law of $T(Z)$ has no atoms.
\end{lemma}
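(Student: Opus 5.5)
We need to show that $T(Z)$ has no atoms, that is, $\Pr\{T(Z)=t\}=0$ for every $t$. The plan is to exhibit $T$ as a nonconstant real-analytic function of $Z$ on an open set of full Lebesgue measure. Then we use the standard fact that the zero set of a nonconstant real-analytic function on a connected open set has Lebesgue measure zero. This suffices when $Z$ has a density. Where $\Sigma$ is singular, we reduce to a density on a subspace.

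\emph{Step 1: reduction to a density.} Since $\Sigma$ may be singular (members of one group can share directions), write $Z=\mu+AW$ with $W\sim N(0,I_q)$, where $q=\mathrm{rank}\,\Sigma\ge1$ and $A$ has full column rank. Then $T(Z)=\tau(W):=T(\mu+AW)$, and $W$ has a positive density on $\mathbb R^q$. It suffices to show that $\tau-t$ vanishes on a Lebesgue-null set for each $t$.

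\emph{Step 2: analyticity.} Each member's quadratic form $q_{kj}(w)=Z_{kj}^{\top}V_{kj}^{-1}Z_{kj}$ is a polynomial in $w$. The map $x\mapsto1-F_r(x)$ is real-analytic on $(0,\infty)$, with values in $(0,1)$. The map $P\mapsto\tan\{(\tfrac12-P)\pi\}$ is analytic on $(0,1)$. Hence each summand of \eqref{eq:TS} is analytic on the open set $O=\{w:q_{kj}(w)>0\ \forall (k,j)\}$. Because $V_{kj}$ is positive definite, $q_{kj}(w)=0$ exactly when $Z_{kj}=0$. This is an affine condition, and it is a proper one: the block $Z_{kj}$ has covariance $V_{kj}\ne0$, so $A_{kj}\neq0$. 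The complement of $O$ is therefore a finite union of proper affine subspaces, which is null. For $r=2$ one can write $1-F_2(x)=e^{-x/2}$, which is analytic on all of $\mathbb R$, but we do not need this.

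\emph{Step 3: nonconstancy, the main obstacle.} $O$ need not be connected, since removing a codimension-one hyperplane (the case $r_{kj}=1$) splits space. So we must check that $\tau$ is nonconstant on each connected component of $O$. Fix a component $C$ and a point $w_0\in C$. Pick a member $(k,j)$ and move along a ray inside $C$ on which $\lVert Z_{kj}\rVert$ grows without bound. Such a ray exists because $C$ is a cone-like region (a polyhedral complement of hyperplanes through $-A^{+}\mu$-type points) and therefore contains unbounded rays. Along that ray $P_{kj}\to0$, so its tangent term tends to $+\infty$. The other terms each stay bounded below, or also tend to $+\infty$, because $\tan\{(\tfrac12-P)\pi\}\to-\infty$ only if $P\to1$, that is, only if $Z_{k'j'}\to0$. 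On an unbounded ray, $Z_{k'j'}$ is affine, so it either stays bounded away from zero eventually or is identically constant. If it is constant it is nonzero, because we are in $O$, and it contributes a fixed finite value. Hence $\tau\to+\infty$ along the ray, and $\tau$ is not constant on $C$.

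This step is where care is needed. If the ray direction happens to lie in $\ker A_{kj}$, we should instead choose the direction $v$ with $A_{kj}v\ne0$. Moving $w_0+sv$ for large $s$ may leave $C$ only by crossing some hyperplane. A simpler fallback avoids this: it is enough to note that $\tau$ is nonconstant on some small ball in $C$. On $C$, the gradient of $\tau$ is a sum of positive multiples of $\nabla q_{kj}$ (the map $x\mapsto\tan\{(\tfrac12-1+F_r(x))\pi\}$ is strictly increasing). At a point where only $Z_{kj}$ is moved radially outward, $\nabla q_{kj}\cdot v>0$, while the other $q$'s are moved in some way. We will pick $v=A^{+}$-preimage of $Z_{kj}$ in block $(k,j)$ and argue via analyticity along the line $s\mapsto\tau(w_0+sv)$, restricted to the interval of $s$ in $C$. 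The first try is the ray argument. If the geometry of components is troublesome, the fallback is to show directly that $\nabla\tau\not\equiv0$ on $C$.

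\emph{Step 4: conclusion.} On each component $C$, the set $\{\tau=t\}\cap C$ is the zero set of the nonconstant analytic function $\tau-t$, and so it is null. There are finitely many components, since the complement is a finite hyperplane arrangement, and $O^c$ is null. Therefore $\Pr\{T(Z)=t\}=0$ for every $t$.
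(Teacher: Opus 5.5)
Your Steps 1, 2 and 4 follow the same route as the paper: a positive density on the range of $\Sigma$, analyticity off finitely many proper subspaces, and null level sets of a nonconstant analytic function on each component. The gap is in Step 3. Your claim that every component $C$ of $O$ contains an unbounded ray needs all the zero sets $\{w:\mu_{kj}+A_{kj}w=0\}$ to pass through the common point $-A^{+}\mu$. That holds if and only if $\mu\in\mathrm{range}(A)=\mathrm{range}(\Sigma)$, which you neither state nor prove.

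With $\mu$ arbitrary, as in your Step 1, the claim fails. The zero sets of the one-column members then form a non-central hyperplane arrangement, which can have bounded regions. For example, three lines $\mu_i+a_i^{\top}w=0$ in $\mathbb{R}^2$ with pairwise non-parallel $a_i$ and $\mu\notin\mathrm{range}(A)$ bound a triangle, and a bounded component contains no ray. Your fallback does not repair this. Pushing $Z_{kj}$ outward moves the other quadratic forms in an uncontrolled way, so no sign of $\nabla\tau\cdot v$ follows, and that argument remains a plan rather than a proof.

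The missing idea is the one the paper's proof starts with: the limit mean lies in the range of the covariance. Indeed $\mu=\E\{w_0\tilde b h\}=\Sigma a$, with $a$ the coefficients of the $L^2(w)$ regression of $h$ on the members' components (the normal equations). In your notation $\mu=Ac$ with $c=A^{\top}a=A^{+}\mu$, so $Z=A(c+W)$. Every zero set then passes through $w=-c$, and every component of $O$ is an open cone with apex $-c$. Along $w=-c+t(w_0+c)$, $t\ge1$, each quadratic form equals $t^2q_{kj}(w_0)\to\infty$, so every Cauchy coordinate tends to $+\infty$ and so does $\tau$. The infinitesimal version is the gradient argument you were looking for: $\nabla q_{kj}(w)\cdot(w+c)=2q_{kj}(w)>0$ for every member, so $\nabla\tau\cdot(w+c)>0$.

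If you wanted the lemma for arbitrary $\mu$, you could instead treat a bounded component through its boundary. At a generic point of one of its facets, the one-column members vanishing there have $P\to1$ and Cauchy coordinate tending to $-\infty$, while all other coordinates stay finite, so $\tau\to-\infty$. Unbounded components are then handled by your ray argument along a recession direction, which works because $A$ has full column rank.
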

\begin{proof}
The mean lies in the range of the covariance: $\mu=\E\{w_0\tilde bh\}=\Sigma a$, with $a$ the coefficients of the
$L^2(w)$ regression of $h$ on the members' components. So $Z$ lives on the linear subspace $\mathrm{range}(\Sigma)$, where it
has a positive density with respect to Lebesgue measure; the joint covariance may be singular because members of
one group share directions. Each block $Z_{kj}$
is a non-degenerate linear image of $Z$ on this set, so $\{Z_{kj}=0\}$ is a proper linear subspace, of measure zero.
Off these sets every member $p$-value lies in $(0,1)$ and $T$ is real-analytic. The complement of the finitely many
subspaces is open with finitely many connected components (two for a single direction, when $r=1$). On each component
$T$ is non-constant: along a ray $z\mapsto tz$, $t\to\infty$, every quadratic form tends to infinity and $T\to+\infty$. The
level sets of a non-constant real-analytic function on a connected open set have measure zero, so $\Pr\{T(Z)=c\}=0$ for
every $c$.
\end{proof}

\begin{lemma}[The parametric bootstrap under local alternatives]\label{lem:boot}
Under Assumptions~\ref{ass:basis}--\ref{ass:fit}, let the $L$ and $C$ members be computed on bootstrap data
$Y^*_i\sim\mathrm{Bernoulli}(\hat p_i)$ with the refitted $\hat\beta^*$. Then, in probability, their conditional law given the data
converges to $N(0,\Sigma^{\rm in})$, the in-sample limit of Theorem~\ref{thm:law} with $h=0$.
\end{lemma}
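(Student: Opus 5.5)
The plan is to treat the bootstrap world as a triangular array of logistic models with true parameter $\hat\beta$. Given the data, $Y^*_i\sim\mathrm{Bernoulli}(\hat p_i)$ is exactly the working model at $\beta=\hat\beta$, so the bootstrap statistics are the in-sample statistics of Theorem~\ref{thm:insample}(a) with $h=0$ and with the fixed parameter $\beta_0$ replaced by the data-dependent centre $\hat\beta$. Since the target is convergence in probability of a conditional law, I will work along subsequences. By Step~1 of Theorem~\ref{thm:insample}, $\hat\beta-\beta_0=O_p(n^{-1/2})$ under the local alternatives \eqref{eq:local}. By the strong law, together with the envelope conditions of Assumptions~\ref{ass:basis}--\ref{ass:fit}, the empirical moments $n^{-1}\sum_i w_0\tilde b\tilde b^{\top}$, $n^{-1}\sum_i w_0\tx\tx^{\top}$, their $2+\delta$ versions, and the sup-envelope of $\partial_\theta b$ converge almost surely. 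Every subsequence therefore has a further subsequence along which, almost surely, $\hat\beta\to\beta_0$, these moments converge, and $\hat\theta\to\theta_0$. I then argue conditionally along such a sequence, with $X_i$ and $\hat p_i$ fixed.

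The argument has three steps. First, the bootstrap MLE is asymptotically linear around $\hat\beta$: $\sqrt n(\hat\beta^*-\hat\beta)=\hat I^{-1}n^{-1/2}\sum_i\tx_i(Y^*_i-\hat p_i)+o_{p^*}(1)$, with $\hat I=n^{-1}\sum_i\hat w_i\tx_i\tx_i^{\top}\to I_0$. This follows from concavity of the log likelihood and a Newton step, using the Lindeberg CLT for the score under the Bernoulli$(\hat p_i)$ law and $|\hat w_i-w_0(X_i)|\le 2c_2\lVert\tx_i\rVert\lVert\hat\beta-\beta_0\rVert$. Second, I repeat Steps~2--4 of Theorem~\ref{thm:insample}(a) in the bootstrap world. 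Write $Y^*_i-\hat p^*_i=(Y^*_i-\hat p_i)-(\hat p^*_i-\hat p_i)$, Taylor expand the second term around $\hat\beta$, and replace the estimated bases $\hat b^*$, evaluated at $\hat\theta^*$, by $b(\cdot;\hat\theta)$ using $\hat\theta^*-\hat\theta=O_{p^*}(n^{-1/2})$ and the differentiability envelope. The linear correction is $\{n^{-1}\sum_i\hat w_i b\tx^{\top}\}\hat I^{-1}n^{-1/2}\sum_i\tx_i(Y^*_i-\hat p_i)$, exactly as in Step~3. This gives
\[
n^{-1/2}\sum_i\hat b^*(X_i)(Y^*_i-\hat p^*_i)=n^{-1/2}\sum_i\check b_n(X_i)(Y^*_i-\hat p_i)+o_{p^*}(1),
\]
where $\check b_n$ is $b(\cdot;\hat\theta)$ projected off $\cX$ in the empirical $\hat w$-weighted inner product, and $\check b_n\to\Pi_{\cK_k}b$ in $L^2(w)$. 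Third, I apply Lindeberg--Feller conditionally. The summands are independent with mean zero and covariance $\check b_n\check b_n^{\top}\hat p_i(1-\hat p_i)$; their average converges to $\E\{w_0\tilde b\tilde b^{\top}\}$, which is $\Sigma^{\rm in}$, and the Lyapunov condition follows from the $2+\delta$ moments. Joint convergence over all members comes from the Cramér--Wold device. Because the limit $N(0,\Sigma^{\rm in})$ is the same along every sub-subsequence, the bounded-Lipschitz distance between the conditional law and $N(0,\Sigma^{\rm in})$ tends to $0$ in probability.

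The main obstacle is the uniformity in the second step. All remainders must be controlled conditionally, under a centre $\hat\beta$ that moves with $n$, and with bases whose knots and projection coefficients are re-estimated on each bootstrap draw. This includes the boundary knots at the sample range of $\hat\eta^*$, which I would handle as in Step~4 by centring at the basis with knots at the observed range. I would first try to reduce every remainder to products of $\lVert\hat\beta^*-\hat\beta\rVert$ or $\lVert\hat\theta^*-\hat\theta\rVert$ with empirical envelope averages that are almost surely bounded. That makes each remainder $O_{p^*}(n^{-1/2})$ times $O(1)$, almost surely along the chosen subsequence, so no separate equicontinuity argument should be needed beyond the finite-dimensional one already used.
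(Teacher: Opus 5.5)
Your proposal is correct and follows the paper's own route: the bootstrap world is the logistic model at $\hat\beta$, and Steps~1--5 of the proof of Theorem~\ref{thm:insample}(a) are rerun conditionally on the data with $h=0$ and $\beta_0$ replaced by $\hat\beta$, so that the refit yields exactly the empirically projected basis $\check b_n$. The only difference is the technical device, and it is an equivalent one: the paper requires the expansions to hold uniformly for $\beta$ near $\beta_0$, through a uniform maximum-likelihood expansion and uniform laws of large numbers as in Stute et al.\ (1998), whereas you pass to almost surely convergent sub-subsequences and argue along a fixed triangular array; note only that the basis-replacement term against $Y^*_i-\hat p_i$ still needs the mean-zero first-order expansion of Step~4 rather than a bare envelope bound, and that your concavity argument also gives what the paper states separately, namely that failed refits have vanishing probability.
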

\begin{proof}
The bootstrap world is the logistic model at $\hat\beta$, with $\hat\beta\to\beta_0$ in probability. Steps 1--5 of the proof of
Theorem~\ref{thm:insample} hold conditionally on the data, with $h=0$ and $\beta_0$ replaced by $\hat\beta$, provided the
expansions are uniform for $\beta$ in a neighborhood of $\beta_0$. They are: the maximum-likelihood expansion holds
uniformly on compact parameter sets for the logistic model with $I_0$ positive definite; the averages
$n^{-1}\sum_iw_\beta\tilde b\tx^{\top}$ and $n^{-1}\sum_i\partial_\theta b$ obey a uniform law of large numbers on the
neighborhood under the envelope conditions of Assumptions~\ref{ass:basis}--\ref{ass:fit}; and the conditional
Lyapunov condition holds with the same moments. This is the argument of Stute et al.\ (1998) for the parametric
bootstrap of model checks, here for finitely many directions. The software discards bootstrap draws whose refit fails;
under the assumptions the probability of a failure tends to zero, so this does not change the limit.
\end{proof}

\begin{theorem}[Strong familywise error control]\label{thm:fwer}
Let $\mathcal T=\{k:H_k\text{ true}\}$ (external validation) or $\{k:H^{\rm in}_k\text{ true}\}$ (in-sample). Under the local
alternatives \eqref{eq:local} and Assumption~\ref{ass:basis},
\[
\limsup_{n\to\infty}\Pr(\text{some }k\in\mathcal T\text{ is named})\le\frac{\lfloor\alpha(M+1)\rfloor}{M+1}\le\alpha
\]
in external validation, whatever the other parts of $h$ are. Under Assumption~\ref{ass:fit} as well, the same bound with
$B$ in place of $M$ holds in-sample.
\end{theorem}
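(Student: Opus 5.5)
The plan is the standard closed-testing reduction followed by an asymptotic level statement for one intersection test. If some $k\in\mathcal T$ is named, then by the definition of naming $p_S\le\alpha$ for every $S\ni k$, in particular for $S=\mathcal T$ itself (assume $\mathcal T\neq\emptyset$; otherwise there is nothing to prove). Hence
\[
\Pr(\text{some }k\in\mathcal T\text{ is named})\le\Pr(p_{\mathcal T}\le\alpha),
\]
and it suffices to show $\limsup_n\Pr(p_{\mathcal T}\le\alpha)\le\lfloor\alpha(M+1)\rfloor/(M+1)$. This step uses only the logic of Marcus et al.\ (1976) and needs no independence or continuity. The difficulty is that $\mathcal T$ is defined through the unknown $h$, so the bound must hold for the single set $S=\mathcal T$ whatever the pieces $h_k$, $k\notin\mathcal T$, are.

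The key is Theorem~\ref{thm:law}. For $k\in\mathcal T$, $H_k$ says $\mu_{kj}=\ip{\tilde b_{kj}}{h_k}=0$ for all $j$. So the joint limit of $(U_{kj})_{k\in\mathcal T,j}$ is $N(0,\Sigma_{\mathcal T})$, with $\Sigma_{\mathcal T}$ the block of $\Sigma$ indexed by members of groups in $\mathcal T$. The other groups enter $T_{\mathcal T}$ not at all, since $T_{\mathcal T}$ averages only over members of groups in $\mathcal T$. Their non-zero means therefore do not matter, and this is where "whatever the other parts are'' comes from. Using $\hat V_{kj}\to V_{kj}$ almost surely and the continuous mapping theorem, $T_{\mathcal T}\dto L:=\mathrm{law}\,T(Z)$ with $Z\sim N(0,\Sigma_{\mathcal T})$. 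The map $z\mapsto T(z)$ is continuous off the null sets $\{z_{kj}=0\}$, where a $p$-value equals $1$ and $\tan$ is finite anyway. In fact $P\in(0,1]$ gives $\tan\{(1/2-P)\pi\}\in[-\infty,\infty)$ only at $P=1$; I would check that this event has limit probability zero and handle it by Lemma~\ref{lem:cont}. The same theorem says the Monte Carlo draws, conditionally on the covariates and for almost every covariate sequence, have the same limit with $\mu=0$. Hence $d_{\rm BL}(L_n^*,L)\to0$ almost surely, and so in probability. Lemma~\ref{lem:cont} applied with $\mu=0$ gives that $L$ has no atoms. Lemma~\ref{lem:rank} then yields $\Pr(p_{\mathcal T}\le\alpha)\to\lfloor\alpha(M+1)\rfloor/(M+1)$. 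The observed statistic is considered under the local alternative and unconditionally, which is fine since its conditional limit holds almost surely and dominated convergence passes to the unconditional one.

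In-sample, I would repeat the argument with $\mathcal T=\{k:H^{\rm in}_k\}$ over the L and C groups and bases $\Pi_{\cK_k}b_{kj}$. Theorem~\ref{thm:insample}(a), through the in-sample clause of Theorem~\ref{thm:law}, gives the observed members' limit $N(\mu^{\rm in},\Sigma^{\rm in})$ with $\mu^{\rm in}_{kj}=\ip{\Pi_{\cK_k}b_{kj}}{h}$. This mean vanishes for $k\in\mathcal T$ by definition of $H^{\rm in}_k$. Lemma~\ref{lem:boot} supplies the bootstrap limit $N(0,\Sigma^{\rm in})$ in probability, and the same two lemmas finish with $B$ in place of $M$.

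The main obstacle is justifying the application of Lemma~\ref{lem:rank} jointly. That lemma needs the reference law to converge to exactly the same $L$ as the observed statistic restricted to $\mathcal T$, and $L$ to be continuous. The covariance $\Sigma_{\mathcal T}$ must agree between data and reference. Externally this uses $|\pi_n(1-\pi_n)-w_0|=O(n^{-1/2}|h|)$. In-sample it uses the uniformity in $\beta$ invoked in Lemma~\ref{lem:boot}, together with the estimated-basis Step~4, which must hold in the bootstrap world as well. Continuity also needs care at the boundary $P=1$ and with a singular $\Sigma_{\mathcal T}$, which is exactly what Lemma~\ref{lem:cont} is for.
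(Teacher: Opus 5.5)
Your proposal is correct and follows the paper's proof essentially step by step. The steps are the closure reduction to $\{p_{\mathcal T}\le\alpha\}$, the null limit $N(0,\Sigma_{\mathcal T})$ of the members in $\mathcal T$ from Theorem~\ref{thm:law}, the same limit for the Monte Carlo draws (in-sample, the bootstrap draws via Lemma~\ref{lem:boot}), continuity of the limit law by Lemma~\ref{lem:cont}, and Lemma~\ref{lem:rank}. One small slip: at $P_{kj}=1$ the Cauchy coordinate is $\tan(-\pi/2)=-\infty$, not finite, but as you then note this happens only on the null set $\{z_{kj}=0\}$, which is the paper's ``continuous off the null set where some quadratic form is zero.''
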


\begin{proof}
If $\mathcal T$ is empty there is nothing to prove. Otherwise a group $k\in\mathcal T$ is named only if $p_S\le\alpha$ for every
$S\ni k$, in particular for $S=\mathcal T$, so the event is contained in $\{p_{\mathcal T}\le\alpha\}$ (the closure principle).
$T_{\mathcal T}$ uses only members of groups in $\mathcal T$. By Theorem~\ref{thm:law} their joint limit is
$N(0,\Sigma_{\mathcal T})$, the same as for the reference draws, because their means vanish by the definition of the
hypotheses and the covariance does not depend on $h$. The map from the members' statistics to $T_{\mathcal T}$ is continuous
off the null set where some quadratic form is zero, so $T_{\mathcal T}$ and its reference copies share a limit law $L$,
continuous by Lemma~\ref{lem:cont}. Externally, the reference draws are exactly the null at sample size $n$; in-sample,
Lemma~\ref{lem:boot} gives their limit. Lemma~\ref{lem:rank} completes the proof.
\end{proof}

\begin{remark}
The proof uses neither the orthogonality of the pieces nor any property of the Cauchy combination: \emph{any}
statistic calibrated by these references gives a valid closed procedure, and the Cauchy combination is chosen only
for power. Orthogonality is what gives the hypotheses their meaning: externally $H_k$ is a statement about the $k$th piece of
the misfit and nothing else; in-sample, see Proposition~\ref{prop:alias}(iii).
\end{remark}

\section{Consistent localization and the right rung of the ladder}

\begin{theorem}[Consistent localization]\label{thm:consist}
Assume $\alpha\ge1/(M+1)$ (or $\alpha\ge1/(B+1)$ in-sample). The statements are written for external validation; in-sample
read $\Pi_{\cK_k}b_{kj}$ for $\tilde b_{kj}$, $H^{\rm in}_k$ for $H_k$, and, for fixed departures, the hypothesis
$\E\{\Pi^{*}_{\cK_k}b_{kj}(X)(\pi(X)-p^*(X))\}=0$, with the projections taken in $L^2(w^*)$ at the pseudo-true fit.
\begin{enumerate}[label=(\alph*),leftmargin=*,itemsep=2pt]
\item (Growing local departures.) Let $h=c\,h^{0}$ with $h^0$ fixed. If $\ip{\tilde b_{kj}}{h^0_k}\neq0$ for some member $j$ of
  group $k$, then $\lim_{c\to\infty}\liminf_{n\to\infty}\Pr(k\text{ named})=1$. If $H_k$ holds, then
  $\limsup_{n}\Pr(k\text{ named})\le\alpha$ for every $c$.
\item (Fixed departures.) Let $\pi$ be fixed and suppose $\ip{\tilde b_{kj}}{g}\neq0$ for some $j$ (external validation). Then
  $\Pr(k\text{ named})\to1$ as $n\to\infty$, up to the numerical truncation of Remark~\ref{rem:clip}. In-sample, the same holds
  with $p_0$, $w_0$ and the projections replaced by those of the pseudo-true fit of Theorem~\ref{thm:insample}(c).
\end{enumerate}
\end{theorem}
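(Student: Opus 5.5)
The plan is to reduce naming of group $k$ to rejection of all closed tests $S\ni k$, and to show that each such $p_S$ falls below $\alpha$ with high probability. Since $\alpha\ge 1/(M+1)$, it suffices that $T_S$ exceeds every reference draw $T_S^{(m)}$. That event gives $p_S=1/(M+1)\le\alpha$. The statement has finitely many sets $S$, so a union bound over $S\ni k$ finishes once each $p_S\to 1/(M+1)$ in the relevant sense.

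\emph{Part (a).} By Theorem~\ref{thm:law}, under $h=c\,h^0$ the member $(k,j)$ with $\ip{\tilde b_{kj}}{h^0_k}\neq0$ has $Q_{kj}\dto\chi^2_{r}(c^2\lambda^0)$ with $\lambda^0>0$. The other members have some limit laws, and the reference draws converge to the null limit, which does not depend on $c$. For the Cauchy combination I use its structure directly. Write
\[
T_S=\frac{1}{J_S}\Bigl[\tan\{(\tfrac12-P_{kj})\pi\}+\sum_{\text{others}}\tan\{(\tfrac12-P)\pi\}\Bigr].
\]
The first term tends to $+\infty$ in probability as $c\to\infty$, uniformly in the limit. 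The difficulty is that the other terms can be very negative, since $P$ near $1$ gives a large negative tangent. I handle this by noting that the joint limit is Gaussian with fixed covariance. The other members' $Q$ are quadratic forms of a Gaussian whose means also scale with $c$. Their $P$-values either stay tight, because their $Q$ is tight and bounded away from $0$ with high probability, or go to $0$, which helps. A negative term is large only when $Q$ is near $0$, an event whose probability is controlled by the continuous density of Lemma~\ref{lem:cont}. So $T_S\ge c'$ with probability at least $1-\epsilon$ for $c$ large, while $\max_m T^{(m)}_S$ is tight under the null limit because $M$ is fixed. Taking $\liminf_n$ and then $c\to\infty$ gives probability $1$ for each $S$.

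The second claim of (a) is Theorem~\ref{thm:fwer} applied with $\mathcal T\ni k$, because naming $k$ requires rejecting $S=\mathcal T$.

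\emph{Part (b).} Under fixed $\pi$, the strong law gives $n^{-1/2}U_{kj}=n^{-1}\sum_i\tilde b_{kj}(Y_i-p_0)\to\ip{\tilde b_{kj}}{g}\neq0$ almost surely, and $\hat V_{kj}\to V_{kj}$. Hence $Q_{kj}/n\to\ip{\tilde b}{g}^{\top}V^{-1}\ip{\tilde b}{g}>0$, so $P_{kj}\to0$ and its tangent term grows like $\exp(\text{const}\cdot n)$. Here the numerical clipping of $P$ in Remark~\ref{rem:clip} enters: the term grows up to the clip ceiling. The reference draws are exact Monte Carlo nulls, so each $T_S^{(m)}$ is $O_p(1)$. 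The other observed members are bounded below: each $\tan$ term is at least $-\tan\{(\frac12-P_{\max})\pi\}$ for the clipped $P_{\max}<1$, or else is controlled because $Q=0$ has vanishing probability. Thus $T_S\to\infty$ and every $S\ni k$ rejects.

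For the in-sample case I repeat the argument with the pseudo-true fit of Theorem~\ref{thm:insample}(c). The needed ingredients are that $\hat\beta\to\beta^*$, that the score sums divided by $n$ converge to $\E\{\Pi^*_{\cK_k}b\,(\pi-p^*)\}$ by a uniform law of large numbers under Assumption~\ref{ass:fit}, and that the bootstrap statistics, generated at $\hat p\to p^*$, stay $O_p(1)$, as in the uniform version of Lemma~\ref{lem:boot}.

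\emph{Main obstacle.} The hardest step is controlling the non-signal members inside $T_S$. Very negative Cauchy terms (from $P$ near $1$) could in principle offset the signal. I would first try the density argument of Lemma~\ref{lem:cont} to bound $\Pr(Q\le\eta)$ uniformly in $c$. If that fails, I would rely on the clipping of $P$, which makes every term bounded below.
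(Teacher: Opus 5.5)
Your proposal is correct and follows the paper's own argument. Naming $k$ reduces to $T_S>\max_m T_S^{(m)}$ for each of the finitely many $S\ni k$; the signal member's Cauchy coordinate diverges; every other coordinate is either tight (its $Q$ has a limit law with no atom at $0$, uniformly in $c$ because the noncentral $\chi^2$ is stochastically increasing in its noncentrality) or diverges to $+\infty$; and the reference maxima stay tight.

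Drop the clipping fallback, though. The clip floor $-\cot(10^{-12}\pi)$ is exactly minus the ceiling of the clipped signal coordinate, so it cannot show that the signal dominates, and with clipping $T_S$ does not tend to $\infty$ at all. As in the paper, prove the result for the untruncated statistic by tightness alone, then treat truncation as an event of probability $O(10^{-12}JM)$ (Remark~\ref{rem:clip}).
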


\begin{proof}
(a) The second statement is Theorem~\ref{thm:fwer}. For the first, the limit of Theorem~\ref{thm:law} gives member $(k,j)$ the
law $\chi^2_r(c^2\lambda^0)$ with $\lambda^0>0$, so as $c\to\infty$, $P_{kj}\to0$ in probability and its Cauchy coordinate
$\tan\{(\frac12-P_{kj})\pi\}\sim1/(\pi P_{kj})\to+\infty$. Every other member of a set $S\ni k$ has either a fixed limit
law, so a tight coordinate, or a noncentrality growing like $c^2$, so a coordinate tending to $+\infty$; no coordinate
tends to $-\infty$. Hence $T_S\to+\infty$ in probability for every $S\ni k$, while the reference law of $T_S$ is the fixed
null law, continuous by Lemma~\ref{lem:cont}. So $\Pr(T_S>\max_mT_S^{(m)})\to1$ and $p_S=1/(M+1)\le\alpha$ for each of
the finitely many $S\ni k$.

(b) $n^{-1/2}U_{kj}\to\ip{\tilde b_{kj}}{g}\neq0$ almost surely, so $Q_{kj}\ge\kappa n$ for some $\kappa>0$ with probability tending
to one. Members with $\ip{\tilde b}{g}=0$ have $U=O_p(1)$ (bounded variance), hence tight coordinates; members with non-zero
mean diverge to $+\infty$. The reference law converges and is tight. The rest is as in (a). In-sample,
$\hat\beta\to\beta^*$ and the argument applies around the pseudo-true fit.
\end{proof}

\begin{corollary}[The decision rule]\label{cor:rung}
Order the groups I $<$ S $<$ L $<$ C and let $k^*$ be the highest group with a detectable component,
$\ip{\tilde b_{k^*j}}{h^0_{k^*}}\ne0$ for some $j$. Under local alternatives $h=c\,h^0$,
\[
\lim_{c\to\infty}\liminf_{n\to\infty}\Pr(\text{highest named group}=k^*)\ge1-\alpha .
\]
\end{corollary}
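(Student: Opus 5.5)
The event that the highest named group equals $k^*$ is the intersection of two events: $E_1=\{k^*\text{ named}\}$ and $E_2=\{\text{no group }k>k^*\text{ named}\}$. We bound $\Pr(E_1^c)$ and $\Pr(E_2^c)$ separately, using Theorem~\ref{thm:consist}(a) and Theorem~\ref{thm:fwer}, and combine them with $\Pr(E_1\cap E_2)\ge 1-\Pr(E_1^c)-\Pr(E_2^c)$.

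\emph{Step 1: $k^*$ is found.} By the choice of $k^*$ there is a member $j$ with $\ip{\tilde b_{k^*j}}{h^0_{k^*}}\neq0$. Theorem~\ref{thm:consist}(a) then gives $\lim_{c\to\infty}\liminf_n\Pr(E_1)=1$, so $\lim_{c\to\infty}\limsup_n\Pr(E_1^c)=0$.

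\emph{Step 2: nothing above $k^*$ is named.} Let $k>k^*$. By maximality of $k^*$, $\ip{\tilde b_{kj}}{h^0_k}=0$ for every member $j$. Since $h_k=c\,h^0_k$, the same holds for $h$ at every $c$, so $H_k$ is true. Hence the set $\mathcal A=\{k:k>k^*\}$ is contained in the set $\mathcal T$ of true hypotheses, for every $c$. Theorem~\ref{thm:fwer} holds whatever the other parts of $h$ are, in particular for the large component $c\,h^0_{k^*}$ and any components below it. It gives
\[
\limsup_n\Pr(E_2^c)\le\limsup_n\Pr(\text{some }k\in\mathcal T\text{ named})\le\alpha \quad\text{for each }c.
\]
The one point to check is that this bound is uniform in $c$. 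It is, because the proof of Theorem~\ref{thm:fwer} reduces the event to $\{p_{\mathcal T}\le\alpha\}$. The limit law of $T_{\mathcal T}$ is then the null law $N(0,\Sigma_{\mathcal T})$ pushed through the combination, and this law does not involve $h$ at all.

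\emph{Step 3: combine.} For each fixed $c$,
\[
\liminf_n\Pr(E_1\cap E_2)\ge1-\limsup_n\Pr(E_1^c)-\limsup_n\Pr(E_2^c)\ge 1-\alpha-\limsup_n\Pr(E_1^c).
\]
Letting $c\to\infty$ and using Step~1 gives the bound $1-\alpha$.

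The main obstacle is the order of limits. The error control must be applied for each fixed $c$ before $c\to\infty$; Step~2's remark that the familywise bound does not depend on $c$ is what allows the two limits to be taken in this order. The case where $\mathcal A$ is empty ($k^*={\rm C}$) is trivial, since then $\Pr(E_2^c)=0$. Groups below $k^*$ need no control, because naming them does not change the highest named group.
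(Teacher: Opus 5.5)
Your proof is correct and follows the paper's own argument. You split the event into $k^*$ being named (Theorem~\ref{thm:consist}(a)) and no group above $k^*$ being named, these groups being true hypotheses by maximality (Theorem~\ref{thm:fwer}), and combine the two bounds for each fixed $c$ before letting $c\to\infty$. The uniformity in $c$ that you flag in Step~2 needs no separate check, because Theorem~\ref{thm:fwer} gives the same constant bound $\alpha$ for every $c$.
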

\begin{proof}
The highest named group equals $k^*$ unless $k^*$ is not named, which has limit probability zero by
Theorem~\ref{thm:consist}(a), or some $k>k^*$ is named; all such $k$ are true hypotheses, and by Theorem~\ref{thm:fwer}
the probability that any of them is named is at most $\alpha$ in the limit.
\end{proof}

\noindent The guarantee is local. For fixed departures the higher null groups can be named more often than $\alpha$
(Proposition~\ref{prop:beyond}(iii)), and a component orthogonal to every basis of its group is not detectable at all.

\section{Other prediction models, model updating, planning and optimality}

\begin{corollary}[Any prediction model]\label{cor:anymodel}
In external validation the predictions $p_0(x)$ may come from any model fitted on other data (a neural network, a
random forest, a points score, a published equation), and $X$ may be any covariates recorded in the validation
sample, whether or not the model used them. Let $W=(X,\eta_0)$ with $\eta_0=\mathrm{logit}\,p_0$ (the score may depend on inputs
not in $X$), let $\pi=\Pr(Y=1\mid W)$, and build the pieces of Definition~\ref{def:pieces} in $L^2(w)$ of functions of $W$,
with $\cG$ the functions of $\eta_0$ and $\cX=\mathrm{span}\{1,X\}$. Then Theorems~\ref{thm:law}--\ref{thm:consist},
Corollary~\ref{cor:rung} and Proposition~\ref{prop:beyond} hold with the same proofs, the null being $\E(Y\mid W)=p_0$, and the
Monte Carlo reference is exact under that null for any sample size.
\end{corollary}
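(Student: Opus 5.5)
The plan is to re-read the proofs of Theorems~\ref{thm:law}--\ref{thm:consist} and Corollary~\ref{cor:rung} and check which properties of $p_0$ and $X$ they actually use. None of them uses that $\eta_0$ is linear in $x$, nor that the model's inputs coincide with $X$. First I set up the new Hilbert space. Work in $L^2(w)$ of functions of $W=(X,\eta_0)$, with $w_0=p_0(1-p_0)$ a function of $\eta_0\subset W$. Take $\cG$ as the closed subspace of functions of $\eta_0$. Then $\mathrm{span}\{1\}\subset\mathrm{span}\{1,\eta_0\}\subset\cG\subset L^2(w)$ is still a nested chain of closed subspaces, so the four pieces of Definition~\ref{def:pieces} are again mutually orthogonal and sum to the whole space. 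The covariate bases are functions of $X$, hence of $W$, and their projections $\tilde b_{kj}$ lie in the intended pieces. Conditioning on $W$ instead of $X$, the reading identity \eqref{eq:g} holds with $\pi=\Pr(Y=1\mid W)$ by the tower property. Local alternatives are written as $\mathrm{logit}\,\pi_n(W)=\eta_0+n^{-1/2}h(W)$. The remainder bound \eqref{eq:rem} is pointwise in $\eta_0$, so it is unchanged.

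Second, I redo the proof of Theorem~\ref{thm:law} conditionally on $W_1,W_2,\dots$. Given the $W_i$, the summands $\tilde b_{kj}(X_i)\{Y_i-\pi_n(W_i)\}$ are independent and mean zero, with conditional variances $\pi_n(1-\pi_n)$. The mean term converges to $\ip{\tilde b_{kj}}{h_k}$ by the strong law and Assumption~\ref{ass:basis}, now read with moments of functions of $W$. Lyapunov and the orthogonality of cross-group blocks go through verbatim. The one point needing care is that $Y$ must be conditionally independent of everything else given $W$. This holds because the data are i.i.d.\ and $\pi$ is defined as $\Pr(Y=1\mid W)$. The model's other inputs enter only through $\eta_0$, so they play no further role.

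Third, the Monte Carlo reference draws $Y^{*}_i\sim\mathrm{Bernoulli}\{p_0(W_i)\}$, which only requires knowing $p_0(W_i)$ and $X_i$. Under the null $\E(Y\mid W)=p_0$, the vector $(Y_1,\dots,Y_n)$ given $W_1,\dots,W_n$ has exactly the law of the draws. So the observed $T_S$ and its $M$ reference copies are exchangeable conditionally on the $W_i$. The rank $p$-value is then valid at every $n$, with no need for Lemma~\ref{lem:rank}. This is the Barnard/Dufour argument. Under local alternatives I would instead invoke Lemma~\ref{lem:rank} and Lemma~\ref{lem:cont}, whose arguments concern only the Gaussian limit. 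With that, Theorem~\ref{thm:fwer} (closure), Theorem~\ref{thm:consist} and Corollary~\ref{cor:rung} follow with identical proofs. Proposition~\ref{prop:beyond} likewise only needs the fixed-$\pi$ law of the member statistics, which is computed in the same way.

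The main obstacle I expect is not the algebra but ensuring nothing secretly uses the structure of the fitting procedure. In particular, $p_0$ must be a fixed function independent of the validation sample. I would state this explicitly: $p_0$ is fitted on other data, so I condition on the development data. I would also verify Assumption~\ref{ass:basis} for bases in $\eta_0$ when $\eta_0$ comes from, say, a random forest with possibly atoms in its law. Atoms are harmless for positive definiteness once redundant columns are removed. They matter only for the spline knots and for Corollary~\ref{cor:hl}, which is not claimed here. The in-sample results of Section~\ref{sec:insample} are excluded, since they depend on the logistic score identity.
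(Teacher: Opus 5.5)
Your proposal is correct and follows the paper's own argument. The paper's proof rests on three observations, which you also use: the external-validation proofs need only that $p_0$ is a fixed function with values in $(0,1)$, independent of the validation outcomes; the pieces are defined through $\eta_0$ alone, with $\cH_{\rm C}=\cG^{\perp}$ needing no model; and under $\E(Y\mid W)=p_0$ the observed outcomes and the Monte Carlo draws are exchangeable given the $W_i$ (Barnard's argument). Your extra checks — conditioning on the development data, the tower property for the reading identity, and the asymptotic route through Lemmas~\ref{lem:rank} and~\ref{lem:cont} under local alternatives — spell out what the paper's short proof leaves implicit.
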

\begin{proof}
The external-validation proofs use only that $p_0$ is fixed given the conditioning variables, with values in $(0,1)$
and independent of the validation outcomes; they never use that $\eta_0$ is linear in $x$ or how $p_0$ was obtained.
The pieces $\cH_{\rm I},\cH_{\rm S},\cH_{\rm L}$ are defined through $\eta_0$ alone and $\cH_{\rm C}=\cG^{\perp}$ needs no model.
Under the null, $Y_i\sim\mathrm{Bernoulli}\{p_0(W_i)\}$ independently given the $W_i$, exactly the law of the Monte Carlo
draws, so the observed statistic and the $M$ draws are exchangeable (Barnard's argument).
\end{proof}

\noindent For a machine-learning model the four verdicts keep their meaning: INTERCEPT and SLOPE call for the
logistic recalibration of the model's output, LINK for a flexible recalibration of it, and COV for retraining or
adding information, since no recalibration of the output can repair misfit within covariate patterns. When COV is
named through covariates the model never used, the verdict means that the model lacks information, not that it is
mis-specified in the usual sense. Two practical caveats: for a score with few distinct values (a points score), $\cG$ is
spanned by the indicators of its levels, and the spline projections of the software should be replaced by them; and
predictions evaluated on the data the model was trained on are not covered, because the in-sample theory needs the
model's tangent space (Theorem~\ref{thm:insample}d), which a flexible learner does not have in a usable form.

\begin{proposition}[Updating at a rung removes the parts up to that rung]\label{prop:update}
In external validation under the local alternatives \eqref{eq:local}, update the frozen model on the validation
data by maximum likelihood in one of the families
\[
\mathcal U_{\rm I}:\ \eta_0+a,\qquad \mathcal U_{\rm S}:\ a+(1+b)\eta_0,\qquad
\mathcal U_{\rm L}:\ a+(1+b)\eta_0+\gamma^{\top}f(\eta_0),
\]
with $f$ a fixed spline basis in $\eta_0$ of dimension $K$. Let $\mathcal T_{\rm I}=\cH_{\rm I}$, $\mathcal T_{\rm S}=\cH_{\rm I}\oplus\cH_{\rm S}$ and
$\mathcal T_{\rm L}^K=\mathrm{span}\{1,\eta_0,f(\eta_0)\}$, with tangent directions $t_k$ and $I_k=\E\{w_0t_kt_k^{\top}\}$. Then
\[
\sqrt n(\hat\eta_k-\eta_0)\dto\Pi_{\mathcal T_k}h+t_k^{\top}Z_k,\qquad Z_k\sim N(0,I_k^{-1}),
\]
so the misfit left after updating, $h-\sqrt n(\hat\eta_k-\eta_0)$, has, in the limit, the component
\[
h-\Pi_{\mathcal T_k}h=\sum_{k'>k}h_{k'}\quad(k={\rm I},{\rm S}),\qquad h-\Pi_{\mathcal T^K_{\rm L}}h=h_{\rm C}+(h_{\rm L}-\Pi_{\mathcal T_{\rm L}^K}h_{\rm L})
\]
outside $\mathcal T_k$, and a mean-zero estimation error $-t_k^{\top}Z_k$ inside $\mathcal T_k$ with $\E\nrm{t_k^{\top}Z_k}^2=\dim\mathcal T_k$. The
bracket tends to zero as $K\to\infty$ when $\bigcup_K\mathcal T^K_{\rm L}$ is dense in $\cG$ (for example natural splines whose
boundary knots move out to the $1/K$ and $1-1/K$ quantiles of $\eta_0$). Updating at the highest rung with a non-zero part
therefore removes the systematic misfit, up to estimation noise of the usual order; a lighter update leaves at least
that part; and no recalibration removes $h_{\rm C}$.
\end{proposition}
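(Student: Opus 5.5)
The plan is to treat each update family as a correctly specified-at-the-null parametric working model $\mathrm{logit}\,p_\theta=\eta_0+\theta^{\top}t_k$, with $\theta=a$ and $t_{\rm I}=1$; $\theta=(a,b)$ and $t_{\rm S}=(1,\eta_0)$; or $\theta=(a,b,\gamma)$ and $t_{\rm L}=(1,\eta_0,f(\eta_0))$. The true parameter under $p_0$ is $\theta=0$. Each family is then a regular logistic regression with offset $\eta_0$ and covariates $t_k$, and $I_k=\E\{w_0t_kt_k^{\top}\}$ is positive definite by the same argument as for $I_0$ in Assumption~\ref{ass:fit}; redundant spline columns are removed as in Assumption~\ref{ass:basis}.

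\emph{Step 1: the estimator under local alternatives.} I repeat Step~1 of the proof of Theorem~\ref{thm:insample}(a) verbatim with $\tx$ replaced by $t_k$. Under $p_0$ the estimator is asymptotically linear,
\[
\sqrt n\,\hat\theta=I_k^{-1}n^{-1/2}\sum_it_k(X_i)\{Y_i-p_0(X_i)\}+o_p(1).
\]
Contiguity and Le Cam's third lemma (the LAN expansion with $h\in L^2(w)$) then give $\sqrt n\hat\theta\dto N(I_k^{-1}\E\{w_0t_kh\},I_k^{-1})$. Equivalently, Theorem~\ref{thm:law}'s computation of the mean of the score sum, via \eqref{eq:rem}, yields the shift $\E\{w_0t_kh\}=\ip{t_k}{h}$.

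\emph{Step 2: translate to the linear predictor.} Since $\sqrt n(\hat\eta_k-\eta_0)=t_k^{\top}\sqrt n\hat\theta$, its limit is $t_k^{\top}I_k^{-1}\ip{t_k}{h}+t_k^{\top}Z_k$. The first term is exactly the $L^2(w)$ regression of $h$ on the components of $t_k$, that is $\Pi_{\mathcal T_k}h$. This is the normal-equations identity already used in Step~3 of Theorem~\ref{thm:insample}. For the second term,
\[
\E\nrm{t_k^{\top}Z_k}^2=\mathrm{tr}(I_kI_k^{-1})=\dim\mathcal T_k .
\]
The residual misfit is therefore $h-\Pi_{\mathcal T_k}h-t_k^{\top}Z_k$, whose first part is orthogonal to $\mathcal T_k$ and whose second part lies in $\mathcal T_k$.

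\emph{Step 3: identify the components via Definition~\ref{def:pieces}.} The spans $\mathrm{span}\{1\}$ and $\mathrm{span}\{1,\eta_0\}$ are the first two nested spaces, so $h-\Pi_{\mathcal T_{\rm I}}h=h_{\rm S}+h_{\rm L}+h_{\rm C}$ and $h-\Pi_{\mathcal T_{\rm S}}h=h_{\rm L}+h_{\rm C}$. For the link rung, $\mathcal T^K_{\rm L}\subset\cG$ is orthogonal to $\cH_{\rm C}=\cG^{\perp}$ and contains $\mathcal T_{\rm S}$. Hence
\[
\Pi_{\mathcal T^K_{\rm L}}h=h_{\rm I}+h_{\rm S}+\Pi_{\mathcal T^K_{\rm L}}h_{\rm L},
\]
which gives the displayed bracket. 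Density of $\bigcup_K\mathcal T_{\rm L}^K$ in $\cG$ makes $\Pi_{\mathcal T^K_{\rm L}}\to\Pi_{\cG}$ strongly, so the bracket tends to $0$ in $\nrm{\cdot}$. The closing claims follow directly. A lighter update $k<k'$ leaves $h_{k'}$, which is orthogonal to the inside-$\mathcal T_k$ noise and so cannot be cancelled. Every recalibration family lies in $\cG$, so the $\cH_{\rm C}$ component of the residual is $h_{\rm C}$ whatever is fitted.

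\emph{Main obstacle.} I expect the hardest step to be the density claim for the natural-spline sieve. One has to show that natural splines with knots spreading to the $1/K$ and $1-1/K$ quantiles are dense in $\cG\subset L^2(w)$, i.e.\ in $L^2$ of the law of $\eta_0$ weighted by $w_0$. I would argue via bounded continuous functions of $\eta_0$, which are dense because $w_0\le1/4$ and the law of $\eta_0$ is finite. Each such function is approximated uniformly on $[q_{1/K},q_{1-1/K}]$ by splines with mesh going to zero. The tails contribute at most $2/K$ times a bounded quantity, since natural splines are linear beyond the boundary knots, and the linear extension is controlled by $\E(w_0\eta_0^2)<\infty$. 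A secondary, routine point is uniformity of the ML expansion for fixed $K$, which is standard for finite-dimensional logistic regression.
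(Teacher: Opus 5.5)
Your proposal is correct and follows essentially the same route as the paper: a regular offset model with tangent directions $t_k$, the contiguity and third-lemma argument of Step~1 of Theorem~\ref{thm:insample}, the normal equations turning $t_k^{\top}I_k^{-1}\E\{w_0t_kh\}$ into $\Pi_{\mathcal T_k}h$, the trace identity, and the nested spaces of Definition~\ref{def:pieces} together with $\mathcal T^K_{\rm L}\subseteq\cG$. Two differences from the paper's proof: the paper only asserts the density of the spline sieve, and your direct approximation of each fixed function by elements of $\mathcal T^K_{\rm L}$ for all large $K$ is the form actually needed (quantile-knot spline spaces are not nested, so density of their union alone would not force $\Pi_{\mathcal T^K_{\rm L}}h_{\rm L}\to h_{\rm L}$); and the paper adds a remark you omit, that if $K$ grows with $n$ the estimation error has squared norm of order $(K+2)/n$, so $h_{\rm L}$ is then removed only in the bias sense.
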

\begin{proof}
Each family is a regular parametric model with offset $\eta_0$ whose tangent space at the null parameter is
$\mathcal T_k$. By Le Cam's third lemma, as in Step~1 of the proof of Theorem~\ref{thm:insample}, the estimate satisfies
$\sqrt n\,\hat\theta_k\dto N(I_k^{-1}\E\{w_0t_kh\},I_k^{-1})$, so $\sqrt n(\hat\eta_k-\eta_0)=t_k^{\top}\sqrt n\hat\theta_k+o_p(1)$ converges to
$t_k^{\top}I_k^{-1}\E\{w_0t_kh\}+t_k^{\top}Z_k$; the first term is the $L^2(w)$ projection $\Pi_{\mathcal T_k}h$, and
$\E\nrm{t_k^{\top}Z_k}^2=\mathrm{tr}(I_kI_k^{-1})=\dim\mathcal T_k$. The display follows from Definition~\ref{def:pieces}, because $\mathcal T_{\rm I}$ and
$\mathcal T_{\rm S}$ are sums of the first pieces and $\mathcal T^K_{\rm L}\subseteq\cG$. If the spline dimension grows with $n$, the
estimation error has squared norm of order $(K+2)/n$, which exceeds the $n^{-1}$ scale of the misfit, so ``removes $h_{\rm L}$''
then holds in the bias sense only.
\end{proof}

\noindent Together with Corollary~\ref{cor:rung}, Proposition~\ref{prop:update} turns the decision table of the paper
into a statement: the highest named group points to the lightest update that removes the detectable systematic
misfit. The updated model must then be checked on new data, because on the updating data Lemma~\ref{lem:erase} and
Theorem~\ref{thm:insample} erase exactly the parts it fitted.

\begin{corollary}[Power and sample size]\label{cor:power}
In external validation, member $(k,j)$ has first-order power
$\mathrm{pow}_r(\lambda)=1-F_{\chi^2_r(\lambda)}\{F_r^{-1}(1-\alpha)\}$ with $r=r_{kj}$ and
$\lambda=n\,\mu_{kj}^{\top}V_{kj}^{-1}\mu_{kj}$, $\mu_{kj}=\ip{\tilde b_{kj}}{g}$ (Theorem~\ref{thm:law}, with $g$ in place of
$n^{-1/2}h$). For the one-column members this gives closed forms. Let $\lambda^*$ be the noncentrality with
$\mathrm{pow}_1(\lambda^*)$ equal to the target power ($\lambda^*=(z_{0.975}+z_{0.80})^2=7.85$ for $\alpha=0.05$ and power $0.80$).
\begin{enumerate}[label=(\alph*),leftmargin=*,itemsep=2pt]
\item calibration in the large: a mean miscalibration $\delta=\E\{\pi(X)-p_0(X)\}$ is detected with the target power when
  $n\ge\lambda^*\E(w_0)/\delta^2$;
\item calibration slope: a slope $1+\epsilon$ on the logit scale, $\mathrm{logit}\,\pi=a+(1+\epsilon)\eta_0$ with $a=O(\epsilon)$, gives
  $\mu=\epsilon\,\E\{w_0(\eta_0-\bar\eta)^2\}+O(\epsilon^2)$, so $n\ge\lambda^*/[\epsilon^2\E\{w_0(\eta_0-\bar\eta)^2\}]$.
\end{enumerate}
Both expressions involve only the covariates of the validation sample and the frozen predictions, so they can be
computed before the outcomes are collected.
\end{corollary}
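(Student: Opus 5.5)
The plan is to read the power statement directly off Theorem~\ref{thm:law}, with the local misfit $n^{-1/2}h$ replaced by the probability-scale misfit $g$. Under~\eqref{eq:local}, member $(k,j)$ has $Q_{kj}\dto\chi^2_{r}(\lambda_{kj})$ with $\lambda_{kj}=\mu^{\top}V_{kj}^{-1}\mu$, where $\mu=\ip{\tilde b_{kj}}{h}$. By~\eqref{eq:rem}, $n^{-1/2}h=g+O(n^{-1}h^2)$, so $\ip{\tilde b_{kj}}{h}=\sqrt n\,\ip{\tilde b_{kj}}{g}+o(1)$. Hence $\lambda=n\,\mu_{kj}^{\top}V_{kj}^{-1}\mu_{kj}$ to first order, with $\mu_{kj}=\ip{\tilde b_{kj}}{g}$. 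The member rejects at its $\chi^2_r$ level when $Q_{kj}>F_r^{-1}(1-\alpha)$, so its power is $\mathrm{pow}_r(\lambda)$. For $r=1$, $\chi^2_1(\lambda)$ is the law of $(Z+\sqrt\lambda)^2$. Dropping the negligible opposite tail, the power is $\Phi(\sqrt\lambda-z_{1-\alpha/2})$, and setting this equal to $0.80$ gives $\lambda^*=(z_{0.975}+z_{0.80})^2\approx7.85$. The general rule is therefore $n\ge\lambda^*/(\mu^2/V)$.

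For (a), the basis is $\tilde b=1$, which lies in $\cH_{\rm I}$. Then $\mu=\ip{1}{g}=\E\{\pi-p_0\}=\delta$ and $V=\E(w_0)$, so $\mu^2/V=\delta^2/\E(w_0)$. This gives $n\ge\lambda^*\E(w_0)/\delta^2$.

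For (b), the basis is $\tilde b=\eta_0-\bar\eta$, which lies in $\cH_{\rm S}$, and $V=\E\{w_0(\eta_0-\bar\eta)^2\}$. Expand $\pi-p_0=\expit(a+(1+\epsilon)\eta_0)-\expit(\eta_0)=w_0(a+\epsilon\eta_0)+O(\epsilon^2(1+\eta_0^2))$, using $a=O(\epsilon)$ and the bound $|\expit''|\le 2c_2$. Then
\[
\mu=\E\{(\eta_0-\bar\eta)w_0(a+\epsilon\eta_0)\}+O(\epsilon^2)
=\epsilon\,\E\{w_0(\eta_0-\bar\eta)\eta_0\}+O(\epsilon^2)
=\epsilon\,\E\{w_0(\eta_0-\bar\eta)^2\}+O(\epsilon^2).
\]
The term in $a$ vanishes because $\ip{\eta_0-\bar\eta}{1}=0$ by the definition of $\bar\eta$, and $\eta_0$ can be replaced by $\eta_0-\bar\eta$ for the same reason. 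The remainder is controlled by the moment $\E\{|\eta_0|^3\}<\infty$. Squaring and dividing by $V$ gives $\mu^2/V=\epsilon^2\E\{w_0(\eta_0-\bar\eta)^2\}$ to first order, which yields the stated bound.

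Finally, both $\E(w_0)$ and $\E\{w_0(\eta_0-\bar\eta)^2\}$ depend only on the distribution of $X$ and on the frozen $p_0$. They can therefore be estimated by validation-sample averages before any outcome is observed.

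The main obstacle is interpretive rather than technical. Theorem~\ref{thm:law} concerns the local sequence, whereas the corollary is stated for a fixed $g$. I would state explicitly that ``first-order power'' means the limit along the embedding $h=\sqrt n\,g$, equivalently $\lambda$ held fixed as $n$ grows, and that the bound in~\eqref{eq:rem} makes the replacement of $h$ by $\sqrt n g$ valid up to $o(1)$ in the noncentrality.
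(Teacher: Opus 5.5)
Your proposal is correct and follows the paper's own route. You read the noncentral $\chi^2$ limit off Theorem~\ref{thm:law} with $g$ in place of $n^{-1/2}h$, take $\mu=\delta$ and $V=\E(w_0)$ for the constant basis, and use the centred score for the slope, where the $a$-term drops out because $\bar\eta$ is the $w_0$-weighted mean; your explicit derivation of $\lambda^*$, the moment control of the $O(\epsilon^2)$ remainder, and the local-embedding reading of ``first-order power'' fill in details the paper leaves implicit (the paper also notes that, for a fixed departure, this approximation ignores the change of variance of Proposition~\ref{prop:beyond}(iii)).
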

\begin{proof}
Theorem~\ref{thm:law} gives the noncentral $\chi^2$ limit; replacing $n^{-1/2}h$ by $g$ is the usual first-order
approximation for a fixed alternative, which also ignores the change of variance (Proposition~\ref{prop:beyond}iii). For
(a) the basis is $1$, so $\mu=\E(\pi-p_0)=\delta$ and $V=\E(w_0)$. For (b) the basis is $\eta_0-\bar\eta$ and, by \eqref{eq:rem},
$\pi-p_0=w_0\{a'+\epsilon\eta_0\}+O(\epsilon^2)$ for some $a'=O(\epsilon)$; the constant is orthogonal to $\eta_0-\bar\eta$ because $\bar\eta$ is the
$w_0$-weighted mean, so $\mu=\epsilon\E\{w_0\eta_0(\eta_0-\bar\eta)\}=\epsilon\E\{w_0(\eta_0-\bar\eta)^2\}$ and $V=\E\{w_0(\eta_0-\bar\eta)^2\}$.
\end{proof}

\noindent The member's power is a benchmark for the group, not a bound: the group combines several members, and the
closure asks every intersection containing it to reject. Part (a) is the usual power calculation for an
observed-versus-expected test; the expressions complement the sample-size rules for external validation studies of
Riley et al.\ (2021), which target the precision of calibration estimates rather than the power to locate a misfit.

\begin{proposition}[Each member is locally optimal in its own directions]\label{prop:optimal}
Fix a member with basis $\tilde b$ of $r$ columns and consider the local model
$\mathrm{logit}\,\pi_n=\eta_0+n^{-1/2}\gamma^{\top}\tilde b$, $\gamma\in\mathbb R^r$, in external validation (the member's own local model,
with no nuisance parameter). Among all tests of $\gamma=0$ with asymptotic level $\alpha$, the member's test maximizes the
minimum asymptotic power over every ellipsoid $\{\gamma:\gamma^{\top}V\gamma=c\}$ (it is asymptotically maximin and most
stringent); for $r=1$ it is asymptotically uniformly most powerful among asymptotically locally unbiased tests (those
whose limiting power is at least $\alpha$ for every local $\gamma$). Against a general local departure $h$ its power depends on
$h$ only through $\Pi_{\tilde b}h$.
\end{proposition}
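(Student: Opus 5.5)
The plan is to reduce the problem to a Gaussian shift experiment and then apply the classical optimality results for testing a normal mean. Under the member's local model $\mathrm{logit}\,\pi_n=\eta_0+n^{-1/2}\gamma^{\top}\tilde b$, the Bernoulli log likelihood ratio against $p_0$ is locally asymptotically normal (as in Step~1 of the proof of Theorem~\ref{thm:insample}). Conditionally on the covariates, it admits the expansion
\[
\log\frac{dP_{n,\gamma}}{dP_{n,0}}=\gamma^{\top}U-\tfrac12\gamma^{\top}V\gamma+o_p(1),
\]
where $U=n^{-1/2}\sum_i\tilde b(X_i)\{Y_i-p_0(X_i)\}$ and $V=\E\{w_0\tilde b\tilde b^{\top}\}$ is positive definite by Assumption~\ref{ass:basis}. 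This follows from a second-order Taylor expansion of $\log\{1+e^{\eta}\}$, with the bound \eqref{eq:rem} and the strong law for $n^{-1}\sum w_0\tilde b\tilde b^{\top}$. By Theorem~\ref{thm:law}, $U\dto N(V\gamma,V)$ under $P_{n,\gamma}$. The experiments therefore converge to the limit experiment $\{N(V\gamma,V):\gamma\in\mathbb R^r\}$, or equivalently to observing $\Delta=V^{-1}U\sim N(\gamma,V^{-1})$.

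Next I invoke the asymptotic representation theorem (van der Vaart, 1998, Thm.~15.1). Any sequence of tests with asymptotic level $\alpha$ has limiting power functions along every subsequence that are dominated by the power function of some level-$\alpha$ test $\phi(\Delta)$ in the limit experiment. The problem is thus reduced to the Gaussian case.

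In the limit experiment I apply the classical results. For the maximin and most-stringent statements over the ellipsoids $\{\gamma^{\top}V\gamma=c\}$, I transform to $Z=V^{1/2}\Delta\sim N(V^{1/2}\gamma,I)$. The ellipsoids then become spheres $\lVert\theta\rVert^2=c$. The problem is invariant under orthogonal rotations, and the Hunt--Stein theorem applies because the orthogonal group is compact. The best invariant test rejects for large $\lVert Z\rVert^2=U^{\top}V^{-1}U$; this is Wald's (1943) result, and the test is maximin and most stringent. The member's statistic $Q=U^{\top}\hat V^{-1}U$ has exactly this form asymptotically, with limiting power $\mathrm{pow}_r(c)$, so it attains the bound.

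For $r=1$, local unbiasedness passes to the limit: a limiting power function that is at least $\alpha$ everywhere and at most $\alpha$ at $0$ is unbiased. Within the unbiased tests of a one-dimensional normal mean, the two-sided test $|Z|>z_{1-\alpha/2}$ is UMP, by the Neyman--Pearson lemma with the unbiasedness side condition, i.e.\ the Lehmann--Scheff\'e argument for exponential families. The member's $Q$ is asymptotically equivalent to this test.

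The final sentence of the statement follows directly from Theorem~\ref{thm:law}. Under a general local departure $h$, the limit of $U$ is $N(\ip{\tilde b}{h},V)$. Since $\ip{\tilde b}{h}=\ip{\tilde b}{\Pi_{\tilde b}h}$, the law of $Q$, namely $\chi^2_r(\lVert\Pi_{\tilde b}h\rVert_w^2)$, depends on $h$ only through $\Pi_{\tilde b}h$.

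The main obstacle I expect is the conditional-on-covariates versus unconditional framing. It must be verified that the LAN expansion holds for almost every covariate sequence, or is passed to the unconditional experiment in which the covariate law is free of $\gamma$ and so contributes nothing to the likelihood ratio. A second point is that the representation theorem delivers bounds only along subsequences, so the optimality claims must be stated in the $\limsup$/$\liminf$ sense. I would handle both by working unconditionally, noting that the covariate factor cancels from the likelihood ratio.
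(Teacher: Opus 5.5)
Your proposal is correct and follows the paper's route. The paper uses local asymptotic normality with central sequence $U$ and information $V$, convergence to the Gaussian shift $N(V\gamma,V)$, Wald's maximin and most-stringent result together with the two-sided UMPU test in the limit experiment, transfer back to the sequence via van der Vaart (Chapter~15) and Lehmann and Romano, and Theorem~\ref{thm:law} for the final claim. Your version adds the details the paper leaves to those citations: the Hunt--Stein reduction of the ellipsoids to spheres, the asymptotic representation theorem applied along subsequences, and working unconditionally so that the covariate law cancels from the likelihood ratio.
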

\begin{proof}
The local model is locally asymptotically normal with Fisher information $V=\E\{w_0\tilde b\tilde b^{\top}\}$ and central
sequence $U$, so it converges to the Gaussian shift experiment $Z\sim N(V\gamma,V)$. In that experiment the test that
rejects for large $Z^{\top}V^{-1}Z$ is maximin over the ellipsoids and most stringent (Wald, 1943), and for $r=1$ the
two-sided $|Z|$ test is uniformly most powerful unbiased; these properties transfer to the sequence of experiments,
and the member's $Q$ is asymptotically equivalent to $U^{\top}V^{-1}U$ (Lehmann and Romano, 2005, Section~13.5; van der
Vaart, 1998, Chapter~15). The last statement is Theorem~\ref{thm:law}.
\end{proof}

\noindent No optimality is claimed for the groups: a Cauchy combination of several members gives up some power in
any one direction to keep power in all of them, which is what a localization procedure needs.

\section{Outside the local regime}

For a fixed departure the logit and probability scales no longer agree and the variance changes. Proposition~\ref{prop:beyond}
separates the effects; Section~\ref{sec:numerics} measures them on the simulation designs.

\begin{proposition}\label{prop:beyond}
Let $\pi$ be fixed, external validation.
\begin{enumerate}[label=(\roman*),leftmargin=*,itemsep=2pt]
\item (Purity is a property of the probability scale.) The mean of $n^{-1/2}U_{kj}$ is $\ip{\tilde b_{kj}}{g}$ exactly, at
  every $n$. A departure that is pure on the logit scale, $\mathrm{logit}\,\pi=\eta_0+h$ with $h\in\cH_k$, has
  $g=h+\frac12(1-2p_0)h^2+O(h^3)$, whose second-order term generally has components in other pieces. For example
  $h=s(\eta_0-\bar\eta)\in\cH_{\rm S}$ gives a component of order $s^2$ in $\cH_{\rm L}$, because $(1-2p_0)(\eta_0-\bar\eta)^2$ is not
  an affine function of $\eta_0$. Naming LINK for a large logit-scale slope departure is then a correct detection on the
  probability scale.
\item (Probability-scale-pure departures are bounded.) A departure $g=s\,g^0$ needs $s\,p_0g^0\le1$ and
  $s(1-p_0)g^0\ge-1$ almost surely. For non-zero $g^0$ these confine $s$ to a bounded interval $[-a,b]$, with $b=0$ when
  $p_0(g^0)^+$ or $(1-p_0)(g^0)^-$ is unbounded and $a=0$ when $p_0(g^0)^-$ or $(1-p_0)(g^0)^+$ is unbounded. With Gaussian
  covariates the interval is $\{0\}$ for $\eta_0^2$ and $x_1x_3$, and for $\eta_0-\bar\eta$ only $s\le0$ (risks made less extreme) is
  feasible. The simulated ``probability-scale-pure'' families ($s>0$) must therefore truncate the true risks, and the
  truncation adds components outside $\cH_k$.
\item (The variance changes.) Unconditionally, if $\ip{\tilde b_{kj}}{g}=0$, then $U_{kj}\dto N(0,\Omega_{kj})$ with
\[
\Omega_{kj}=\E\big[\tilde b_{kj}\tilde b_{kj}^{\top}\{\pi(1-\pi)+(\pi-p_0)^2\}\big]=\E\big[\tilde b_{kj}\tilde b_{kj}^{\top}\{w_0+(1-2p_0)(\pi-p_0)\}\big],
\]
  and $Q_{kj}\dto\sum_l\nu_l\chi^2_{1,l}$ with $\nu_l$ the eigenvalues of $V_{kj}^{-1}\Omega_{kj}$. The Monte Carlo reference uses
  $p_0$, so the size of the group is distorted, upwards when all $\nu_l\ge1$ with one strictly larger. Members of different
  groups become correlated, because orthogonality with weight $w_0$ does not imply orthogonality with weight
  $w_0+(1-2p_0)(\pi-p_0)$. The distortion is first order in $\pi-p_0$.
\end{enumerate}
\end{proposition}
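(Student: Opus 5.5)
The plan is to treat the three parts separately; each reduces to an elementary computation on the probability scale.

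For (i), the mean of $n^{-1/2}U_{kj}$ is the conditional mean of $\tilde b_{kj}(X)(Y-p_0(X))$ averaged over $X$. Its expectation is $\E\{\tilde b_{kj}(\pi-p_0)\}=\ip{\tilde b_{kj}}{g}$ by \eqref{eq:g}, with no approximation, at every $n$. For the second-order statement I will expand $\expit(\eta_0+h)-p_0=w_0h+\tfrac12 w_0(1-2p_0)h^2+O(h^3)$, using $\expit'=w$ and $\expit''=w(1-2\expit)$, and then divide by $w_0$. For the slope example I take $h=s(\eta_0-\bar\eta)$, so the quadratic term is $\tfrac12 s^2(1-2p_0)(\eta_0-\bar\eta)^2$. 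This is a function of $\eta_0$, so it lies in $\cG$. Its $\cH_{\rm L}$ component is nonzero unless it is affine in $\eta_0$. It is not affine, because $(1-2\expit(\eta))(\eta-\bar\eta)^2$ is a non-polynomial analytic function of $\eta$, which can be checked by its behaviour as $\eta\to\pm\infty$ (it tends to $\mp$ a quadratic). The only proviso is that $\eta_0$ takes infinitely many values, or at least enough values that a non-affine function is visible.

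For (ii), the constraint $\pi=p_0+w_0 s g^0\in[0,1]$ is equivalent to $-p_0\le s\,p_0(1-p_0)g^0\le 1-p_0$. This rearranges to $s(1-p_0)g^0\ge-1$ and $s\,p_0g^0\le1$.
\begin{itemize}
\item For $s>0$ the constraints read $s\le 1/\sup\{p_0(g^0)^+\}$ and $s\le 1/\sup\{(1-p_0)(g^0)^-\}$. For $s<0$ the two suprema are exchanged.
\item Any unbounded supremum therefore forces the corresponding endpoint of the interval to $0$.
\item With Gaussian covariates, $\eta_0^2$ and $x_1x_3$ are unbounded above and positive where $p_0$ is bounded away from $0$. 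Centring makes them unbounded in both signs, so both endpoints vanish.
\item For $\eta_0-\bar\eta$, the product $p_0(\eta_0-\bar\eta)^+$ is unbounded as $\eta_0\to\infty$, which kills $s>0$. For $s<0$, $p_0(\eta_0-\bar\eta)^-\to0$ and $(1-p_0)(\eta_0-\bar\eta)^+\to0$ in the tails, so both suprema are bounded and $s$ may be moderately negative.
\end{itemize}
It then follows that the simulated families with $s>0$ must clip the risks, and clipping changes $g$ by a term that is not in $\cH_k$.

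For (iii), the distribution is unconditional, so the summands $\tilde b(X_i)(Y_i-p_0(X_i))$ are i.i.d. They have mean $\ip{\tilde b}{g}=0$ by hypothesis. Their covariance is $\E[\tilde b\tilde b^\top\E\{(Y-p_0)^2\mid X\}]$, and $\E\{(Y-p_0)^2\mid X\}=\pi(1-\pi)+(\pi-p_0)^2$. Expanding this gives $w_0+(1-2p_0)(\pi-p_0)$. The multivariate central limit theorem applies under the $2+\delta$ moments of Assumption~\ref{ass:basis}. Meanwhile $\hat V\to V$ almost surely, so $Q\dto Z^\top V^{-1}Z$ with $Z\sim N(0,\Omega)$. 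Diagonalizing $\Omega^{1/2}V^{-1}\Omega^{1/2}$ gives the weighted $\chi^2$ with weights equal to the eigenvalues of $V^{-1}\Omega$.

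The Monte Carlo reference has law $\chi^2_r$. A weighted sum $\sum_l\nu_l\chi^2_{1,l}$ with all $\nu_l\ge1$ and one $\nu_l>1$ stochastically dominates $\chi^2_r$ strictly, so the upper tail at the reference critical value exceeds $\alpha$. The same conditional-variance computation gives cross-group covariances $\E[\tilde b_{kj}\tilde b_{k'j'}^\top(1-2p_0)(\pi-p_0)]$. These need not vanish, and they are first order in $\pi-p_0$, as is $\Omega-V$.

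The main obstacle is the non-affineness claim in (i) and the precise feasibility analysis in (ii) for the slope basis. I will handle both by tail asymptotics of $\expit$ rather than by exact computation.
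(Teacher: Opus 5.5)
Your proposal matches the paper's proof. Part (i) follows from the identity defining $g$ and the second-order expansion of $\expit$ divided by $w_0$. Part (ii) follows from splitting the two constraints by the sign of $s$. Part (iii) follows from the unconditional i.i.d.\ central limit theorem with $\E\{(Y-p_0)^2\mid X\}=\pi(1-\pi)+(\pi-p_0)^2$ and $\hat V\to V$. Your non-affineness and stochastic-dominance arguments add detail the paper leaves implicit.

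There is one slip in the Gaussian example of (ii). $\eta_0^2$, raw or projected off $\{1,\eta_0\}$, is a quadratic bounded below, so it is not unbounded in both signs. Instead, $b=0$ follows from $p_0(g^0)^+\to\infty$ as $\eta_0\to+\infty$, and $a=0$ from $(1-p_0)(g^0)^+\to\infty$ as $\eta_0\to-\infty$. Your both-signs argument does work for $x_1x_3$, which is unbounded in both directions at every value of $\eta_0$.
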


\begin{proof}
(i) is the identity \eqref{eq:g} and the expansion $\expit(\eta_0+h)=p_0+w_0h+\frac12w_0(1-2p_0)h^2+O(h^3)$, divided by $w_0$.
(ii) For $s>0$ the constraints are $s\,p_0(g^0)^+\le1$ and $s(1-p_0)(g^0)^-\le1$; for $s<0$ they are $|s|p_0(g^0)^-\le1$ and
$|s|(1-p_0)(g^0)^+\le1$. A non-zero $g^0$ makes one constraint on each side binding, and an unbounded product forces that
side to zero.
(iii) Unconditionally the summands $\tilde b(X_i)\{Y_i-p_0(X_i)\}$ are independent and identically distributed with mean
$\ip{\tilde b}{g}=0$ and second moment $\E[\tilde b\tilde b^{\top}\E\{(Y-p_0)^2\mid X\}]=\Omega$; the multivariate central limit
theorem and Slutsky's lemma give the limit. Conditionally on the covariates the mean
$n^{-1/2}\sum_i\tilde b(X_i)(\pi-p_0)(X_i)$ does not vanish but is a random $O_p(1)$ quantity, which is why the statement is
unconditional.
\end{proof}

\noindent Part (iii) is the only obstacle to error control beyond the local regime once the hypotheses are stated on
the probability scale, and it can be removed by calibrating with the observed residuals instead of the model.

\paragraph{The multiplier calibration.} The software projects each member's basis $b_{kj}$ in the sample, off a set of
directions $Z_{kj}$ (none for INTERCEPT; $1$ for SLOPE; $1,\eta_0$ for LINK; $1,\eta_0$ and a spline in $\eta_0$ for COV), so that
$\sum_iw_0\tilde b_{kj}Z_{kj}^{\top}=0$ exactly. Let $e_{kj,i}=\{Y_i-p_0(X_i)\}-w_0(X_i)Z_{kj}(X_i)^{\top}\hat\theta_{kj}$ be the residuals after
refitting those directions as an offset, $\hat\theta_{kj}=(\sum_iw_0Z_{kj}Z_{kj}^{\top})^{-1}\sum_iZ_{kj}(Y_i-p_0)$. Each member keeps its
statistic $U_{kj}=n^{-1/2}\sum_i\tilde b_{kj}(Y_i-p_0)=n^{-1/2}\sum_i\tilde b_{kj}e_{kj,i}$, uses the robust variance
$\hat V^{\rm r}_{kj}=n^{-1}\sum_i\tilde b_{kj}\tilde b_{kj}^{\top}e_{kj,i}^2$, and has reference draws
$U^{(m)}_{kj}=n^{-1/2}\sum_i\xi^{(m)}_i\tilde b_{kj}(X_i)e_{kj,i}$, with Rademacher signs $\xi^{(m)}_i$ shared by all members (software:
\texttt{calibration = "multiplier"}). The probability-scale hypotheses are $H^{\rm p}_k:\ \ip{\tilde b_{kj}}{g}=0$ for all $j$,
with $\tilde b_{kj}$ the population projection.

\begin{theorem}[Error control for departures of any size]\label{thm:multiplier}
Let $\pi$ be fixed, let $\Omega'_{kj}=\E[\tilde b_{kj}\tilde b_{kj}^{\top}\{Y-p_0-w_0Z_{kj}^{\top}\theta_{kj}\}^2]$ be positive definite, with
$\theta_{kj}$ the population value of $\hat\theta_{kj}$, and $\E\{(1+\lVert Z_{kj}\rVert^2)\lVert\tilde b_{kj}\rVert^{2+\delta}\}<\infty$. With the multiplier
calibration,
\[
\limsup_{n\to\infty}\Pr(\text{some }k\text{ with }H^{\rm p}_k\text{ true is named})\le\frac{\lfloor\alpha(M+1)\rfloor}{M+1}\le\alpha ,
\]
whatever the size of the departure in the other groups, and Theorem~\ref{thm:consist}(b) continues to hold.
\end{theorem}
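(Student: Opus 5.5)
The plan follows the proof of Theorem~\ref{thm:fwer}, with Lemma~\ref{lem:rank} doing the final step. The difference is that the reference law now comes from a conditional (multiplier) central limit theorem, not from the model. Let $\mathcal T^{\rm p}=\{k:H^{\rm p}_k\text{ true}\}$ and assume it is nonempty. By the closure principle the event that some $k\in\mathcal T^{\rm p}$ is named is contained in $\{p_{\mathcal T^{\rm p}}\le\alpha\}$. So it suffices to show three things for the members of the groups in $\mathcal T^{\rm p}$, stacked into one vector:
\begin{itemize}
\item[] the observed studentized vector converges to some centred Gaussian limit;
\item[] the conditional law of the Rademacher copies converges, in probability, to the same limit;
\item[] the resulting law of $T_{\mathcal T^{\rm p}}$ is continuous.
\end{itemize}
Since the reference is built from residuals, orthogonality of the groups under $w_0$ plays no role, in line with the remark after Theorem~\ref{thm:fwer}.

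\emph{Step 1 (observed statistics).} First handle the sample projection. Write $\tilde b_{kj}=b_{kj}-\hat A_{kj}Z_{kj}$ with $\hat A_{kj}\to A_{kj}$ almost surely by the strong law. Then $n^{-1/2}\sum_i\hat A_{kj}Z_{kj}(Y_i-p_0)$ differs from its population counterpart by $(\hat A-A)\,n^{-1/2}\sum Z(Y-p_0)$. The last sum is not centred for a fixed departure, so I will instead use the identity $U_{kj}=n^{-1/2}\sum_i\tilde b_{kj}e_{kj,i}$ and linearize $\hat\theta_{kj}$ around $\theta_{kj}$. This gives an i.i.d.\ representation $U_{kj}=n^{-1/2}\sum_i\psi_{kj}(X_i,Y_i)+\sqrt n\,\ip{\tilde b_{kj}}{g}+o_p(1)$. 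Under $H^{\rm p}_k$ the drift vanishes. The influence function is $\psi_{kj}=\tilde b_{kj}\{Y-p_0-w_0Z_{kj}^{\top}\theta_{kj}\}$, where $\tilde b_{kj}$ is the population projection. The residual form absorbs the estimation of the projection coefficients, because $\E\{w_0\tilde b Z^{\top}\}=0$. The multivariate CLT then gives joint normality with covariance blocks $\Omega'$. Finally $\hat V^{\rm r}_{kj}\to\Omega'_{kj}$ by the law of large numbers under the stated moment condition, using $e_{kj,i}^2\le2\{(Y_i-p_0)^2+w_0^2(Z^{\top}\hat\theta)^2\}$. So each studentized $Q_{kj}$ converges to a central $\chi^2_{r_{kj}}$, jointly across members.

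\emph{Step 2 (multiplier reference).} Conditionally on the data, $U^{(m)}$ is a Rademacher-weighted sum with conditional covariance $n^{-1}\sum_i\tilde b\tilde b^{\top}e_ie_i^{\top}$. Here I would check that the member-by-member residuals give cross-covariances $n^{-1}\sum\tilde b_{kj}\tilde b_{k'j'}^{\top}e_{kj,i}e_{k'j',i}$, which converge to $\E\{\psi_{kj}\psi_{k'j'}^{\top}\}$. This matches the cross-covariance of the observed vector from Step 1. The conditional Lindeberg condition follows from the $2+\delta$ moment assumption. Hence the conditional law converges in probability, in the bounded-Lipschitz metric, to the same $N(0,\Sigma')$. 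The studentization uses the reference draw's own robust variance. Because $\xi^2=1$, that variance equals $\hat V^{\rm r}$ exactly, so the studentized copies share the limit of Step 1. Lemma~\ref{lem:cont} holds verbatim with $\Sigma'$ in place of $\Sigma$ and $\mu=0$, so the limit law of $T_{\mathcal T^{\rm p}}$ has no atoms. Lemma~\ref{lem:rank} then gives the bound.

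\emph{Consistency.} For Theorem~\ref{thm:consist}(b), a member with $\ip{\tilde b}{g}\ne0$ has $Q\ge\kappa n$ with probability tending to one, while the multiplier copies stay tight. The residual-based reference is centred by construction and never picks up the drift. Hence $T_S\to+\infty$ against a tight reference for every $S\ni k$.

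\emph{Main obstacle.} The hard part is Step 1's bookkeeping under a fixed departure. The sample projection off $Z_{kj}$ and the refitted offset $\hat\theta_{kj}$ both carry $O_p(n^{-1/2})$ errors that are multiplied by non-centred sums of order $\sqrt n$. One must show these errors combine into exactly the influence function $\psi_{kj}$ that the multiplier copies reproduce, so that the observed and reference covariances coincide. I would first verify the algebraic identity $\sum_i\tilde b_{kj}w_0Z_{kj}^{\top}=0$ in the sample. I would then expand $\tilde b_{kj}$ sample versus population to first order, and check that the extra term is $\E\{w_0 b Z^{\top}\}$ times a centred quantity under $H^{\rm p}_k$.
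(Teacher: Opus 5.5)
Your proposal is correct and follows the paper's proof essentially step for step. Closure reduces the claim to $p_{\mathcal T}$, and the observed members are represented by the i.i.d.\ vectors $\tilde b_{kj}\{Y-p_0-w_0Z_{kj}^{\top}\theta_{kj}\}$ with an unconditional CLT. The Rademacher copies, with shared signs and the same robust variance because $\xi_i^2=1$, satisfy a conditional multiplier CLT with the same covariance $\Omega'$, and Lemmas~\ref{lem:cont} and~\ref{lem:rank} conclude. The obstacle you flag is resolved exactly as in the paper: sample orthogonality lets you replace $\hat\theta_{kj}$ by $\theta_{kj}$ at no cost. The projection-coefficient error then multiplies $n^{-1/2}\sum_iZ_{kj}\{Y_i-p_0-w_0Z_{kj}^{\top}\theta_{kj}\}$, which is centred by the definition of $\theta_{kj}$ for every $\pi$ (not only under $H^{\rm p}_k$), and hence is $O_p(1)$.
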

\begin{proof}
\emph{The observed statistics.} Because the sample basis is exactly orthogonal to $w_0Z_{kj}$, $U_{kj}$ is unchanged if the
population offset $w_0Z_{kj}^{\top}\theta_{kj}$ is subtracted from the residuals; the difference between the sample and
population projections of the basis multiplies $n^{-1/2}\sum_iZ_{kj}w_0\{g-Z_{kj}^{\top}\theta_{kj}\}$-type sums of order $O_p(1)$ by a
coefficient error of order $O_p(n^{-1/2})$, so $U_{kj}=n^{-1/2}\sum_i\zeta_{kj,i}+o_p(1)$ with
$\zeta_{kj,i}=\tilde b_{kj}(X_i)\{Y_i-p_0(X_i)-w_0Z_{kj}^{\top}\theta_{kj}\}$. For $k\in\mathcal T$, the set of true $H^{\rm p}_k$, the $\zeta_i$ are
independent and identically distributed with mean $\ip{\tilde b_{kj}}{g}-\ip{\tilde b_{kj}}{Z_{kj}^{\top}\theta_{kj}}=0$ (the second
term is zero because $\tilde b_{kj}\perp Z_{kj}$), so $n^{-1/2}\sum_i\zeta_i\dto N(0,\Omega'_{\mathcal T})$ unconditionally, with
$\Omega'_{\mathcal T}=\E\zeta\zeta^{\top}$.

\emph{The reference.} $\hat\theta_{kj}\to\theta_{kj}$, so $e_{kj,i}$ differs from the population residual by
$w_0Z_{kj}^{\top}(\hat\theta_{kj}-\theta_{kj})$ and, under the moment condition, $n^{-1}\sum_i\hat\zeta_i\hat\zeta_i^{\top}\to\Omega'_{\mathcal T}$ almost surely,
where $\hat\zeta_i$ uses $e_i$. Conditionally on the data the reference vector $n^{-1/2}\sum_i\xi_i\hat\zeta_i$ has mean zero and that
covariance, and the Lindeberg condition holds by the $2+\delta$ moment, so its conditional law converges to
$N(0,\Omega'_{\mathcal T})$ almost surely (the multiplier central limit theorem; van der Vaart and Wellner, 1996,
Section~2.9). Because $\xi_i^2=1$, every reference draw has exactly the same robust variance as the data, so using the
one matrix $\hat V^{\rm r}_{kj}\to\Omega'_{kj}$ for the data and all draws is right (it would not be for Gaussian multipliers).
Hence $T_{\mathcal T}$ and its reference copies share a continuous limit law (Lemma~\ref{lem:cont} with $\Sigma=\Omega'_{\mathcal T}$,
$\mu=0$); the observed limit is unconditional and the draws are conditionally independent with bounded-Lipschitz
distance to that law tending to zero almost surely, so Lemma~\ref{lem:rank} and the closure argument of
Theorem~\ref{thm:fwer} conclude. For consistency, a member with $\ip{\tilde b}{g}\neq0$ has $Q\ge\kappa n$ with probability
tending to one, while the reference draws stay tight because the signs remove the mean and the conditional covariance
converges.
\end{proof}

\noindent Without the refit of $Z_{kj}$ the robust variance would estimate $\E[\tilde b\tilde b^{\top}\{Y-p_0\}^2]$ instead of
$\Omega'$; the two differ by a term of indefinite sign whenever $g$ has a component along $Z_{kj}$. For the slope member
under a pure probability-scale intercept departure, $\pi=p_0+0.9w_0$ with $n=2{,}000$, the ratio of the variance of $U$ to
the mean robust variance over 1,500 data sets was 0.905 without the refit and 1.017 with it (standard error about
0.037; deposited script). The two calibrations trade guarantees. The Monte Carlo reference is exact at every sample size when
the model is right, and controls the error asymptotically for local departures. The multiplier reference is only
asymptotic, also under no misfit, but controls the error for departures of any size, stated on the probability
scale. When purity holds exactly in the sample (as in the simulations, where $g$ is orthogonalized on each data set),
each member's robust variance exceeds its conditional variance, so each member's test is conservative. The group
tests were not: in those designs, and under no misfit, the multiplier calibration was liberal at $n\le4{,}000$ (Web
Table~\ref{tab:robust}), as asymptotic calibrations often are in moderate samples.

\section{Implementation}

\begin{remark}[The software's projections]\label{rem:impl}
The software projects with the empirical weights $w_0(X_i)$ (externally) or $\hat w_i$ (in-sample) by weighted least
squares. Externally, conditionally on the covariates, the bases are fixed vectors shared by the data and the Monte
Carlo draws, so the guarantees concern exactly the software's statistics. The software replaces $\cG$ by
$\cG_5=\mathrm{span}\{1,\eta,\mathrm{ns}(\eta,5)\}$ when projecting the covariate bases, so its COV hypothesis is
$\ip{b-\Pi_{\cG_5}b}{h}=0$, not $\Pi_{\cH_{\rm C}}h=0$; Theorem~\ref{thm:fwer} covers the former. The difference is the COV
reading of the link part,
\[
|\ip{(\Pi_{\cG}-\Pi_{\cG_5})b}{h_{\rm L}}|\le\nrm{(\Pi_{\cG}-\Pi_{\cG_5})b}\,\nrm{h_{\rm L}},
\]
which is about 2\% of $\nrm{b-\Pi_{\cG_5}b}$ for the bases of the paper (Table~\ref{tab:approx}). Under local alternatives
this is negligible. Under a fixed link departure the leaked noncentrality grows like $n$, so in very large samples COV can be named
by the approximation error alone; letting the spline dimension grow with $n$ so that the approximation error is
$o(n^{-1/2})$ removes this (Proposition~\ref{prop:sieve}). With one column in $X$, $\cH_{\rm C}=\{0\}$ and the software drops the
COV group. The natural
splines place their boundary knots at the sample range, which has no fixed limit. Externally all statements are
conditional on the observed design, so the software's hypotheses are defined for that design; limit statements hold
along every subsequence on which the (bounded) member correlations converge, and every subsequence has such a
further subsequence. In-sample, Step~4 of the proof of Theorem~\ref{thm:insample} handles the moving knots.
\end{remark}

\begin{proposition}[A COV group that is exactly covariate structure in large samples]\label{prop:sieve}
Let the COV bases be projected off $\cG_K=\mathrm{span}\{1,\eta_0,\mathrm{ns}(\eta_0,K)\}$ with $K=K_n\to\infty$. Assume that $\eta_0$ has
compact support in the population, and that for every COV basis column $b$ and the link part $g_{\rm L}$ of the departure
the approximation errors satisfy $\nrm{(\Pi_{\cG}-\Pi_{\cG_K})b}=O(K^{-s_b})$ and $\nrm{(\Pi_{\cG}-\Pi_{\cG_K})g_{\rm L}}=O(K^{-s_g})$, as for
functions with $s_b$ and $s_g$ square-integrable derivatives (de Boor, 2001; natural-spline boundary constraints cap
the order at about $5/2$ unless the second derivative vanishes at the boundary). If $\sqrt n\,K_n^{-(s_b+s_g)}\to0$,
$K_n^2/n\to0$ and $s_g>1/2$, the mean of the COV statistics under a departure with $g_{\rm C}=0$ is $o(1)$ even for a departure
of fixed size, so with the multiplier calibration (Theorem~\ref{thm:multiplier}, applied along the triangular array of
bases, which is direct because the variances stay bounded) COV keeps its level against any pure link departure. The
software's choice $K_n=\max\{5,\lceil n^{1/3}\rceil\}$ (option \texttt{cov\_df = "auto"}) satisfies these conditions when
$s_b+s_g>3/2$ and $s_g>1/2$.
\end{proposition}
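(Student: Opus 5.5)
The plan is to split the mean of each COV statistic into the part that the finite-dimensional projection fails to remove and a remainder, and then apply Theorem~\ref{thm:multiplier} along the triangular array. Write $\tilde b_K=b-\Pi_{\cG_K}b$ for a COV column projected off the sieve. Let $g=g_{\rm I}+g_{\rm S}+g_{\rm L}$ be a fixed departure with $g_{\rm C}=0$. By Proposition~\ref{prop:beyond}(i), the mean of $n^{-1/2}U$ is exactly $\ip{\tilde b_K}{g}$. Because $\cG_K\supseteq\mathrm{span}\{1,\eta_0\}$, the components $g_{\rm I}$ and $g_{\rm S}$ lie in $\cG_K$ and are orthogonal to $\tilde b_K$. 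Since $g_{\rm L}\in\cG$, we have $\Pi_{\cG}\tilde b_K=(\Pi_{\cG}-\Pi_{\cG_K})b$. Both factors below lie in $\cG\ominus\cG_K$, so
\[
\ip{\tilde b_K}{g_{\rm L}}=\ip{(\Pi_{\cG}-\Pi_{\cG_K})b}{(\Pi_{\cG}-\Pi_{\cG_K})g_{\rm L}} .
\]
Cauchy--Schwarz then bounds this by $O(K^{-(s_b+s_g)})$. The mean of $U=\sqrt n$ times this quantity is therefore $O(\sqrt n K_n^{-(s_b+s_g)})=o(1)$. This step is the product-of-approximation-errors trick that the statement's rate condition is built around.

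Next I check the hypotheses of Theorem~\ref{thm:multiplier} uniformly in $n$. Two points need attention.

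First, the member's limiting covariance must stay positive definite and bounded. The projections $\tilde b_K$ converge in $L^2(w)$ to $b-\Pi_{\cG}b$, because $\nrm{(\Pi_{\cG}-\Pi_{\cG_K})b}\to0$. The robust variances $\Omega'$ therefore converge to a positive definite limit by Assumption~\ref{ass:basis}. This gives the statement's remark that the variances stay bounded.

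Second, the refitted directions $Z_{kj}$, namely $1$, $\eta_0$ and $\mathrm{ns}(\eta_0,K_n)$, now have growing dimension. The offset estimate $\hat\theta$ has $K_n+2$ coordinates. The bias it induces in the residuals and in the projection coefficients contributes sums of order $K_n/\sqrt n$ to $U$, and terms of order $K_n^2/n$ to the robust variance. Here I would use compact support of $\eta_0$ to bound the spline basis and its Gram matrix eigenvalues, the standard sieve least-squares bounds of Newey type. The condition $K_n^2/n\to0$ makes all these terms negligible.

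The role of $s_g>1/2$ is to control the leak of the link part into the residual $e$. It makes the refitted offset approximate $g_{\rm L}$ well enough, at the rate $\nrm{(\Pi_{\cG}-\Pi_{\cG_K})g_{\rm L}}=o(K^{-1/2})$. This matters because the cross-term, after the $\sqrt K$ sup-norm inflation of splines on a compact support, is then $o(1)$.

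With the mean $o(1)$ and the variance estimate consistent, the observed COV statistics and their multiplier copies share the null limit $N(0,\Omega')$. Lemma~\ref{lem:cont}, Lemma~\ref{lem:rank} and the closure argument of Theorem~\ref{thm:fwer} then give level control for COV against any pure link departure.

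Finally, the claim about $K_n=\max\{5,\lceil n^{1/3}\rceil\}$ is arithmetic. We get $K_n^2/n=n^{-1/3}\to0$ and $\sqrt n K_n^{-(s_b+s_g)}=n^{1/2-(s_b+s_g)/3}\to0$ iff $s_b+s_g>3/2$.

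The main obstacle I expect is the triangular-array version of the multiplier central limit theorem with a growing-dimensional refit of $Z_{kj}$. In particular, I need the uniform Lindeberg condition and the consistency of $\hat V^{\rm r}$ when $e_{kj,i}$ depends on $K_n+2$ estimated coefficients. For this I would first try the sup-norm bound $\sup|\mathrm{ns}(\eta_0,K)^{\top}\theta|\lesssim\sqrt K\,\nrm{\cdot}$ on the compact support.
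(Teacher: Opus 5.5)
Your proposal follows the paper's proof. The population part of the COV mean is $\sqrt n\,\ip{(\Pi_{\cG}-\Pi_{\cG_K})b}{(\Pi_{\cG}-\Pi_{\cG_K})g_{\rm L}}$, bounded by Cauchy--Schwarz. The error from projecting $b$ off $\cG_K$ in the sample is $O_p\{(K/n)^{1/2}\}$ when $K^2/n\to0$, and its product with $\sqrt n\,\nrm{(\Pi_{\cG}-\Pi_{\cG_K})g_{\rm L}}$ is the $O_p(K^{1/2-s_g})$ cross-term that $s_g>1/2$ removes; Theorem~\ref{thm:multiplier} is then applied, and the choice of $K_n$ is checked by arithmetic.

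The differences are only of emphasis. That cross-term comes from the sample projection of $b$ inside $U$: the offset refit leaves $U$ itself unchanged, by exact sample orthogonality. It also needs only Cauchy--Schwarz in the empirical norm, not a sup-norm inflation, which would cost an extra $\sqrt K$. Your attention to the triangular-array multiplier CLT spells out what the paper dismisses as direct because the variances stay bounded.
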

\begin{proof}
With the empirical projection the COV basis is $b-\hat\Pi_Kb$, exactly orthogonal in the sample to $\cG_K$. Since
$g_{\rm L}\in\cG$, the mean of $U$ is $n^{-1/2}\sum_i(b-\hat\Pi_Kb)w_0g_{\rm L}=n^{-1/2}\sum_i(b-\hat\Pi_Kb)w_0(g_{\rm L}-\Pi_Kg_{\rm L})$
up to the sample--population difference. Its population version is
$\sqrt n\,\ip{b-\Pi_Kb}{(\Pi_{\cG}-\Pi_K)g_{\rm L}}=\sqrt n\,\ip{(\Pi_{\cG}-\Pi_K)b}{(\Pi_{\cG}-\Pi_K)g_{\rm L}}$, because
$(\Pi_{\cG}-\Pi_K)g_{\rm L}\in\cG$ and $b-\Pi_{\cG}b\perp\cG$; by Cauchy--Schwarz it is $O(\sqrt nK^{-(s_b+s_g)})=o(1)$. The estimation
error of the projection coefficients is $O_p\{(K/n)^{1/2}\}$ in $L^2(w)$ when $K^2/n\to0$ (the empirical Gram matrix of a spline basis
converges in operator norm), and multiplied by $\sqrt n\,\nrm{(\Pi_{\cG}-\Pi_K)g_{\rm L}}=O(\sqrt nK^{-s_g})$ it contributes
$O_p(K^{1/2-s_g})=o_p(1)$ for $s_g>1/2$. The variance is bounded, so Theorem~\ref{thm:multiplier} applies with the COV
hypothesis equal, in the limit, to $g_{\rm C}=0$.
\end{proof}

\begin{remark}[The robust option]\label{rem:robust}
With \texttt{robust = TRUE} every continuous covariate, every continuous model column and the score are replaced by
their normal scores $\Phi^{-1}\{\mathrm{rank}/(n+1)\}$ before the bases are built. Each basis column is then a fixed
polynomial or spline of quantities bounded by $\Phi^{-1}\{n/(n+1)\}\le\sqrt{2\log n}$, so one record changes each score
statistic $U$ by at most a constant times $(\log n)^{c}/\sqrt n$, with $c$ the degree of the basis, whatever its covariate
values. Externally the transformed bases are functions of the covariates and of the frozen predictions only, so the Monte
Carlo reference stays exact under the null and Theorems~\ref{thm:law}--\ref{thm:consist} hold with the pieces defined on
the normal-score scale ($\cG$ is unchanged, since the normal score of $\eta_0$ is a monotone function of it, but the
\textsc{slope} and link bases are orthogonalized against $\{1,\Phi^{-1}(\mathrm{rank}\,\eta_0)\}$ instead of $\{1,\eta_0\}$). In-sample the
covariate and model-column scores are computed once from the data and kept fixed in every bootstrap draw, while the
score of each refit is transformed again; ranks are not smooth in $\theta$, so Assumption~\ref{ass:fit} does not hold
and Lemma~\ref{lem:boot} is not claimed for this option. Its level was checked by simulation instead (0.047 at
$n=1{,}000$, Table~\ref{tab:robustopt}).
Two consequences for reading a verdict. First, the pieces change meaning with the scale: a misfit that is a pure slope
error on the scale of $\eta_0$ is linear in $\eta_0$ but not in its normal score when $\eta_0$ is far from Gaussian, so
part of it falls in the robust \textsc{link} basis, and the rung that the robust option names can be one step higher
than the one the default names. \textsc{intercept} and \textsc{cov} are not affected in this way, since the first is a
constant and the second is defined by orthogonality to all functions of the score. Second, the bound on one record's
influence is on the test statistics only; the fitted model itself is not robust, which is why part of the false
verdicts under corrupted records remains (Tables~\ref{tab:corrupt} and~\ref{tab:robustopt}).
\end{remark}

\begin{remark}[Numerical truncation]\label{rem:clip}
The software truncates member $p$-values to $[10^{-12},1-10^{-12}]$ before the Cauchy transform. The theorems are stated
for the untruncated statistic. Under the reference law the probability that any of $J$ members in any of the $M+1$
data sets is truncated is at most about $2\times10^{-12}J(M+1)$, and only such an event can change a rank $p$-value; in
Theorem~\ref{thm:consist}(b) the limit is therefore $1-O(10^{-12}JM)$ rather than $1$.
\end{remark}

\begin{remark}[Scope]\label{rem:scope}
Lemma~\ref{lem:erase} is exact in any dimension, and the Monte Carlo reference of external validation is exact under the
global null in any dimension and for predictions from any model. The other results rest on $\sqrt n$ asymptotics with a
fixed number of covariates; when $d/n$ is not small the maximum-likelihood estimate is biased and its variance is
underestimated (Sur and Cand\`es, 2019), and the in-sample bootstrap is no longer justified. In-sample checking as
implemented assumes the unpenalized logistic fit. Other fits erase their own tangent space
(Theorem~\ref{thm:insample}d), which need not be $\cX$: the in-sample groups must then be projected off that space, a
ridge or lasso fit with an unpenalized intercept still satisfies $\sum_i(Y_i-\hat p_i)=0$, and for a bias-reduced (Firth)
fit $\sum_i(Y_i-\hat p_i)=O(1)$, so the intercept and slope statistics degenerate after scaling by $n^{-1/2}$ and must not
be used. Those fits are outside the results proved here; their external validation is covered.
\end{remark}

\section{Numerical checks}\label{sec:numerics}
All tables below are written by script from the stored per-data-set results; the design is the five-covariate
Gaussian design of the paper's simulations, $\eta_0=-0.5+0.5x_1+0.8x_2+0.6x_3+0.4x_4+0.3x_5$, $M=999$, $B=199$,
$\alpha=.05$.

\paragraph{What the fit erases (Table~\ref{tab:t1}).} After a logit fit the two sums of Lemma~\ref{lem:erase} are below
$10^{-5}$ and the in-sample recalibration is $(0,1)$ to numerical precision, whatever the truth. After a probit fit the
sums reach about 4 and the recalibration slope is off by up to .026: the identities belong to the canonical link
(Theorem~\ref{thm:insample}d).

\paragraph{The spline approximation (Table~\ref{tab:approx}).} The five-degree-of-freedom spline in $\eta$ leaves about
2\% of a COV basis column inside $\mathcal G$, so the COV reading of a link departure is about $4\times10^{-4}$ of its
noncentrality (Remark~\ref{rem:impl}).

\paragraph{Fixed departures (Table~\ref{tab:decomp}).} For each fixed-size departure of the simulations we computed,
on the same data sets the procedure analyzed, the true noncentrality of every group. Where a group other than the
departure's own was named often, it carried a real component on the probability scale, created by the logit scale
(Proposition~\ref{prop:beyond}i) or by the truncation of the true risks (Proposition~\ref{prop:beyond}ii): for the
largest covariate departure, SLOPE has a member with power 1.00 and was named in all data sets. These are correct
detections, not familywise errors. What remains is an excess of up to about .13 over the member's own power where
that power is near .05--.15, the variance effect of Proposition~\ref{prop:beyond}(iii).

\paragraph{Local departures (Tables~\ref{tab:localex}--\ref{tab:localmo}).} Under no misfit the familywise rate is
.035--.060 (standard errors .009--.017). For the large departures of the simulations the wrong-group rate falls as $n$
grows when the departure shrinks like $n^{-1/2}$ and rises when its size is held fixed (pure slope: .52, .21, .095 against
.28, .40, .74 at $n=500$, 2,000, 8,000), the contrast between Theorem~\ref{thm:fwer} and Proposition~\ref{prop:beyond}.
These departures are still far from local at $n=8{,}000$, so the moderate departures were followed further: their
wrong-group rate falls from .12--.17 at $n=1{,}000$ to .050--.085 at $n=16{,}000$ and $32{,}000$; pooled over these two sizes
it is .065 (standard error .006, 1,800 data sets), while the own group is still found in 70--100\% of data sets. The
small remaining excess is the second-order term, which vanishes at rate $n^{-1/2}$. In-sample the pure link departure
goes from .435 to .060 by $n=4{,}000$; the large covariate departure, of size 1.0 on the logit scale even at $n=4{,}000$,
falls from .305 to .145--.195 more slowly.

\begin{table}[ht]\centering\small
\caption{In-sample identities after a logit and a probit fit (200 data sets of size 1,000 per truth; largest absolute value over data sets). Sums are $\sum_i(Y_i-\hat p_i)$ and $\sum_i(Y_i-\hat p_i)\hat\eta_i$; $a$ and $b-1$ are the intercept and slope error of the logistic recalibration of $\hat p$ on the same data.}\label{tab:t1}
\begin{tabular}{lrrrrrrrr}\toprule
& \multicolumn{4}{c}{logit fit} & \multicolumn{4}{c}{probit fit}\\ \cmidrule(lr){2-5}\cmidrule(lr){6-9}
truth & sum & sum$\cdot\hat\eta$ & $a$ & $b-1$ & sum & sum$\cdot\hat\eta$ & $a$ & $b-1$\\ \midrule
logistic (null) & 1.9e-06 & 5.1e-06 & 6.7e-09 & 3.5e-08 & 2.27 & 2.74 & 0.011 & 0.019\\
curved link & 3.5e-11 & 9.9e-11 & 1.9e-15 & 8.4e-15 & 3.93 & 2.94 & 0.019 & 0.026\\
interaction & 2.2e-06 & 4.5e-06 & 2.0e-10 & 2.9e-09 & 4.18 & 1.76 & 0.022 & 0.017\\
\bottomrule\end{tabular}\end{table}

\begin{table}[ht]\centering\small
\caption{Remark~\ref{rem:impl}: relative size $\|(\Pi_{\mathcal G}-\Pi_{\mathcal G_5})b\|_w/\|b-\Pi_{\mathcal G_5}b\|_w$ of the part of a COV basis column that the five-degree-of-freedom spline leaves in $\mathcal G$ ($\mathcal G$ approximated by a 40-degree-of-freedom spline; simulation design, $n=10^5$).}\label{tab:approx}
\begin{tabular}{lrrr}\toprule COV member & columns & median & largest\\ \midrule
poly & 10 & 0.020 & 0.038\\
products & 10 & 0.020 & 0.022\\
spline & 15 & 0.021 & 0.026\\
\bottomrule\end{tabular}\end{table}

\begin{table}[ht]\centering\small
\caption{Fixed-size departures of the simulations (external validation, $n=1{,}000$, the same 100 data sets per cell that the procedure analyzed): groups other than the departure's own that were named in at least 10\% of data sets or carry a member with power at least .10. Noncentrality: the largest member noncentrality of the group computed from the true risks; power: that member's power alone at level .05; named: rate at which the procedure named the group.}\label{tab:decomp}
\begin{tabular}{llrlrrr}\toprule scale & departure & size & group & noncentrality & power & named\\ \midrule
probability-scale & slope p & 0.6 & LINK & 1.09 & 0.14 & 0.00\\
probability-scale & slope p & 1.2 & LINK & 15.36 & 0.95 & 0.95\\
probability-scale & slope p & 2.4 & INTERCEPT & 0.74 & 0.14 & 0.00\\
probability-scale & slope p & 2.4 & LINK & 51.04 & 1.00 & 1.00\\
probability-scale & link p & 1 & INTERCEPT & 0.67 & 0.13 & 0.07\\
probability-scale & link p & 1 & SLOPE & 4.08 & 0.52 & 0.36\\
probability-scale & link p & 2 & INTERCEPT & 7.02 & 0.75 & 0.60\\
probability-scale & link p & 2 & SLOPE & 36.38 & 1.00 & 1.00\\
probability-scale & cov p & 1 & SLOPE & 0.63 & 0.12 & 0.04\\
probability-scale & cov p & 2 & INTERCEPT & 1.81 & 0.27 & 0.12\\
probability-scale & cov p & 2 & SLOPE & 8.19 & 0.80 & 0.72\\
probability-scale & cov p & 2 & LINK & 1.10 & 0.14 & 0.10\\
logit-scale & slope pure & 0.6 & LINK & 1.16 & 0.15 & 0.28\\
logit-scale & slope pure & 0.6 & COV & 0.00 & 0.05 & 0.10\\
logit-scale & link pure & 0.25 & SLOPE & 2.70 & 0.37 & 0.24\\
logit-scale & link pure & 0.25 & COV & 0.15 & 0.05 & 0.10\\
logit-scale & link pure & 0.5 & INTERCEPT & 1.69 & 0.25 & 0.17\\
logit-scale & link pure & 0.5 & SLOPE & 38.77 & 1.00 & 1.00\\
logit-scale & cov pure & 1 & INTERCEPT & 0.48 & 0.11 & 0.06\\
logit-scale & cov pure & 1 & SLOPE & 5.00 & 0.60 & 0.36\\
logit-scale & cov pure & 1 & LINK & 0.93 & 0.13 & 0.14\\
logit-scale & cov pure & 2 & INTERCEPT & 3.99 & 0.50 & 0.40\\
logit-scale & cov pure & 2 & SLOPE & 32.17 & 1.00 & 1.00\\
logit-scale & cov pure & 2 & LINK & 9.82 & 0.67 & 0.53\\
\bottomrule\end{tabular}\end{table}

\begin{table}[ht]\centering\small
\caption{External validation, large departures: rate at which every group with a part of the departure is named (own) and at which any other group is named (wrong; nominal bound .05; Monte Carlo standard error at most 0.025). Local track: size $C\sqrt{1000/n}$; fixed track: size $C$. 400 data sets per cell.}\label{tab:localex}
\begin{tabular}{llrrrrrr}\toprule
 & & \multicolumn{2}{c}{$n=500$} & \multicolumn{2}{c}{$n=2{,}000$} & \multicolumn{2}{c}{$n=8{,}000$}\\
\cmidrule(lr){3-4} \cmidrule(lr){5-6} \cmidrule(lr){7-8}
departure & track & own & wrong & own & wrong & own & wrong\\ \midrule
none & local & -- & 0.048 & -- & 0.035 & -- & 0.048\\
slope only & local & 1.00 & 0.520 & 1.00 & 0.205 & 0.99 & 0.095\\
slope only & fixed & 0.95 & 0.278 & 1.00 & 0.400 & 1.00 & 0.735\\
link only & local & 1.00 & 1.000 & 1.00 & 0.985 & 1.00 & 0.405\\
link only & fixed & 1.00 & 0.912 & 1.00 & 1.000 & 1.00 & 1.000\\
covariate only & local & 1.00 & 0.998 & 1.00 & 0.998 & 1.00 & 0.863\\
covariate only & fixed & 1.00 & 0.945 & 1.00 & 1.000 & 1.00 & 1.000\\
link + covariate & local & 1.00 & 1.000 & 1.00 & 1.000 & 1.00 & 1.000\\
link + covariate & fixed & 1.00 & 1.000 & 1.00 & 1.000 & 1.00 & 1.000\\
\bottomrule\end{tabular}\end{table}

\begin{table}[ht]\centering\small
\caption{In-sample checking: rate at which every group with a part of the departure is named (own) and at which any other group is named (wrong; nominal bound .05; Monte Carlo standard error at most 0.035). Local track: size $C\sqrt{1000/n}$; fixed track: size $C$. 200 data sets per cell.}\label{tab:localin}
\begin{tabular}{llrrrrrr}\toprule
 & & \multicolumn{2}{c}{$n=500$} & \multicolumn{2}{c}{$n=2{,}000$} & \multicolumn{2}{c}{$n=4{,}000$}\\
\cmidrule(lr){3-4} \cmidrule(lr){5-6} \cmidrule(lr){7-8}
departure & track & own & wrong & own & wrong & own & wrong\\ \midrule
none & local & -- & 0.050 & -- & 0.060 & -- & 0.055\\
link only & local & 1.00 & 0.435 & 1.00 & 0.155 & 1.00 & 0.060\\
covariate only & local & 1.00 & 0.305 & 1.00 & 0.145 & 1.00 & 0.195\\
link + covariate & local & 0.86 & 0.000 & 1.00 & 0.000 & 1.00 & 0.000\\
\bottomrule\end{tabular}\end{table}

\begin{table}[ht]\centering\small
\caption{External validation, moderate departures followed to $n=32{,}000$: rate at which every group with a part of the departure is named (own) and at which any other group is named (wrong; nominal bound .05; Monte Carlo standard error at most 0.019). Local track: size $C\sqrt{1000/n}$; fixed track: size $C$. 400 (200 at $n=32{,}000$) data sets per cell.}\label{tab:localmo}
\begin{tabular}{llrrrrrrrr}\toprule
 & & \multicolumn{2}{c}{$n=1{,}000$} & \multicolumn{2}{c}{$n=4{,}000$} & \multicolumn{2}{c}{$n=16{,}000$} & \multicolumn{2}{c}{$n=32{,}000$}\\
\cmidrule(lr){3-4} \cmidrule(lr){5-6} \cmidrule(lr){7-8} \cmidrule(lr){9-10}
departure & track & own & wrong & own & wrong & own & wrong & own & wrong\\ \midrule
slope only & local & 0.82 & 0.150 & 0.77 & 0.087 & 0.70 & 0.085 & 0.71 & 0.055\\
link only & local & 0.90 & 0.117 & 0.90 & 0.075 & 0.85 & 0.050 & 0.82 & 0.070\\
covariate only & local & 1.00 & 0.168 & 1.00 & 0.083 & 1.00 & 0.062 & 1.00 & 0.065\\
\bottomrule\end{tabular}\end{table}

\part*{Appendix B: Simulation and application details}
\setcounter{section}{0}
\renewcommand{\thesection}{B.\arabic{section}}
\renewcommand{\thetable}{B.\arabic{table}}
\renewcommand{\thefigure}{B.\arabic{figure}}
\setcounter{figure}{0}
\setcounter{table}{0}

\section{Software}

The procedure is the R file \texttt{localize\_groups.R} of the deposit, with two functions:
\texttt{localize\_external(y, p, X, M = 999, calibration = c("montecarlo", "multiplier"), cov\_df = 5)} for frozen
predictions \texttt{p} and covariates \texttt{X}, and \texttt{localize\_insample(fit, X, B = 199, dealias = FALSE, cov\_df = 5)}
for a fitted logistic \texttt{glm}. Both return the groups named, the $p$-values of all intersection tests and of every
member. Both return an object of class \texttt{gof\_localize}, and the file \texttt{localize\_plot.R} gives it a
\texttt{plot} method: \texttt{plot(r)} draws the misfit compass and the lattice of closed tests of Figure~4 of the paper,
with the verdict and the next step written underneath (\texttt{colour = FALSE} for greys; \texttt{which = "compass"} or
\texttt{"lattice"} for one panel). Members: \textsc{intercept}, the constant; \textsc{slope}, the score projected off the constant; \textsc{link},
$(\eta^2,\eta^3)$, the two Stukel terms $\tfrac12\eta^2\mathbf 1(\eta\ge0)$ and $-\tfrac12\eta^2\mathbf 1(\eta<0)$, and
$\mathrm{ns}(\eta,4)$, each projected off $\{1,\eta\}$ (and in-sample off the model matrix); \textsc{cov}, the squares and
cubes of the standardized continuous covariates, all pairwise products of standardized covariates, and
$\mathrm{ns}(x_j,3)$ of each continuous covariate, each projected off $\{1,\eta,\mathrm{ns}(\eta,5)\}$ (and in-sample off the model
matrix). A covariate is continuous when it has more than four distinct values. Projections are weighted least
squares with weights $p(1-p)$, through a pivoted QR decomposition so that redundant columns are dropped. Member
$p$-values are truncated to $[10^{-12},1-10^{-12}]$ before the Cauchy transform (Remark~\ref{rem:clip}).

\section{Simulation design}

\paragraph{Main study (Figure 1 of the paper).} Five independent $N(0,1)$ covariates; frozen score
$\eta_0=-0.5+0.5x_1+0.8x_2+0.6x_3+0.4x_4+0.3x_5$; true $\mathrm{logit}\,\pi=\eta_0+d$ with $d=0$ (null), $C$ (intercept shift),
$(C-1)\eta_0$ (slope $C$), $C(\eta_0^2-1)$ (curved link), $C(x_1^2-1)$ (U-shape), $C\max(x_2-1,0)$ (threshold), $Cx_1x_3$
(interaction). The ``pure'' families of Tables~\ref{tab:decomp}--\ref{tab:localmo} project the same functions, on each
data set and with the weights $w_0$, off the earlier pieces: $\eta_0$ off $1$; $\eta_0^2$ off $\{1,\eta_0\}$; $x_1x_3$ off
$\{1,\eta_0,\mathrm{ns}(\eta_0,5)\}$. Seeds are fixed per cell and replicate and recorded in the deposit, with every
per-data-set result.

\paragraph{Local and moderate departures (Tables~\ref{tab:localex}--\ref{tab:localmo}).} The pure families at size
$C\sqrt{1000/n}$ (local track) or $C$ (fixed track): large sizes $C=-0.4$ (slope), $0.5$ (link), $2$ (covariate), and a
combination of link $C$ and covariate $4C$ with $C=0.5$; moderate sizes $-0.25$, $0.2$ and $0.7$ followed to
$n=32{,}000$.

\paragraph{Multiplier calibration (Table~\ref{tab:robust}).} Departures pure on the probability scale and inside
$(0,1)$ without truncation: $\pi=p_0+s\,w_0g$ with $g$ bounded on the sample and $s=0.95/\max|g|$, so
$p_0-w_0<\pi<p_0+w_0$. \emph{link\_b}: $g$ is the third column of $\mathrm{ns}(\eta_0,5)$ projected off $\{1,\eta_0\}$, so only
\textsc{link} has a part (the COV bases are projected off $\mathrm{ns}(\eta_0,5)$ and cannot read it). \emph{cov\_b}:
$g=\tanh(x_1x_3)$ projected off $\{1,\eta_0,\mathrm{ns}(\eta_0,5)\}$ and off every \textsc{link} basis, so only \textsc{cov} has a
part. The departures have fixed size, so Theorem~\ref{thm:fwer} does not apply and Theorem~\ref{thm:multiplier} does. Both
calibrations are applied to the same data sets.

\paragraph{De-aliasing (Table~\ref{tab:dealias}).} In-sample, $x_1\sim N(0,1)$, $x_2=E+0.5x_1^2$ with $E$ standard
exponential, $x_3\sim\mathrm{Bernoulli}(0.4)$; model $\eta_0=-1+0.6x_1+0.5x_2+0.5x_3$, fitted with main effects. Here
$\E(X\mid\eta_0)$ is not affine. Departures (normalized to unit $\nrm{\cdot}$): covariate structure, $0.6\,h$ with
$h=x_1x_2$ projected off $\{1,\eta_0,\mathrm{ns}(\eta_0,5)\}$; link, $0.3\,h$ with $h=\eta_0^2$ projected off $\{1,\eta_0\}$. Each
data set is analyzed with and without de-aliasing.

\paragraph{Corrupted records (Table~\ref{tab:corrupt}).} The stored data sets of a separate study of directed
tests: $x\sim U(-3,3)$, $d\sim\mathrm{Bernoulli}(0.5)$, $\mathrm{logit}\,\pi=0.6x+0.5d$, the correct model fitted, and after the outcomes
were drawn $x$ multiplied by 4 or 8 in 3 of 500 or 10 of 1,000 records. Every data set is rebuilt from its seed and
checked against the stored number of events and fitted slope. Any group named is a false one.

\section{Results of the additional studies}

\begin{table}[ht]\centering\small
\caption{Monte Carlo versus multiplier calibration, external validation, departures of fixed size that are pure on the probability scale. Own: every group with a part named; wrong: any other group named (nominal bound .05). Same data sets for both calibrations; 400 data sets per cell (200 at $n=16{,}000$).}\label{tab:robust}
\begin{tabular}{lrrrrr}\toprule & & \multicolumn{2}{c}{Monte Carlo} & \multicolumn{2}{c}{multiplier}\\
\cmidrule(lr){3-4}\cmidrule(lr){5-6} departure & $n$ & own & wrong & own & wrong\\ \midrule
none & 1{,}000 & -- & 0.045 & -- & 0.090\\
none & 4{,}000 & -- & 0.055 & -- & 0.070\\
link (probability scale) & 1{,}000 & 0.09 & 0.043 & 0.03 & 0.055\\
link (probability scale) & 4{,}000 & 0.16 & 0.033 & 0.08 & 0.037\\
link (probability scale) & 16{,}000 & 0.36 & 0.065 & 0.31 & 0.055\\
covariate (probability scale) & 1{,}000 & 0.63 & 0.045 & 0.68 & 0.080\\
covariate (probability scale) & 4{,}000 & 0.99 & 0.037 & 0.99 & 0.058\\
covariate (probability scale) & 16{,}000 & 1.00 & 0.070 & 1.00 & 0.070\\
gross miscalibration$^a$ & 1{,}000 & 0.92 & 0.003 & 0.98 & 0.120\\
gross miscalibration$^a$ & 4{,}000 & 1.00 & 0.000 & 1.00 & 0.085\\
\bottomrule\end{tabular}
\par\footnotesize{$^a$$\mathrm{logit}\,\pi=-2+0.8\eta_0$: about 11\% observed against 40\% predicted. \textsc{intercept}, \textsc{slope} and \textsc{link} have parts on the probability scale, so only \textsc{cov} counts as wrong; own: all three named.}
\end{table}

\begin{table}[ht]\centering\small
\caption{In-sample checking with a skewed and a binary covariate ($\E(X\mid\eta_0)$ not affine), $n=1{,}000$, $B=199$: rate at which each group is named, default and de-aliased link group (Proposition~\ref{prop:dealias}); 200 data sets per cell.}\label{tab:dealias}
\begin{tabular}{lrrrr}\toprule & \multicolumn{2}{c}{default} & \multicolumn{2}{c}{de-aliased}\\
\cmidrule(lr){2-3}\cmidrule(lr){4-5} departure & LINK & COV & LINK & COV\\ \midrule
none & 0.025 & 0.025 & 0.020 & 0.020\\
covariate structure & 0.055 & 0.415 & 0.030 & 0.385\\
link shape & 0.660 & 0.045 & 0.010 & 0.030\\
\bottomrule\end{tabular}\end{table}

\begin{table}[ht]\centering\small
\caption{Corrupted records, in-sample, correct model: rate at which any group, LINK or COV is named (all are false); default and de-aliased; 100 data sets per cell.}\label{tab:corrupt}
\begin{tabular}{lrrrrrr}\toprule & \multicolumn{3}{c}{default} & \multicolumn{3}{c}{de-aliased}\\
\cmidrule(lr){2-4}\cmidrule(lr){5-7} data & any & LINK & COV & any & LINK & COV\\ \midrule
$n=500$, clean & 0.07 & 0.04 & 0.03 & 0.08 & 0.05 & 0.03\\
$n=500$, $x\times4$ in 3 records & 0.46 & 0.44 & 0.34 & 0.43 & 0.41 & 0.34\\
$n=500$, $x\times8$ in 3 records & 0.49 & 0.48 & 0.43 & 0.52 & 0.50 & 0.44\\
$n=1{,}000$, clean & 0.02 & 0.01 & 0.01 & 0.06 & 0.05 & 0.01\\
$n=1{,}000$, $x\times4$ in 10 records & 0.65 & 0.63 & 0.51 & 0.64 & 0.62 & 0.50\\
\bottomrule\end{tabular}\end{table}
\begin{table}[ht]\centering\small
\caption{Burn injury mortality (1,000 patients, 40 facilities, 150 deaths). In-sample checking ($B=199$) of the main-effects model B0 and of three revisions, and cross-centre external validation over 20 random splits of the facilities into development and validation halves ($M=999$): parts named and single-group $p$-values in-sample; share of the 20 splits naming each part externally.}\label{tab:burn}
\begin{tabular}{llrrr}\toprule model & in-sample: parts named & AIC & L & C\\ \midrule
B0 main effects & \textsc{cov} & 350.5 & 0.075 & 0.005\\
B1: spline in age & \textsc{cov} & 346.6 & 0.115 & 0.010\\
B2: B1 + age $\times$ inhalation & \textsc{cov} & 338.0 & 0.440 & 0.010\\
B3: B2 + spline in burn area & none & 333.2 & 0.510 & 0.335\\
B0, de-aliased link group & \textsc{cov} & & 0.965 & 0.005\\
\midrule & \multicolumn{4}{l}{cross-centre validation: share of 20 facility splits naming}\\
model & I & S & L & C\\ \midrule
B0 main effects & 0.15 & 0.35 & 0.30 & 0.70\\
B3: B2 + spline in burn area & 0.10 & 0.35 & 0.10 & 0.75\\
\midrule & \multicolumn{4}{l}{same splits, outcomes simulated from the model fitted to all patients}\\
& \multicolumn{4}{l}{(model right by construction; mean over 25 data sets)}\\ \midrule
B0 main effects & 0.09 & 0.18 & 0.15 & 0.17\\
B3: B2 + spline in burn area & 0.12 & 0.36 & 0.29 & 0.45\\
\bottomrule\end{tabular}\end{table}
\begin{table}[ht]\centering\small
\caption{Detection at $n=1{,}000$, the same data sets as Figure~1 of the paper: rejection rate at level 0.05 of the procedure's global test (the intersection of all groups) and of single tests; under no misfit the rate is the level. A part is named in marginally fewer data sets than the global test rejects (Figure 1 of the paper). The le Cessie test was run in-sample only. Externally the Hosmer--Lemeshow test uses deciles of the frozen predictions with 10 degrees of freedom; the calibration belt is that of GiViTI. 300 data sets per cell (1,000 under no misfit).}\label{tab:compare}
\begin{tabular}{llrrrrr}\toprule setting & departure & procedure & Hosmer--Lemeshow & Spiegelhalter & belt & le Cessie\\ \midrule
external & none & 0.05 & 0.04 & 0.04 & 0.04 & --\\
external & intercept shift & 1.00 & 0.99 & 0.61 & 1.00 & --\\
external & slope 0.6 & 1.00 & 1.00 & 1.00 & 1.00 & --\\
external & curved link & 1.00 & 1.00 & 0.92 & 1.00 & --\\
external & U-shape & 1.00 & 0.58 & 0.42 & 0.65 & --\\
external & threshold & 0.96 & 0.46 & 0.18 & 0.77 & --\\
external & interaction & 1.00 & 0.56 & 0.35 & 0.53 & --\\
in-sample & none & 0.06 & 0.05 & -- & 0.06 & 0.06\\
in-sample & curved link & 1.00 & 0.96 & -- & 1.00 & 0.84\\
in-sample & U-shape & 1.00 & 0.16 & -- & 0.44 & 1.00\\
in-sample & threshold & 0.83 & 0.18 & -- & 0.38 & 0.57\\
in-sample & interaction & 1.00 & 0.28 & -- & 0.66 & 1.00\\
\bottomrule\end{tabular}\end{table}

\begin{table}[ht]\centering\small
\caption{Ways of naming the parts, on the same data sets as the main simulation: the closed procedure, naming every group whose own test has $p\le0.05$ (naive), and Bonferroni and Holm corrections over the groups' own tests. Error: rate of naming any part under no misfit; right action and own part: averages over the departure cells of the main simulation.}\label{tab:multiplicity}
\begin{tabular}{llrrrrrrrrrrrr}\toprule & & \multicolumn{4}{c}{error under no misfit} & \multicolumn{4}{c}{right action} & \multicolumn{4}{c}{own part named}\\
\cmidrule(lr){3-6}\cmidrule(lr){7-10}\cmidrule(lr){11-14} setting & $n$ & clos. & naive & Bonf. & Holm & clos. & naive & Bonf. & Holm & clos. & naive & Bonf. & Holm\\ \midrule
external & 500 & 0.037 & 0.158 & 0.032 & 0.032 & 0.742 & 0.772 & 0.774 & 0.765 & 0.806 & 0.882 & 0.813 & 0.816\\
external & 1{,}000 & 0.051 & 0.184 & 0.043 & 0.043 & 0.825 & 0.830 & 0.863 & 0.847 & 0.915 & 0.962 & 0.923 & 0.924\\
in-sample & 500 & 0.049 & 0.105 & 0.047 & 0.047 & 0.835 & 0.859 & 0.851 & 0.835 & 0.880 & 0.904 & 0.879 & 0.880\\
in-sample & 1{,}000 & 0.055 & 0.100 & 0.055 & 0.055 & 0.920 & 0.930 & 0.934 & 0.917 & 0.970 & 0.980 & 0.963 & 0.967\\
\bottomrule\end{tabular}\end{table}

\begin{table}[ht]\centering\small
\caption{Small samples, $n=200$ and $300$ (design of the main simulation): rate of naming each part and the right-action rate. 300 data sets per cell (1,000 under no misfit).}\label{tab:small}
\begin{tabular}{llrrrrrr}\toprule setting & departure & $n$ & I & S & L & C & right\\ \midrule
external & none & 200 & 0.00 & 0.01 & 0.03 & 0.01 & 0.95\\
external & none & 300 & 0.00 & 0.00 & 0.01 & 0.02 & 0.96\\
external & intercept shift & 200 & 0.36 & 0.01 & 0.04 & 0.03 & 0.33\\
external & intercept shift & 300 & 0.62 & 0.01 & 0.03 & 0.02 & 0.58\\
external & slope 0.6 & 200 & 0.08 & 0.62 & 0.18 & 0.10 & 0.42\\
external & slope 0.6 & 300 & 0.08 & 0.73 & 0.21 & 0.10 & 0.48\\
external & curved link & 200 & 0.03 & 0.36 & 0.70 & 0.12 & 0.59\\
external & curved link & 300 & 0.05 & 0.50 & 0.86 & 0.14 & 0.73\\
external & U-shape & 200 & 0.02 & 0.10 & 0.21 & 0.92 & 0.92\\
external & U-shape & 300 & 0.03 & 0.15 & 0.25 & 0.99 & 0.99\\
external & threshold & 200 & 0.02 & 0.01 & 0.02 & 0.07 & 0.07\\
external & threshold & 300 & 0.05 & 0.01 & 0.02 & 0.12 & 0.12\\
external & interaction & 200 & 0.01 & 0.09 & 0.23 & 0.68 & 0.68\\
external & interaction & 300 & 0.03 & 0.11 & 0.29 & 0.94 & 0.94\\
in-sample & none & 200 & -- & -- & 0.03 & 0.03 & 0.95\\
in-sample & none & 300 & -- & -- & 0.03 & 0.03 & 0.95\\
in-sample & curved link & 200 & -- & -- & 0.44 & 0.07 & 0.39\\
in-sample & curved link & 300 & -- & -- & 0.66 & 0.06 & 0.61\\
in-sample & U-shape & 200 & -- & -- & 0.16 & 0.94 & 0.94\\
in-sample & U-shape & 300 & -- & -- & 0.15 & 1.00 & 1.00\\
in-sample & threshold & 200 & -- & -- & 0.06 & 0.08 & 0.08\\
in-sample & threshold & 300 & -- & -- & 0.04 & 0.10 & 0.10\\
in-sample & interaction & 200 & -- & -- & 0.16 & 0.66 & 0.66\\
in-sample & interaction & 300 & -- & -- & 0.21 & 0.95 & 0.95\\
\bottomrule\end{tabular}\end{table}
\begin{table}[ht]\centering\small
\caption{The robust option (bases on normal scores) against the default. Top: corrupted records, correct model, in-sample, $B=199$: rate of naming any part (all false), 100 data sets per cell. Bottom: the departures of the main simulation at $n=1{,}000$, the same data sets: rate of naming some part and right-action rate.}\label{tab:robustopt}
\begin{tabular}{lrrrr}\toprule corrupted records & \multicolumn{2}{c}{default} & \multicolumn{2}{c}{robust}\\
\cmidrule(lr){2-3}\cmidrule(lr){4-5} & any part & \textsc{link} & any part & \textsc{link}\\ \midrule
$n=500$, clean & 0.07 & 0.04 & 0.05 & 0.02\\
$n=500$, $x\times4$ in 3 records & 0.46 & 0.44 & 0.25 & 0.12\\
$n=500$, $x\times8$ in 3 records & 0.49 & 0.48 & 0.22 & 0.12\\
$n=1{,}000$, clean & 0.02 & 0.01 & 0.03 & 0.00\\
$n=1{,}000$, $x\times4$ in 10 records & 0.65 & 0.63 & 0.42 & 0.22\\
\midrule main departures & \multicolumn{2}{c}{default} & \multicolumn{2}{c}{robust}\\
\cmidrule(lr){2-3}\cmidrule(lr){4-5} & some part & right action & some part & right action\\ \midrule
external, interaction & 1.000 & 1.000 & 1.000 & 1.000\\
external, intercept shift & 1.000 & 0.923 & 1.000 & 0.923\\
external, curved link & 1.000 & 0.883 & 1.000 & 0.867\\
external, none & 0.051 & 0.949 & 0.046 & 0.954\\
external, slope 0.6 & 1.000 & 0.603 & 1.000 & 0.613\\
external, threshold & 0.957 & 0.873 & 0.967 & 0.887\\
external, U-shape & 1.000 & 1.000 & 1.000 & 1.000\\
in-sample, interaction & 1.000 & 1.000 & 1.000 & 1.000\\
in-sample, curved link & 0.997 & 0.893 & 0.997 & 0.910\\
in-sample, none & 0.055 & 0.945 & 0.047 & 0.953\\
in-sample, threshold & 0.833 & 0.783 & 0.790 & 0.720\\
in-sample, U-shape & 1.000 & 1.000 & 1.000 & 1.000\\
\bottomrule\end{tabular}\end{table}

\section{SUPPORT}

The SUPPORT study (Knaus et al., 1995) is public as the SUPPORT2 release of the UCI Machine Learning Repository
(Harrell, 2023); it has 9,105 patients; 8,873 have complete values of
the ten predictors and the outcome (in-hospital death, 2,328 deaths). Predictors: age, mean arterial pressure, heart
rate, respiratory rate, temperature, serum creatinine, sodium, white blood cell count, number of comorbidities and
sex. M2 uses natural splines with three degrees of freedom in the seven physiological variables. Random split: a
random half (seed recorded) for development and the other half for validation. Transported: development on the 4,743
patients in the acute disease classes (ARF/MOSF, coma), validation on the 4,130 in the chronic classes (COPD/CHF/
cirrhosis, cancer). Split-first update-and-retest: the transported validation set is split at random (seed recorded) into
halves A and B of 2,065 patients; the verdict is reached on A, each update is fitted on A, and the updated predictions
are tested on B. An earlier run on a different split, which chose the rung from all 4,130 chronic patients and so used
the test half twice, named \textsc{cov} after five of the six recalibrations (not after the flexible recalibration of M2,
$p=0.048$); it is in the deposit. Neural
network: \texttt{nnet} with five hidden units, weight decay 0.01, entropy fitting, standardized inputs. Timing: on one core, the in-sample procedure for M0 on all 8,873 patients ($B=199$) took 67 seconds and the external
validation of the transported M0 ($n=4{,}130$, $M=999$) 1 second. The projection test of Liu et al.\ (2024), run on the same
8,873 patients in a separate study of ours, took 9 hours 53 minutes. The planning quantities of Corollary~\ref{cor:power} for each external analysis are in Table~\ref{tab:plan}. Web
Figure~\ref{fig:calib} shows the calibration curves of the transported models on half B after each update.

\begin{figure}[ht]
\centering\includegraphics[width=\textwidth]{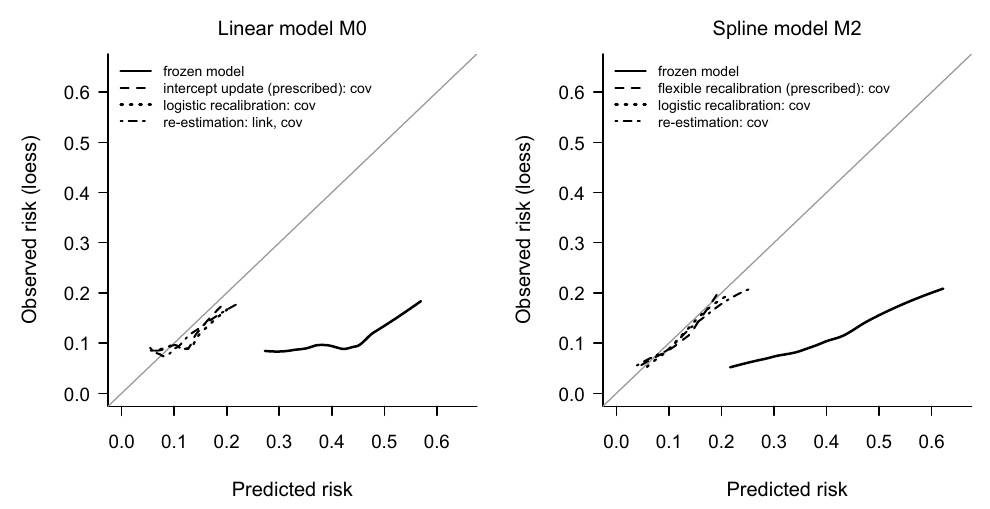}
\caption{Transported SUPPORT models on half B of the chronic patients: calibration curves (loess of the outcome on the
predicted risk) of the frozen model and of each update fitted on half A, with the parts the procedure named on half B.
Grey line: perfect calibration. After every update the curve lies close to the diagonal, yet \textsc{cov} is named.}
\label{fig:calib}
\end{figure}

\begin{table}[ht]\centering\small
\caption{SUPPORT, external analyses: smallest mean miscalibration $\delta$ and logit-scale slope error $\epsilon$ that the single-member tests detect with power .80 at level .05 (Corollary~\ref{cor:power}), computed from the covariates and frozen predictions only, with the observed and mean predicted risk.}\label{tab:plan}
\begin{tabular}{llrrrrr}\toprule setting & model & $n$ & $\delta$ & $\epsilon$ & observed & predicted\\ \midrule
external random & M0 & 4437 & 0.018 & 0.214 & 0.260 & 0.266\\
external random & M2 & 4437 & 0.018 & 0.161 & 0.260 & 0.266\\
external transported & M0 & 4130 & 0.021 & 0.306 & 0.104 & 0.400\\
external transported & M2 & 4130 & 0.021 & 0.226 & 0.104 & 0.377\\
external random & neural network & 4437 & 0.018 & 0.142 & 0.260 & 0.266\\
external transported & neural network & 4130 & 0.021 & 0.197 & 0.104 & 0.376\\
\bottomrule\end{tabular}\end{table}

\section*{References for the appendices}
{\small\begin{itemize}[leftmargin=1.2em,itemsep=1pt,label={}]
\item de Boor, C. (2001). \emph{A Practical Guide to Splines}, revised edition. New York: Springer.
\item Dufour, J.-M. (2006). Monte Carlo tests with nuisance parameters: a general approach to finite-sample inference and nonstandard asymptotics. \emph{Journal of Econometrics} \textbf{133}, 443--477.
\item Escanciano, J. C. (2006). A consistent diagnostic test for regression models using projections. \emph{Econometric Theory} \textbf{22}, 1030--1051.
\item Harrell, F. E., Jr. (2023). SUPPORT2 [Dataset]. UCI Machine Learning Repository, https://archive.ics.uci.edu/dataset/880/support2.
\item Hosmer, D. W., Hosmer, T., le Cessie, S., and Lemeshow, S. (1997). A comparison of goodness-of-fit tests for the logistic regression model. \emph{Statistics in Medicine} \textbf{16}, 965--980.
\item Janssen, K. J. M., Moons, K. G. M., Kalkman, C. J., Grobbee, D. E., and Vergouwe, Y. (2008). Updating methods improved the performance of a clinical prediction model in new patients. \emph{Journal of Clinical Epidemiology} \textbf{61}, 76--86.
\item Knaus, W. A., Harrell, F. E., Jr., Lynn, J., Goldman, L., Phillips, R. S., Connors, A. F., Jr., et al. (1995). The SUPPORT prognostic model: objective estimates of survival for seriously ill hospitalized adults. \emph{Annals of Internal Medicine} \textbf{122}, 191--203.
\item Lehmann, E. L. and Romano, J. P. (2005). \emph{Testing Statistical Hypotheses}, 3rd edition. New York: Springer.
\item Liu, H., Li, X., Chen, F., H\"ardle, W., and Liang, H. (2024). A comprehensive comparison of goodness-of-fit tests for logistic regression models. \emph{Statistics and Computing} \textbf{34}, 175.
\item Liu, Y. and Xie, J. (2020). Cauchy combination test: a powerful test with analytic $p$-value calculation under arbitrary dependency structures. \emph{Journal of the American Statistical Association} \textbf{115}, 393--402.
\item Mammen, E. (1993). Bootstrap and wild bootstrap for high dimensional linear models. \emph{The Annals of Statistics} \textbf{21}, 255--285.
\item Marcus, R., Peritz, E., and Gabriel, K. R. (1976). On closed testing procedures with special reference to ordered analysis of variance. \emph{Biometrika} \textbf{63}, 655--660.
\item Pierce, D. A. (1982). The asymptotic effect of substituting estimators for parameters in certain types of statistics. \emph{The Annals of Statistics} \textbf{10}, 475--478.
\item Riley, R. D., Debray, T. P. A., Collins, G. S., Archer, L., Ensor, J., van Smeden, M., and Snell, K. I. E. (2021). Minimum sample size for external validation of a clinical prediction model with a binary outcome. \emph{Statistics in Medicine} \textbf{40}, 4230--4251.
\item Steyerberg, E. W., Borsboom, G. J. J. M., van Houwelingen, H. C., Eijkemans, M. J. C., and Habbema, J. D. F. (2004). Validation and updating of predictive logistic regression models: a study on sample size and shrinkage. \emph{Statistics in Medicine} \textbf{23}, 2567--2586.
\item Pregibon, D. (1980). Goodness of link tests for generalized linear models. \emph{Applied Statistics} \textbf{29}, 15--24.
\item Stukel, T. A. (1988). Generalized logistic models. \emph{Journal of the American Statistical Association} \textbf{83}, 426--431.
\item Stute, W., Gonz\'alez Manteiga, W., and Presedo Quindimil, M. (1998). Bootstrap approximations in model checks for regression. \emph{Journal of the American Statistical Association} \textbf{93}, 141--149.
\item Stute, W. and Zhu, L.-X. (2002). Model checks for generalized linear models. \emph{Scandinavian Journal of Statistics} \textbf{29}, 535--545.
\item Sur, P. and Cand\`es, E. J. (2019). A modern maximum-likelihood theory for high-dimensional logistic regression. \emph{Proceedings of the National Academy of Sciences} \textbf{116}, 14516--14525.
\item Van Calster, B., Nieboer, D., Vergouwe, Y., De Cock, B., Pencina, M. J., and Steyerberg, E. W. (2016). A calibration hierarchy for risk models was defined: from utopia to empirical data. \emph{Journal of Clinical Epidemiology} \textbf{74}, 167--176.
\item van der Vaart, A. W. (1998). \emph{Asymptotic Statistics}. Cambridge: Cambridge University Press.
\item van der Vaart, A. W. and Wellner, J. A. (1996). \emph{Weak Convergence and Empirical Processes}. New York: Springer.
\item Wald, A. (1943). Tests of statistical hypotheses concerning several parameters when the number of observations is large. \emph{Transactions of the American Mathematical Society} \textbf{54}, 426--482.
\item White, H. (1982). Maximum likelihood estimation of misspecified models. \emph{Econometrica} \textbf{50}, 1--25.
\item Wu, C. F. J. (1986). Jackknife, bootstrap and other resampling methods in regression analysis. \emph{The Annals of Statistics} \textbf{14}, 1261--1295.
\end{itemize}}

\end{document}